
\documentclass[conference,10pt,final]{IEEEtran}
%


%

%
\usepackage{cite}

%
\ifCLASSINFOpdf
    \usepackage[pdftex]{graphicx}
\else
\fi
%
%

%
\usepackage{amsmath}
%

\usepackage{amssymb}
\usepackage{amsthm}
\usepackage{amsfonts}
\usepackage{mathtools}

%

%
\usepackage{array}


 \usepackage[caption=false,font=footnotesize]{subfig}
\hyphenation{op-tical net-works semi-conduc-tor}

%
\theoremstyle{plain}
\newtheorem{theorem}{Theorem}
\newtheorem{lemma}[theorem]{Lemma}
\newtheorem{corollary}[theorem]{Corollary}
\newtheorem{proposition}[theorem]{Proposition}
\theoremstyle{definition}
\newtheorem{definition}{Definition}
\newtheorem{example}{Example}

\theoremstyle{remark}
\newtheorem*{remark}{Remark}

\theoremstyle{remark}
\newtheorem*{notation}{Notation}

\newcommand{\exec}{\rho}

\DeclarePairedDelimiter{\abs}{\lvert}{\rvert}
\newcommand{\integers}{\mathbb{Z}}
\newcommand{\Rats}{\mathbb{Q}}
\newcommand{\Reals}{\mathbb{R}}
\newcommand{\Nats}{\mathbb{N}}

\newcommand{\Lap}[1]{\mathsf{Lap}{(#1)}}

\newcommand{\euler}{e}
\newcommand{\eulerv}[1]{\ensuremath{\euler^{#1}}}

\newcommand{\Prob}{\mathsf{Prob}}

\newcommand{\st}  {\mathbin{|}}
\newcommand{\set}[1]{\{#1\}}

\newcommand{\card}[1]{\mathbin{|}#1 \mathbin{|}}
\newcommand{\true}{\mathsf{true}}
\newcommand{\false}{\mathsf{false}}
\newcommand{\cA}{\mathcal{A}}

\newcommand{\rv}{\mathsf{r}}

\newcommand{\prbfn}[1] {\Prob[#1]}



\newcommand{\tuple}[1] {\langle #1 \rangle}
\newcommand{\s}[1] {\mathsf{#1}}

\newcommand{\pto} {\hookrightarrow}
\newcommand{\rmv}[1] {}
\newcommand{\dom} {\mathsf{dom}}

\newcommand{\sgn}{\s{sgn}}

\newcommand{\diptext} {DiP}
\newcommand{\dipautos} {{\diptext} automaton}
\newcommand{\dipautop} {{\diptext} automata}
\newcommand{\dipa} {{\diptext}A}
\newcommand{\qinit} {q_{\mathsf{init}}}
\newcommand{\inalph} {\Sigma}
\newcommand{\outalph} {\Gamma}
\newcommand{\parf} {P}
\newcommand{\vars} {X}
\newcommand{\svar} {\mathsf{insample}}
\newcommand{\rvar} {\mathsf{x}}
\newcommand{\states}{Q}
\newcommand{\instates}{Q_{\mathsf{in}}}
\newcommand{\epsstates}{Q_{\mathsf{non}}}

\newcommand{\cnds} {C}
\newcommand{\getest} {\svar \geq \rvar}
\newcommand{\lttest} {\svar < \rvar}
\newcommand{\defaut} {(\states, \inalph, \outalph, \qinit, \vars, \parf, \transf)}
\newcommand{\emptystr}{\tau}

\newcommand{\pathprob}[1] {\mathsf{Pr}[#1]}
\newcommand{\ith}[2][i]{#2[#1]}
\newcommand{\Detcond} {Determinism}
\newcommand{\detcond} {determinism}
\newcommand{\Outcond} {Output Distinction}
\newcommand{\outcond} {output distinction}
\newcommand{\Initcond} {Initialization}
\newcommand{\initcond} {initialization}
\newcommand{\Noninpcond} {Non-input transition}

\newcommand{\trns}[1]  {\xrightarrow{#1}}
\newcommand{\defexec} {q_0 \trns{a_0,o_0} q_1 \trns{a_1,o_1} q_2 \cdots q_{n-1} \trns{a_{n-1},o_{n-1}} q_n}
\newcommand{\len}[1] {\card{#1}}
\newcommand{\fstst} {\mathsf{first}}
\newcommand{\lstst} {\mathsf{last}}
\newcommand{\tl} {\mathsf{tail}}
\newcommand{\inseq} {\mathsf{inseq}}
\newcommand{\outseq} {\mathsf{outseq}}
\newcommand{\trname} {\mathsf{trans}}
\newcommand{\stname} {\mathsf{state}}
\newcommand{\grdname} {\mathsf{guard}}

\newcommand{\svtauto} {\cA_{\s{SVT}}}
\newcommand{\sortauto} {\cA_{\s{sort}}}
\newcommand{\svtpauto} {\cA_{\s{SVT*}}}
\newcommand{\numspauto} {\cA_{\s{NumSp}}}


\newcommand{\transf}{\delta}

\newcommand{\lcycle}{\ensuremath{\mathsf{L}}-cycle}
\newcommand{\gcycle}{\ensuremath{\mathsf{G}}-cycle}
\newcommand{\alpath}{\ensuremath{\mathsf{AL}}-path}
\newcommand{\agpath}{\ensuremath{\mathsf{AG}}-path}
\newcommand{\critical}{{leaking}}

\newcommand{\criticalpath}{{leaking path}}
\newcommand{\criticalcycle}{{leaking cycle}}
\newcommand{\criticalpair}{{leaking pair}}

\newcommand{\Criticalcycle}{{Leaking cycle}}
\newcommand{\Criticalpair}{{Leaking pair}}

\newcommand{\criticaltransition}{{critical transition}}
\newcommand{\criticalintransition}{{critical input transition}}
\newcommand{\criticalnotransition}{{critical non-input transition}}

\newcommand{\violatingc}{{disclosing cycle}}
\newcommand{\violatingp}{{privacy violating path}}
\newcommand{\Violatingc}{{Disclosing cycle}}
\newcommand{\Violatingp}{{Privacy violating path}}

\newcommand{\execl}[1] {q_0 \trns{a_0,o_0} q_1 \trns{a_1,o_1}  q_2 \cdots q_{#1-1} \trns{a_{#1-1},o_{#1-1}} q_{#1}}

\newcommand{\execlb}[1] {q_0 \trns{b_0,o_0} q_1 \trns{b_1,o_1}  q_2 \cdots q_{#1-1} \trns{b_{#1-1},o_{#1-1}} q_{#1}}

\newcommand{\abst}{\ensuremath{\mathsf{abstract}}}
\newcommand{\execsf}[2] {q_{#1} \trns{a_{#1},o_{#1}} q_{#1+1} \trns{a_{#1+1},o_{#1+1}} q_{#1+2} \cdots q_{#2-1} \trns{a_{#2-1},o_{#2-1}} q_{#2}}

\newcommand{\eabsexecl}[1] {q_0 \sigma_0 q_1 \sigma_1  \cdots q_{#1-1}\sigma_{#1-1} q_{#1}}

\newcommand{\eabsexecsf}[2] {q_{#1} \sigma_{#1} q_{#1+1} \sigma_{#1+1}  \cdots q_{#2-1} \sigma_{#2-1}q_{#2}}

\newcommand{\absexec}{\eta}
\newcommand{\inalphaseq}{\alpha}

\newcommand{\inbetaseq}{\beta}
\newcommand{\inalpha}{a}
\newcommand{\inbeta}{b}
\newcommand{\outgammaseq}{\gamma}

\newcommand{\weight}[1]{\mathsf{wt}(#1)}
\newcommand{\cost}[1]{\mathsf{cost}(#1)}

\newcommand{\cG}{\mathcal{G}}


\makeatletter
\newcommand{\removelatexerror}{\let\@latex@error\@gobble}
\makeatother




\newcommand{\blue}[1]{{#1}}

\usepackage{tikz}
\usetikzlibrary{shapes}
\usetikzlibrary{shapes.multipart}
\usetikzlibrary{automata}
\usetikzlibrary{positioning}
\usetikzlibrary{arrows}
\def\mystrut{\vrule height .2cm depth .2cm width 0pt}
\tikzset{
         node distance=3.3cm,
         mynonstate/.style={
           rectangle split,
           rounded corners=0.4cm,
           rectangle split parts=2,
           draw,
           semithick,
           minimum size=1.4cm
         },
         myinstate/.style={
           circle split,
           draw,
           semithick,
         },
         myedgelabel/.style={
           rectangle split,
           rectangle split parts=2
         },
         initial text={},
         every edge/.style={
           draw,
           ->,>=stealth',
           auto,
           semithick
         }
}
\usepackage{xcolor}
\usepackage{algorithm2e}
\usepackage{breqn}

\newcommand*{\AppendixTrue}{}

\begin{document}
%
\title{On Linear Time Decidability of Differential Privacy for Programs with Unbounded Inputs}

\author{\IEEEauthorblockN{Rohit Chadha}
\IEEEauthorblockA{Email: chadhar@missouri.edu}
\and
\IEEEauthorblockN{A. Prasad Sistla}
\IEEEauthorblockA{Email: sistla@uic.edu}
\and
\IEEEauthorblockN{Mahesh Viswanathan}
\IEEEauthorblockA{
Email: vmahesh@uiuc.edu}}


%


\IEEEoverridecommandlockouts
\IEEEpubid{\makebox[\columnwidth]{978-1-6654-4895-6/21/\$31.00~
\copyright2021 IEEE \hfill} \hspace{\columnsep}\makebox[\columnwidth]{ }}

\maketitle
 \IEEEpeerreviewmaketitle

\begin{abstract}
We introduce an automata model for describing interesting classes of differential privacy mechanisms/algorithms that include known mechanisms from the literature. These automata can model algorithms whose inputs can be an unbounded sequence of real-valued query answers. We consider the problem of checking whether there exists a constant $d$ such that the algorithm described by these automata are $d\epsilon$-differentially private for all positive values of the privacy budget parameter $\epsilon$. We show that this problem can be decided in time linear in the automaton's size by identifying a necessary and sufficient condition on the underlying graph of the automaton. This paper's results are the first decidability results known for algorithms with an unbounded number of query answers taking values from the set of reals.
\end{abstract}


%

\section{Introduction}
Differential privacy~\cite{dmns06,DR14} is a technique developed to preserve individuals' privacy while performing statistical computations on databases containing private information. The differential privacy framework trades accuracy for privacy. In the framework, a differential privacy mechanism mediates data exchange between the database and data analyst. When the mechanism returns the answer to an analyst's query, it introduces \emph{random} noise in the query result before forwarding it to the analyst. The mechanism is parameterized by a \emph{privacy budget parameter $\epsilon$}, and the noise added depends on this parameter. The privacy guarantees are also stated in terms of $\epsilon$ --- a mechanism is said to be \emph{$d\epsilon$-differentially private} if the probability of observing a given output on two adjacent databases differ only up-to a factor of $\euler^{d\epsilon}$, where $d>0$ is a constant and $\euler$ is the Euler's constant.  Setting $\epsilon$ allows the database manager to choose the trade-off between accuracy and privacy. Intuitively, smaller values of $\epsilon$ imply improved privacy guarantees but at the cost of increased inaccuracy in the observed output.

Designing correct differential privacy mechanisms is subtle and error-prone, and even relatively minor tweaks to correct mechanisms can lead to loss of privacy as evidenced by the Sparse Vector Technique (SVT)~\cite{DNRRV09,lyu2016understanding}. This difficulty has generated interest in formally verifying the privacy claims of differential privacy mechanisms. Verifying differential privacy is challenging for several reasons. First, the behavior of a privacy mechanism changes with $\epsilon$ as the random noise employed by the mechanism is parameterized by $\epsilon$. The privacy guarantees are usually required to hold for all $\epsilon>0$ to allow a manager to choose the trade-off between privacy and accuracy. Thus, the verification problem is inherently parametric. Secondly, the random noise employed by a mechanism typically samples from the continuous (or discrete) Laplace distribution.  Thus, verification involves the analysis of an infinite-state stochastic model, even when inputs are constrained to come from a finite set. Finally, the mechanisms may need to process a potentially unbounded sequence of query answers, each of which may take any real value. Verification of differential privacy is known to be undecidable even when the mechanisms operate on a bounded sequence of query answers, each of which takes value from a finite domain~\cite{BartheCJS020}.

Three major directions of research seek to circumvent this challenge. The first direction aims to develop automated and semi-automated techniques to construct privacy proofs~\cite{RP10,GHHNP13,BKOZ13,BGGHS16,BFGGHS16,ZK17,AGHK18,AH18,WDWKZ19,CheckDP}. These techniques are not guaranteed to be complete and may fail to construct a proof even if the mechanism is differentially private. The second line of investigation develops automated techniques to search for privacy violations~\cite{DingWWZK18,BichselGDTV18} and searches amongst a bounded sequence of inputs. The third direction explores decision procedures for verifying differential privacy~\cite{BartheCJS020}. To circumvent the undecidability result, \cite{BartheCJS020} considers mechanisms that sample from Laplacians only a bounded number of times and process (only) a bounded sequence of query answers, each of which is finite valued. Outputs of these mechanisms are also constrained to take values from a finite domain. The decision procedure developed in~\cite{BartheCJS020} converts the problem of checking differential privacy to checking the validity of first-order formulas in the theory of Reals with the exponential function. While the decidability of validity for the theory of Reals with exponential function is a longstanding open problem, formulas obtained in~\cite{BartheCJS020} fall into the decidable fragment identified by~\cite{mccallum2012deciding}. Unfortunately since it relies on the decision procedure for real arithmetic, the verification algorithm has very high complexity.

\paragraph*{\textbf{Contributions}} In this paper, we present the first decision procedure for checking differential privacy for mechanisms that process an \emph{unbounded} sequence of inputs, each of which may be real valued. Further, the mechanisms may also output real values in addition to values from a finite domain. In order to obtain decidability, we make two choices. First, we restrict mechanisms to those that can be modeled by a particular automata class, which we call {\dipautop}. Several mechanisms proposed in the literature, such as SVT and its variants~\cite{DNRRV09,lyu2016understanding} and NumericSparse~\cite{DR14} can be modeled by {\dipautop}. Our decision procedure is sound and complete for mechanisms modeled by such automata, and remarkably, runs in time linear in the size of the automaton. Second, we consider the following verification problem. Instead of asking whether a mechanism is $d \epsilon$ differentially private for a given constant $d>0$ and for all $\epsilon>0$, we ask whether there exists a constant $d$ such that the mechanism is $d \epsilon$ differentially private for all $\epsilon>0$. While the verification problem considered in this paper may appear to be less useful, note that a database manager can choose a lower $\epsilon$ to account for a higher $d$ if the mechanism turns out to be differentially private. The relationship between the computation difficulty of checking $d\epsilon$-differential privacy for a given $d$ and checking if there is some $d$ such that a mechanism is $d\epsilon$-differentially private is unclear. For example, the decidability results in~\cite{BartheCJS020} do not extend to the verification problem we consider in this paper.

We briefly describe the {\dipautop} model introduced in this paper to model differential privacy mechanisms. A {\dipautos} (\dipa) $\cA$ takes arbitrarily long sequences of real-valued query results. Control states of $\cA$ are classified into input and non-input states. The automaton also has a single variable $\rvar$ in which it can store a real value. When the automaton is in an input state, it reads an input value and generates a value, $\svar$, using a Laplace distribution, and compares $\svar$ with the stored value of $\rvar.$ It changes state depending on the result of comparison and outputs a value during the state transition. During the transition, it may also store the sampled value $\svar$ in $\rvar.$ When the automaton is in a non-input state, it does not read an input, but generates $\svar$ using constant parameters and  resets $\rvar$ by storing $\svar$ in $\rvar$ and transitions to a new control state. The state transition's output may be either a discrete value from a finite domain or a real value. The real value could be sampled value $\svar$, or freshly sampled value $\svar'$. The mean and scaling factor of the Laplace distributions used for generating the sampled values $\svar$ and $\svar'$ are determined by \blue{the budget parameter $\epsilon$ and} by constants that depend only on the state. Additionally, for input states, the input value is added to the mean. 


Surprisingly, we show that the problem of checking whether a privacy mechanism, specified by a {\dipa} $\cA$, is $d\epsilon$-differentially private, for some constant $d>0$ and all $\epsilon>0$, can be reduced to checking some syntactic graph-theoretic conditions on the finite graph \lq\lq underlying\rq\rq $\cA.$ These syntactic conditions are stated as the absence of certain kinds of cycles and paths (See Definition~\ref{def:well-formed} on Page~\pageref{def:well-formed}). These conditions can be checked in time linear in the graph's size by constructing the graph of strongly connected components of the \lq\lq underlying\rq\rq control flow graph. These conditions are independent of the scaling factors and means associated with sampling, and hence, differential privacy does not need to be re-proved if these parameters change.

Furthermore, if the privacy mechanism under consideration is differentially private, we can efficiently compute a constant $d$ using the graph of strongly connected components, such that the mechanism is $d\epsilon$-differentially private for all values of $\epsilon>0.$ The computed $d$ depends on the scaling parameters of states in $\cA$ used when sampling. The computation of the constant $d$ is once again linear, assuming constant time addition and comparison of numbers. We also observe that $d$ computed by our algorithm for SVT and NumericSparse match those known in literature.  

The proof that the given syntactic graph conditions are necessary and sufficient for differential privacy is highly non-trivial. To the best of our knowledge, these results are the first results giving efficient algorithms for checking differential privacy of interesting classes of mechanisms that process input query sequences of unbounded length, where the query values are real-valued, and the outputs may take real values.  
   
\paragraph*{\textbf{Organization}} The rest of the paper is organized as follows. Section~\ref{sec:prelims} introduces basic notation and the setup of differential privacy. Our model of {\dipautop} is introduced in Section~\ref{sec:dipauto}. The main results characterizing when a {\dipautop} is differentially private are presented in Section~\ref{sec:decidability}. \ifdefined\AppendixTrue
Because of their length, proofs of our main theorem are deferred to the Appendix. 
\fi
Related work is discussed in Section~\ref{sec:related}. Finally we present our conclusions (Section~\ref{sec:conclusions}).
\blue{\ifdefined\AppendixTrue
An extended abstract of this paper appeared in the 36th Annual IEEE Symposium on Logic in Computer Science (LICS 2021)~\cite{ChadhaSV21}. This version consists of proofs omitted in~\cite{ChadhaSV21}.
\else
The omitted proofs can be located in~\cite{ChadhaSV21b} FIX THE REFERENCE.
\fi}


\section{Preliminaries}
\label{sec:prelims}

\paragraph*{Sequences}  For a set $\inalph$, $\inalph^*$ denotes the set of all finite sequences/strings over $\inalph$. We shall use $\emptystr$ to denote the empty sequence/string over $\inalph$. For two sequences/strings $\rho,\sigma \in \inalph^*$, we use their juxtaposition $\rho\sigma$ to indicate the sequence/string obtained by concatenating them in order. Consider $\sigma = a_0a_1\cdots a_{n-1} \in \inalph^*$ (where $a_i \in \inalph$). We use $\len{\sigma}$ to denote it's length $n$ and use $\ith{\sigma}$ to denote its $i$th symbol $a_i$.

\paragraph*{Sets and functions} Let $\Nats, \integers,\Rats,\Rats^{\geq 0}, \Reals, \Reals^{>0}$ denote the set of natural numbers, integers, rational numbers, non-negative rationals, real numbers and positive real numbers, respectively. In addition, $\Reals_\infty$ will denote the set $\Reals \cup \set{-\infty,\infty}$, where $-\infty$ is the smallest and $\infty$ is the largest element in $\Reals_\infty$. For a real number $x \in \Reals$, $\card{x}$ denotes its absolute value, and $\sgn(x)$ denotes the \emph{sign} function, i.e., $\sgn(x) = 0$ if $x=0$, $\sgn(x) = -1$ if $x < 0$ and $\sgn(x) = 1$ if $x > 0.$
For any partial function $f\::A \pto B$, where $A,B$ are some sets, we let $\dom(f)$ be the set of $x\in A$ such that $f(x)$ is defined.


\paragraph*{Laplace Distribution}
Differential privacy mechanisms often add noise by sampling values from the \emph{Laplace distribution}. The distribution, denoted $\Lap{k, \mu}$, is parameterized by two values --- $k \geq 0$ which called the scaling parameter, and $\mu$ which is the mean. The probability density function of $\Lap{k, \mu}$, denoted $f_{k,\mu}$, is given by
\[
f_{k,\mu}(x) = \frac{k}{2}\eulerv{-k\card{x-\mu}}.
\]
Therefore, for a random variable $X \sim \Lap{k,\mu}$ and $c \in \Reals$, we have
\[
\prbfn{X \leq c} = \frac{1}{2}\left[1 + \sgn(c-\mu)(1 - \eulerv{-k\card{c-\mu}})\right].
\]
Finally observe that for any $\mu_1,\mu_2 \geq 0$, $\Lap{k,\mu_1+\mu_2}$ and $\Lap{k,\mu_1}+\mu_2$ are identically distributed.


\paragraph*{Differential Privacy} 
\blue{Differential privacy~\cite{dmns06} is a framework that enables statistical analysis of databases containing sensitive, personal information of individuals while ensuring that individuals in the database are not adversely affected by the results of the analysis. In the differential privacy framework, 
a randomized algorithm, $M$, called the \emph{differential privacy mechanism} 
mediates the interaction between a (possibly dishonest) data analyst asking queries and a database $D$ responding with answers. 
Queries are  deterministic functions and typically include aggregate questions about the data, like the mean, median, standard deviation of fields in the database. In response to such a sequence of queries, the differential privacy mechanism $M$ will respond with a series of answers, whose value is computed using the actual answers and random sampling, resulting in \lq \lq noisy \rq \rq answers. 
Thus, the differential privacy mechanism provides privacy at the cost of accuracy. Typically, the differential privacy mechanism's  noisy  response depends on a \emph{privacy budget} $\epsilon > 0$.} 

\blue{The crucial definition of differential privacy captures the privacy guarantees of individuals in the database $D$. For an individual $i$ in $D$, let $D \setminus \set{i}$ denote the database where $i$'s information has been removed
A secure mechanism $M$ ensures that for any individual $i$ in $D$, and any sequence of possible outputs $\overline{o}$, the probability that $M$ outputs $\overline{o}$ on a sequence of queries is approximately the same whether the interaction is with the database $D$ or with $D \setminus \set{i}.$}
To capture this definition formally, we need to characterize the inputs on which $M$ is required to behave similarly.
Inputs to a differential privacy mechanism could be seen as answers to a sequence of queries asked by the data analyst. If queries are aggregate queries, then answers to $q$ on $D$ and $D \setminus \set{i}$, for individual $i$, are likely to be away by at most $1$. This intuition leads to an 
\blue{often-}used definition of \emph{adjacency}\blue{, such as in SVT~\cite{DNRRV09,lyu2016understanding,DR14} and NumericSparse~\cite{DR14},} that characterizes pairs of inputs on which the differential privacy mechanism $M$ is expected to behave similarly. 
\begin{definition}
\label{def:adjacency}
Two sequences $\rho,\sigma \in \Reals^*$ are said to be \emph{adjacent} if $\len{\rho} = \len{\sigma}$ and for \blue{each} $i \leq \len{\rho}$, $\card{\ith{\rho} - \ith{\sigma}} \leq 1$.
\end{definition}

Having defined adjacency between inputs, we are ready to formally define the notion of privacy. In response, to a sequence of inputs, a differential privacy mechanism produces a sequence of outputs from the set (say) $\outalph$. Since a differential privacy mechanism $M$ is a randomized algorithm, it will induce a probability distribution on $\outalph^*$.

\begin{definition}[$\epsilon$-differential privacy]
\label{def:diff-priv}
A randomized algorithm $M$ that gets as input a sequence of real numbers and produces an output in $\outalph^*$ is said to be \emph{$\epsilon$-differentially private} if for all measurable sets $S \subseteq \outalph^*$ and adjacent $\rho,\sigma \in \Reals^*$ (Definition~\ref{def:adjacency}),
\[
\prbfn{M(\rho) \in S} \leq \eulerv{\epsilon}\, \prbfn{M(\sigma) \in S}.
\]
In the above equation, $\euler$ is the Euler constant.
\end{definition}

\begin{example}
\label{ex:diff-privacy}
Let us look at a couple of classical differential privacy mechanisms from the literature. These will serve as running examples to motivate our definitions and highlight our results.

\RestyleAlgo{boxed} 
\begin{algorithm}
\DontPrintSemicolon
\SetAlgoLined

\KwIn{$q[1:N]$}
\KwOut{$out[1:N]$}
\;
$\rv_T \gets \Lap{ \frac{\epsilon}{2} , T}$\;
\For{$i\gets 1$ \KwTo $N$}
{
    $\rv\gets \Lap{\frac \epsilon {4} , q[i]}$\;
    \uIf{$\rv \geq \rv_T$}{
      $out[i] \gets \top$\;
      exit
      }
    \Else{
      $out[i] \gets \bot$}
}
\caption{SVT  algorithm}
\label{fig:SVT}
\end{algorithm}

Sparse Vector Technique (SVT)~\cite{DNRRV09,lyu2016understanding} is an algorithm to answer the following question in a privacy preserving manner: Given a sequence of query answers $q[1:N]$ and threshold $T$, find the first index $i$ such that $q[i] \geq T$. The algorithm is shown as Algorithm~\ref{fig:SVT}. It starts by sampling a value from the Laplace distribution with mean $T$, and stores this ``noisy threshold'' in the variable $\rv_T$. After that the algorithm reads query answer $q[i]$, perturbs it by sampling from the Laplace distribution with mean $q[i]$ to get $\rv$, and compares this ``noisy query'' $\rv$ with the ``noisy threshold'' $\rv_T$. If $\rv < \rv_T$ then the algorithm outputs $\bot$ and continues by reading the next query. On the other hand, if $\rv \geq \rv_T$ then the algorithm outputs $\top$ and stops. This algorithm is known to be $\epsilon$-differential private. It is worth observing that SVT is parameterized by $\epsilon$; each value of $\epsilon$ gives us a new algorithm which is $\epsilon$-differentially private for that particular value of $\epsilon$.

\RestyleAlgo{boxed} 
\begin{algorithm}
\DontPrintSemicolon
\SetAlgoLined

\KwIn{$q[1:N]$}
\KwOut{$out[1:N]$}
\;
$\rv_T \gets \Lap{ \frac{4\epsilon}{9} , T}$\;
\For{$i\gets 1$ \KwTo $N$}
{
    $\rv\gets \Lap{\frac{2\epsilon}{9} , q[i]}$\;
    \uIf{$\rv \geq \rv_T$}{
      $out[i] \gets \Lap{\frac{\epsilon}{9}, q[i]}$\;
      exit
      }
    \Else{
      $out[i] \gets \bot$}
}
\caption{Numeric Sparse  algorithm}
\label{fig:NumSp}
\end{algorithm}

Consider Algorithm~\ref{fig:NumSp} which shows a differential privacy mechanism called Numeric Sparse~\cite{DR14}. The problem solved by this algorithm is very similar to the one solved by SVT (Algorithm~\ref{fig:SVT}) --- given a sequence of query answers $q[1:N]$ and threshold $T$, find the first index $i$ such that $q[i] \geq T$ \emph{and output $q[i]$}. Algorithm~\ref{fig:NumSp} is similar to Algorithm~\ref{fig:SVT}. The only difference is that instead of outputting $\top$ when $\rv \geq \rv_T$, it outputs a perturbed value of $q[i]$. This algorithm is also known to be $\epsilon$-differentially private for each possible assignment of value to $\epsilon$.
\end{example}


\section{\dipautop}
\label{sec:dipauto}

{\diptext} ({\bf \textsf{Di}}fferentially {\bf \textsf{P}}rivate) automata ({\dipa} for short) are a simple model to describe some differential privacy mechanisms known in the literature. Some of the features we hope to capture are those highlighted by Algorithms~\ref{fig:SVT} and~\ref{fig:NumSp}. Recall that the input to a differential privacy mechanism is a sequence of real numbers that correspond to answers to queries. The differential privacy mechanism is a randomized algorithm that processes this input, samples values from distributions like Laplace, and produces a sequence of values as output. These outputs could include real numbers (Algorithm~\ref{fig:NumSp}). Further, as observed in Example~\ref{ex:diff-privacy}, the behavior of the mechanism depends on the privacy budget $\epsilon$. {\dipautop} are a formal model that have these features.

\subsection{Syntax}
\blue{A {\dipa} is a \emph{parametric} automaton with finitely many control states and three real-valued variables $\svar,\svar'$ and $\rvar$. While the variables $\svar$ and $\svar'$ are freshly sampled in each step, the variable $\rvar$ can store real values to be used in later steps.}
The value of the parameter $\epsilon$ (the privacy budget) influences the distribution from which reals values are sampled during an execution. The input to such an automaton is a finite sequence of real numbers. In each step the automaton does the following.
\begin{enumerate}
\item It samples two values, called $\svar$ and $\svar'$, drawn from the distributions $\Lap{d\epsilon, \mu}$ and $\Lap{d'\epsilon,\mu'}$, respectively. The scaling factors $d,d'$ and means $\mu, \mu'$ of these distributions depend on the current state.
\item Depending on the current state, the automaton will either read a real number from the input, or not read anything from the input. If an input value $a$ is read, then $\svar$ and $\svar'$ are updated by adding $a$ to them.
\item The transition results in changing the control state and outputting a value. The value output could either be a symbol from a finite set (like $\bot/\top$ in Algorithm~\ref{fig:SVT}) or one of the two real numbers $\svar$ and $\svar'$ that are sampled in this step (like in Algorithm~\ref{fig:NumSp}). If an input value is read then the transition could be guarded by the result of comparing the sampled value $\svar$ and the stored value $\rvar$. It is possible that for certain values of $\rvar$ and $\svar$, no transition is enabled from the current state. In such a case, the computation ends.
\item Finally, the automaton may choose to store the sampled value $\svar$ in $\rvar$.
\end{enumerate}
The above intuition is captured by the formal definition of {\dipa} below and its semantics described later in this section.

\begin{definition}[{\dipa}]
\label{def:dipa}
Let $\cnds$ be the set of \emph{guard conditions} $\set{\true,\getest,\lttest}$. A \emph{\dipautos} $\cA = \defaut$ where
\begin{itemize}
\item $\states$ is a finite set of states partitioned into two sets: the set of input states $\instates$ and the set of non-input states $\epsstates$,
\item $\inalph = \Reals$ is the input alphabet,
\item $\outalph$ is a finite output alphabet,
\item $\qinit \in \states$ is the initial state,
\item $\vars = \set{\rvar,\svar,\svar'}$ is the set of variables,
\item $\parf : \states \to \Rats^{\geq 0} \times \Rats \times \Rats^{\geq 0} \times \Rats$ is the parameter function that assigns to each state a 4-tuple $(d,\mu,d',\mu')$, where $\svar$ is sampled from $\Lap{d\epsilon,\mu}$ and $\svar'$ is sampled from $\Lap{d'\epsilon,\mu'}$,
\item and $\transf: (\states \times \cnds) \pto (\states \times (\outalph \cup \set{\svar,\svar'}) \times \set{\true,\false})$ is the transition (partial) function that given a current state and result of comparing $\rvar$ with $\svar$, determines the next state, the output, and whether $\rvar$ should be updated to store $\svar$. The output could either be a symbol from $\outalph$ or the values $\svar$ and $\svar'$ that were sampled.
\end{itemize}
The transition function $\transf$ of a {\dipa} will satisfy the following four conditions.

\vspace*{0.05in}
\noindent
{\bf \Detcond:} For any state $q \in \states$, if $\transf(q,\true)$ is defined then $\transf(q,\getest)$ and $\transf(q,\lttest)$ are undefined.

\vspace*{0.05in}
\noindent
{\bf \Outcond:} For any state $q \in \states$, if $\transf(q,\getest)$ is defined to be $(q_1,o_1,b_1)$ and $\transf(q,\lttest)$ is defined to be $(q_2,o_2,b_2)$ then $o_1 \neq o_2$, i.e., distinct transitions from a state have different outputs. Further at least one out of $o_1$ and $o_2$ belongs to $\outalph$, i.e., both transitions cannot output real values.

\vspace*{0.05in}
\noindent
{\bf \Initcond:} The initial state $\qinit$ has only one outgoing transition of the form $\transf(\qinit,\true) = (q,o,\true)$ where $q$ is a state and $o$ is an output symbol. In other words, the guard of the first transition is always $\true$ and the first value sampled is stored in $\rvar$.

\vspace*{0.05in}
\noindent
{\bf \Noninpcond:} From any $q \in \epsstates$, if $\transf(q,c)$ is defined, then $c = \true$; that is, there is at most one transition from a non-input state which is always enabled.
\end{definition}

It is useful to classify transitions of a {\dipa} into different types. Consider a transition $\transf(q,c) = (q',o,b)$. If $q \in \instates$ then it is an \emph{input transition} and if $q \in \epsstates$ then it is a \emph{non-input transition}. If $b = \true$ then the transition will set $\rvar = \svar$, and hence it is called an \emph{assignment transition}. On the other hand, if $b = \false$, the transition will be said to be a \emph{non-assignment} transition. A \emph{pure assignment} transition is an assignment transition with $c = \true$. The {\initcond} condition says that the (only) transition out of the initial state of a {\dipa} is a pure assignment transition.

\begin{example}
\label{ex:autos}
The differential privacy mechanisms in Example~\ref{ex:diff-privacy} can be modeled as {\dipautop}. These are shown in Fig.~\ref{fig:svt-auto} and~\ref{fig:numsp-auto}. When drawing {\dipa}s in this paper, we will follow these conventions. Input states will be represented as circles, while non-input states with be shown as rectangles. The name of each state is written above the line, while the scaling factor $d$ and mean $\mu$ of the distribution used to sample $\svar$ is written below the line. The parameters $d'$ and $\mu'$ for sampling $\svar'$ are not shown in the figures, but are mentioned in the caption and text when they are important; they are relevant only when $\svar'$ is output on a transition. Edges will be labeled with the guard of the transition, followed by the output, and a Boolean to indicate whether the transition is an assignment transition. 

\begin{figure}
\begin{center}
\begin{tikzpicture}
\footnotesize
\node[mynonstate, initial] (q0) {\mystrut $q_0$ \nodepart{two} \mystrut $\frac{1}{2},\ 0$};
\node[myinstate, right of=q0] (q1) {$q_1$ \nodepart{lower} $\frac{1}{4},\ 0$};
\node[myinstate, right of=q1] (q2) {$q_2$ \nodepart{lower} $\frac{1}{4},\ 0$};
\draw (q0) edge node[myedgelabel] {$\true$ \nodepart{two} $\bot, \true$} (q1);
\draw (q1) edge[loop above] node[myedgelabel] {$\lttest$ \nodepart{two} $\bot, \false$} (q1)
                 edge node[myedgelabel] {$\getest$ \nodepart{two} $\top, \false$} (q2);
\end{tikzpicture}
\end{center}
\caption{{\dipa} $\svtauto$ modeling Algorithm~\ref{fig:SVT}. Threshold for the algorithm is $0$ (mean for sampling $\svar$ in state $q_0$).}
\label{fig:svt-auto}
\end{figure}

The SVT algorithm (Algorithm~\ref{fig:SVT}) can be modeled as a {\dipa} $\svtauto$ shown in Fig.~\ref{fig:svt-auto}. Since $\svtauto$ does not output $\svar'$ in any transition, the parameters used for sampling $\svar'$ are not relevant. In this representation of SVT, the threshold used for comparison in the algorithm is hard-coded in the automaton as the mean parameter of the initial state $q_0$. In fact, without loss of generality we can take this to be $0$ as shown in Fig.~\ref{fig:svt-auto}. The initial state $q_0$ of the automaton is a non-input state with $d = \frac{1}{2}$ and $\mu = 0$ (the threshold for the algorithm). From $q_0$, the algorithm samples a value that corresponds to the perturbed threshold and stores this in variable $\rvar$. In state $q_1$, in each step it reads a query value (input), perturbs it by sampling, and compares this with the perturbed threshold stored in variable $\rvar$. If the sampled value is less that $\rvar$ it stays in $q_1$, outputs $\bot$ and leaves $\rvar$ unchanged. On the other hand, if $\getest$ then it outputs $\top$, and transitions to a terminal state $q_2$.

$\svtauto$ can be used to illustrate our classification of transitions. The transition from $q_0$ to $q_1$ is the only non-input transition and the only assignment transition in the automaton; all other transitions are non-assignment, input transitions. In addition, the transition from $q_0$ to $q_1$ is also a pure assignment transition, since the guard is $\true$.

\begin{figure}
\begin{center}
\begin{tikzpicture}
\footnotesize
\node[mynonstate, initial] (q0) {\mystrut $q_0$ \nodepart{two} \mystrut $\frac{4}{9},\ 0$};
\node[myinstate, right of=q0] (q1) {$q_1$ \nodepart{lower} $\frac{2}{9},\ 0$};
\node[myinstate, right of=q1] (q2) {$q_2$ \nodepart{lower} $\frac{2}{9},\ 0$};
\draw (q0) edge node[myedgelabel] {$\true$ \nodepart{two} $\bot, \true$} (q1);
\draw (q1) edge[loop above] node[myedgelabel] {$\lttest$ \nodepart{two} $\bot, \false$} (q1)
                 edge node[myedgelabel] {$\getest$ \nodepart{two} $\svar', \false$} (q2);
\end{tikzpicture}
\end{center}
\caption{{\dipa} $\numspauto$ modeling Algorithm~\ref{fig:NumSp}. The threshold is taken to be $0$. Label of each state below the line shows the parameters for sampling $\svar$. Parameters for sampling $\svar'$ are not shown in the figure; they are $\frac{1}{9}$ (scaling factor) and $0$ (mean) in every state.}
\label{fig:numsp-auto}
\end{figure}
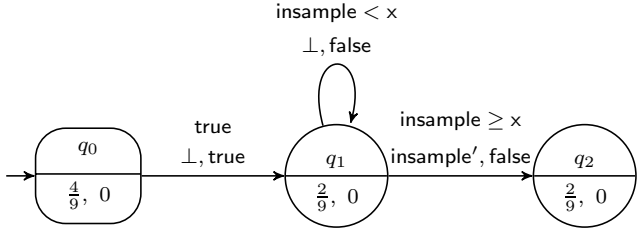

Automaton $\numspauto$ modeling Numeric Sparse (Algorithm~\ref{fig:NumSp}) is shown in Fig.~\ref{fig:numsp-auto}. As in the case of $\svtauto$ (Fig.~\ref{fig:svt-auto}), the threshold is hard-coded in the automaton and is taken to be $0$ (without loss of generality). Parameters used to sample $\svar'$ are not shown in diagram depicting $\numspauto$. We take those to be just be $\frac{1}{9}$ (scaling factor) and $0$ (mean) in every state; in fact, these parameters for $\svar'$ are only important for state $q_1$. The automaton is very similar to $\svtauto$ (Fig.~\ref{fig:svt-auto}) with the only differences being the parameters used when sampling in each state, and the fact that $\svar'$ is output on the transition from $q_1$ to $q_2$ instead of $\top$.
\end{example}

\subsection{Paths and executions}

A {\dipa} $\cA$ defines a probability measure on the \emph{executions} or \emph{paths} of $\cA$ (henceforth just called a path). Informally, a path is just a sequence of transitions taken by the automaton. Observe that the condition of {\outcond} ensures that knowing the current state and output, determines which transition is taken. The input read determines the value of $\svar$ and $\svar'$, and therefore, to define the probability of a path, we need to know the inputs read as well. Finally, on transitions where either $\svar$ or $\svar'$ are output, to define a meaningful measure space, we need to associate an interval $(v,w)$ in which the output value lies. Because of these reasons, we define a path to be one that describes the sequence of (control) states the automaton goes through and the sequence of inputs read and outputs produced. 

Before defining a path formally, it is useful to introduce the following notation. For a pair of states $p,q \in \states$, $a \in \inalph \cup \set{\emptystr}$ and $o \in \outalph \cup (\set{\svar,\svar'} \times \Reals_\infty \times \Reals_\infty)$, we say $p \trns{a,o} q$ if $a = \emptystr$ whenever $p \in \epsstates$ and $a \in \inalph$ whenever $p \in \instates$, and one of the following two conditions holds.
\begin{itemize}
\item If $o \in \outalph$ then there is a guard $c \in \cnds$ and Boolean $b \in \set{\true,\false}$ such that $\transf(p,c) = (q,o,b)$.
\item If $o$ is of the form $(y,v,w)$ where $y \in \set{\svar,\svar'}$ and $v,w \in \Reals_\infty$ then there is a guard $c \in \cnds$ and Boolean $b \in \set{\true,\false}$ such that $\transf(p,c) = (q,y,b)$. Intuitively, an ``output'' of the form $(\svar,v,w)$ (or $(\svar',v,w)$) indicates that the value of $\svar$ ($\svar'$) was output in the transition and the result was a number in the interval $(v,w)$.
\end{itemize}
The \emph{unique} transition, or rather the quintuple $(p,c,q,o',b)$, that witnesses $p \trns{a,o} q$ will be denoted by $\trname(p \trns{a,o} q)$.

\begin{definition}[Path]
\label{def:exec}
Let $\cA = \defaut$ be a {\dipa}. An \emph{execution} or \emph{path} $\rho$ of $\cA$ is a sequence of the form
\[
\rho = \defexec
\]
where $q_i \in \states$ for $0 \leq i \leq n$, $a_j \in \inalph \cup \set{\emptystr}$ and $o_j \in \outalph \cup (\set{\svar,\svar'} \times \Reals_\infty \times \Reals_\infty)$ for $0 \leq j < n$. In addition, we require that $q_j \trns{a_j,o_j} q_{j+1}$ for all $0 \leq j < n$.

Such a path $\rho$ is said to be from state $q_0$ ($\fstst(\rho)$) to state $q_n$ ($\lstst(\rho)$). Its \emph{length} (denoted $\len{\rho}$) is the number of transitions, namely, $n$. If the starting state and ending state of a path are the same (i.e., $q_0 = q_n$) and $\len{\rho} > 0$ then $\rho$ is said to be a \emph{cycle}.
\end{definition}

It will be convenient to introduce some notation associated with paths.
\begin{notation}
Let us consider a path
\[ \rho = \defexec \]
of length $n$. If $\len{\rho} > 0$, then the \emph{tail} of $\rho$, denoted $\tl(\rho)$, is the path of length $n-1$ given by
\[
\tl(\rho) = q_1 \trns{a_1,o_1} q_2 \cdots q_{n-1} \trns{a_{n-1},o_{n-1}} q_n.
\]
The $i$th state of the path is $\stname(\ith{\rho}) = q_i$ and the $i$th transition is $\trname(\ith{\rho}) = \trname(q_i \trns{a_i,o_i} q_{i+1})$. The guard of the $i$th transition is $\grdname(\ith{\rho}) = c$, where $\trname(\ith{\rho}) = (q_i,c,q_{i+1},o',b)$.

Finally, it will be useful to introduce notation for the sequence of inputs read and outputs produced in a path. The output produced will be an element of $(\outalph \cup (\Reals_\infty \times \Reals_\infty))^*$ that ignores the variable name that was output when a real value is output. For $o \in \outalph$, define $\tuple{o} = o$, and for $o$ of the form $(y,v,w)$ where $y \in \set{\svar,\svar'}$ and $v,w \in \Reals_\infty$, define $\tuple{o} = (v,w)$.
\[
\begin{array}{l}
\inseq(\rho) = a_0a_1\cdots a_{n-1}\\
\outseq(\rho) = \tuple{o_0}\tuple{o_1}\cdots \tuple{o_{n-1}}
\end{array}
\]
Two paths $\rho_1$ and $\rho_2$ will be said to be \emph{equivalent} if they only differ in the sequence of inputs read. In other words, equivalent paths are of the same length, go through the same states, and produce the same outputs (and hence take the same transitions).

\blue{Thanks to output distinction, two paths are equivalent if and only if they have the same output sequences. Thus,
paths are uniquely determined by input and output sequences. Finally, modifying the values input in a path yields an equivalent path. 
\begin{proposition}
\label{prop:factspath}
Let $\rho_1$ and $\rho_2$ be two two paths of a {\dipa} $\cA.$
\begin{itemize}
\item $\rho_1$ and $\rho_2$ are equivalent if and only if $\outseq(\rho_1)=\outseq(\rho_2).$
\item If $\inseq(\rho_1)=\inseq(\rho_2)$ and $\outseq(\rho_1)=\outseq(\rho_2)$ then $\rho_1=\rho_2.$
\item For any sequence of reals $\overline a \in \Sigma^*$ such that $\len {\overline{a}}=\len{\inseq (\rho_1)}$, there is a path $\rho_3$ equivalent to $\rho_1$ such that $\inseq(\rho_3)=\overline{a}.$
\end{itemize}
\end{proposition}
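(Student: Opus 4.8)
The plan is to prove the three items in order; the engine behind all of them is \Outcond{}, which guarantees that from any state the two possibly enabled transitions carry distinct outputs and that at most one of them outputs a real value. This is exactly what lets an output sequence reconstruct the sequence of transitions taken.

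For the first item, the ``only if'' direction is immediate from the definition of equivalence: equivalent paths produce the same outputs, hence the same $\outseq$. For the converse I would assume $\outseq(\rho_1)=\outseq(\rho_2)$ — so the two paths have a common length $n$ — and, taking the paths in question to start at $\qinit$ as usual (the argument only needs $\fstst(\rho_1)=\fstst(\rho_2)$), prove by induction on $i<n$ that after $i$ steps the two paths are in the same state and take the same $i$-th transition. In the inductive step, write $q$ for the common current state and note that the two paths share the abstract output $\tuple{o_i}$. If $q$ is a non-input state, or a state for which only $\transf(q,\true)$ is defined, there is a unique enabled transition and nothing to decide. Otherwise the candidates are $\transf(q,\getest)$ and $\transf(q,\lttest)$; if $\tuple{o_i}\in\outalph$ then by \Outcond{} it is the output of exactly one of them (their outputs are distinct), and if $\tuple{o_i}$ is an interval $(v,w)$ then it is the unique transition that outputs a real value (by \Outcond{} not both can). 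Either way both paths take the same transition, which pins down the next state and — reading the variable name $\svar$ versus $\svar'$ off that transition — the full output $o_i$. Hence $\rho_1$ and $\rho_2$ go through the same states, take the same transitions and produce the same outputs, i.e.\ they are equivalent.

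The second item is then a short consequence: by the first item, $\outseq(\rho_1)=\outseq(\rho_2)$ makes $\rho_1$ and $\rho_2$ equivalent, so they agree on length, on every state, on every transition, and on every output $o_i$; the only remaining data in a path are the input letters $a_i$, which are forced to $\emptystr$ at non-input positions and are exactly the letters of $\inseq$ at input positions, and these coincide by hypothesis. Hence $\rho_1=\rho_2$. For the third item I would keep the states and outputs of $\rho_1$ and overwrite only its inputs: since $\len{\overline a}=\len{\inseq(\rho_1)}$ equals the number of input positions of $\rho_1$, I place the $k$-th letter of $\overline a$ at the $k$-th input position and leave $\emptystr$ at every non-input position. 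The step relation $p\trns{a,o}q$ constrains $a$ only through its type (empty at a non-input state, a real at an input state): the witnessing transition $\transf(p,c)=(q,o,b)$ and the interval carried by a real output do not depend on the concrete value of $a$. So swapping one real for another at an input position keeps every step legal, yielding a path $\rho_3$ equivalent to $\rho_1$ with $\inseq(\rho_3)=\overline a$.

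The only step that needs genuine care is the induction in the first item: I want to be certain that the two clauses of \Outcond{} — distinct outputs on the two branches, and at most one real-valued output — together rule out every way in which the shared abstract output $\tuple{o_i}$ could fail to single out one transition. Everything else is bookkeeping about the definitions of path, $\inseq$, $\outseq$, and equivalence.
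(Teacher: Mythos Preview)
The proposal is correct and follows the same line as the paper, which simply attributes the proposition to \Outcond{} without spelling out a proof; your induction on the shared output sequence, with the case split on whether $\tuple{o_i}$ lies in $\outalph$ or is an interval, is exactly the unpacking that one-line justification needs. Your remark that the argument requires $\fstst(\rho_1)=\fstst(\rho_2)$ is apt---the paper implicitly treats the paths as starting from $\qinit$.
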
}

\end{notation}

\subsection{Path probabilities}
We will now formally define what the probability of each path is. Recall that in each step, the automaton samples two values from Laplace distributions, and if the transition is from an input state, it adds the read input value to the sampled values and compares the result with the value stored in $\rvar$. The step also outputs a value, and if the value output is one of the two sampled values, the path requires it to belong to the interval that labels the transition. The probability of such a transition thus is the probability of drawing a sample that satisfies the guard of the transition and (if the output is a real value) producing a number that lies in the interval in the output label. This intuition is formalized in a precise definition. 

Let us fix a path
\[
\rho = \defexec
\]
of {\dipa} $\cA = \defaut$. Recall that the parameters to the Laplace distribution in each step depend on the privacy budget $\epsilon$. In addition, the value stored in the variable $\rvar$ at the start of $\rho$ influences the behavior of $\cA$. Thus, the probability of path $\rho$ depends on both the value for $\epsilon$ and the value of $\rvar$ at the start of $\rho$; we will denote this probability as $\pathprob{\epsilon,x,\rho}$, where $x$ is the initial value of $\rvar$. We define this inductively on $\len{\rho}$. For any $\epsilon$ and any path $\rho$ with $\len{\rho} = 0$, $\pathprob{\epsilon,x,\rho} = 1$. 

For a path $\rho$ of length $> 0$, let $(q_0,c,q_1,o_0,b) = \trname(q_0 \trns{a_0,o_0} q_1)$ be the $0$th transition of $\rho$. Let $\parf(q_0) = (d,\mu,d',\mu')$ and let $\tuple{a_0} = a_0$ if $a_0 \in \Reals$ and $\tuple{a_0} = 0$ if $a_0 = \emptystr$. We will define constants $\ell$ and $u$ as follows. If $o_0 \in \outalph$ then $\ell = -\infty$ and $u = \infty$. Otherwise, $o_0$ is of the form $(y,v,w)$ where $y \in \set{\svar,\svar'}$, and then we take $\ell = v$ and $u = w$. We assume that any integral of the form $\int_e^f g(y)dy = 0$ when $e > f$. Finally, when $o_0$ is of the form $(y,v,w)$ where $y \in \set{\svar,\svar'}$ (i.e., $o_0 \not\in \outalph$), define 
\[
\begin{array}{l}
k = \int_v^w \frac{d\epsilon}{2}\eulerv{-d\epsilon\card{z-\mu-\tuple{a_0}}}dz\\
k' = \int_v^w \frac{d'\epsilon}{2}\eulerv{-d'\epsilon\card{z-\mu'-\tuple{a_0}}}dz
\end{array}
\]

The function $\pathprob{\cdot}$ is defined based on what $c$ and $b$ are. Let us fix $\nu = \mu+\tuple{a_0}$. We begin by considering the case when the $0$th transition of $\rho$ is a non-assignment transition, i.e., when $b = \false$.
\begin{itemize}
\item {\bf Case $c = \true$:} If $o_0 \in \outalph$ then $\pathprob{\epsilon,x,\rho} = \pathprob{\epsilon,x,\tl(\rho)}$. If $o_0 = (\svar,v,w)$ then $\pathprob{\epsilon,x,\rho} = k\pathprob{\epsilon,x,\tl(\rho)}$ and if $o_0 = (\svar',v,w)$ then $\pathprob{\epsilon,x,\rho} = k'\pathprob{\epsilon,x,\tl(\rho)}$
\item {\bf Case $c = \getest$:} If $o_0$ is of the form $(\svar',v,w)$ (i.e., $\svar'$ is output) then 
\[
\pathprob{\epsilon,x,\rho} = k'\left(\int_x^\infty \frac{d\epsilon}{2}\eulerv{-d\epsilon\card{z-\nu}}dz \right)\pathprob{\epsilon,x,\tl(\rho)}.
\]
Otherwise, taking $\ell' = \max(x,\ell)$,
\[
\pathprob{\epsilon,x,\rho} = \left(\int_{\ell'}^u \frac{d\epsilon}{2}\eulerv{-d\epsilon\card{z-\nu}}dz \right)\pathprob{\epsilon,x,\tl(\rho)}.
\]
\item {\bf Case $c = \lttest$:} If $o_0$ is of the form $(\svar',v,w)$ (i.e., $\svar'$ is output) then
\[
\pathprob{\epsilon,x,\rho} = k'\left(\int_{-\infty}^x \frac{d\epsilon}{2}\eulerv{-d\epsilon\card{z-\nu}}dz \right)\pathprob{\epsilon,x,\tl(\rho)}.
\]
Otherwise, taking $u' = \min(x,u)$,
\[
\pathprob{\epsilon,x,\rho} = \left(\int_{\ell}^{u'} \frac{d\epsilon}{2}\eulerv{-d\epsilon\card{z-\nu}}dz \right)\pathprob{\epsilon,x,\tl(\rho)}.
\]
\end{itemize}
Next, when the $0$th transition of $\rho$ is an assignment transition, i.e., $b = \true$, $\pathprob{\cdot}$ is defined as follows.
\begin{itemize}
\item {\bf Case $c = \true$:} If $o_0$ is of the form $(\svar',v,w)$ (i.e., $\svar'$ is output) then 
\[
\pathprob{\epsilon,x,\rho} = k'\int_{-\infty}^\infty \left(\frac{d\epsilon}{2}\eulerv{-d\epsilon\card{z-\nu}}\right) \pathprob{\epsilon,z,\tl(\rho)} dz.
\]
Otherwise,
\[
\pathprob{\epsilon,x,\rho} = \int_{\ell}^u \left(\frac{d\epsilon}{2}\eulerv{-d\epsilon\card{z-\nu}}\right) \pathprob{\epsilon,z,\tl(\rho)} dz.
\]
\item {\bf Case $c = \getest$:} If $o_0$ is of the form $(\svar',v,w)$ (i.e., $\svar'$ is output) then 
\[
\pathprob{\epsilon,x,\rho} = k'\int_{x}^\infty \left(\frac{d\epsilon}{2}\eulerv{-d\epsilon\card{z-\nu}}\right) \pathprob{\epsilon,z,\tl(\rho)} dz.
\]
Otherwise, taking $\ell' = \max(x,\ell)$,
\[
\pathprob{\epsilon,x,\rho} = \int_{\ell'}^u \left(\frac{d\epsilon}{2}\eulerv{-d\epsilon\card{z-\nu}}\right) \pathprob{\epsilon,z,\tl(\rho)} dz.
\]
\item {\bf Case $c = \lttest$:} If $o_0$ is of the form $(\svar',v,w)$ (i.e., $\svar'$ is output) then 
\[
\pathprob{\epsilon,x,\rho} = k'\int_{-\infty}^x \left(\frac{d\epsilon}{2}\eulerv{-d\epsilon\card{z-\nu}}\right) \pathprob{\epsilon,z,\tl(\rho)} dz.
\]
Otherwise, taking $u' = \min(u,x)$,
\[
\pathprob{\epsilon,x,\rho} = \int_{\ell}^{u'} \left(\frac{d\epsilon}{2}\eulerv{-d\epsilon\card{z-\nu}}\right) \pathprob{\epsilon,z,\tl(\rho)} dz.
\]
\end{itemize}
We will abuse notation and use $\pathprob{\cdot}$ to also refer to $\pathprob{x,\rho} = \lambda \epsilon.\ \pathprob{\epsilon,x,\rho}$. Notice that when $\rho$ starts from $\qinit$, because of the {\initcond} condition of {\dipa}, the value of $\pathprob{\cdot}$ does not depend on the initial value of $\rvar$. For such paths, we may drop the initial value of $\rvar$ from the argument list of $\pathprob{\cdot}$ to reduce notational overhead. Even though we plan to use the same function name, the number of arguments to $\pathprob{\cdot}$ will disambiguate what we mean.

\begin{example}
\label{ex:execsem}
Let use consider the {\dipa} $\svtauto$ shown in Fig.~\ref{fig:svt-auto}. A couple of example paths of the automaton are the following.
\[
\begin{array}{l}
\rho_1 = q_0 \trns{\emptystr,\bot} q_1 \trns{0,\bot} q_1 \trns{1,\top} q_2\\
\rho_2 = q_0 \trns{\emptystr,\bot} q_1 \trns{1,\bot} q_1 \trns{1,\top} q_2
\end{array}
\]
Paths $\rho_1$ and $\rho_2$ only differ in the inputs they read: $\inseq(\rho_1) = \tau\cdot 0\cdot 1 = 01$, while $\inseq(\rho_2) = 11$. Thus, $\rho_1$ and $\rho_2$ are equivalent paths. Notice that $\rho_1$ and $\rho_2$ are adjacent (Definition~\ref{def:adjacency}). The outputs produced in these executions is given by $\outseq(\rho_1) = \outseq(\rho_2) = \bot\bot\top$.

Let us now consider $\pathprob{\epsilon,0,\rho_1}$. Since the transition out of $q_0$ is a pure assignment transition, the initial value of $\rvar$ (namely $0$ in this example) does not influence the value of $\pathprob{\epsilon,0,\rho_1}$. Let $X_T,X_1,X_2$ be random variables where $X_T \sim \Lap{\frac{\epsilon}{2},0}$, $X_1 \sim \Lap{\frac{\epsilon}{4},0} + 0$, and $X_2 \sim \Lap{\frac{\epsilon}{4},0}+1$. We can see that 
\[
\pathprob{\epsilon,0,\rho_1} = \prbfn{X_1 < X_T\ \wedge\ X_2 \geq X_T}.
\]
Based on how the random variables are distributed, this can be calculated to be  
\[
\prbfn{X_1 < X_T\: \wedge\: X_2 \geq X_T} = \frac{24 \euler^{\frac{3\epsilon}{4}} - 1 +
 8 \eulerv{\frac{\epsilon}{4}} - 21 \eulerv{\frac{\epsilon}{2}}
 }{48 \euler^{\frac{3\epsilon}{4}}}.
\]

The calculation of $\pathprob{\epsilon,0,\rho_2}$ is similar. Let $X_1'$ be the random variable with $X_1' \sim \Lap{\frac{\epsilon}{4},0}+1$. Then the desired probability is same as $\prbfn{X_1' < X_T\ \wedge X_2 \geq X_T}$. This can be calculated to be
\[
\begin{array}{rl}
\pathprob{\epsilon,0,\rho_2} & = \prbfn{X_1' < X_T\: \wedge\: X_2 \geq X_T}\\
& = \frac{-22 + 32 \eulerv{\frac{\epsilon}{4}} -3 \epsilon
}{48 \euler^{\frac{\epsilon}{2}}}.
\end{array}
\]
\end{example}

The focus of this paper is to study the computational problem of checking differential privacy for {\dipautop}. We conclude this section with a precise definition of this problem. In order to do that we first specialize the definition of differential privacy to the setting of {\dipa}. \blue{Recall that two paths are equivalent if and only if they have the same output sequences, and a path is uniquely determined by its input and output sequences (See Proposition~\ref{prop:factspath}).
\begin{definition}
\label{def:diff-priv-auto}
A {\dipa} $\cA$ is said to be $d\epsilon$-differentially private (for $d > 0$, $\epsilon > 0$) if for every pair of \emph{equivalent} paths 
$\rho_1, \rho_2$ such that $\inseq(\rho_1)$ and $\inseq(\rho_2)$ are \emph{adjacent}~\footnote{See Definition~\ref{def:adjacency} on Page~\pageref{def:adjacency}},
\[
\pathprob{\epsilon,\rho_1} \leq e^{d\epsilon} \;
\pathprob{\epsilon,\rho_2}.
\]
\end{definition}}
%

\vspace*{0.05in}
\noindent
{\bf Differential Privacy Problem:} Given a {\dipa} $\cA$ (with privacy parameter $\epsilon$), determine if there is a $d > 0$ such that for every $\epsilon > 0$, $\cA$ is $d\epsilon$-differentially private.



\section{Deciding Differential Privacy}
\label{sec:decidability}

The central computational problem that this paper studies is the following: Given a {\dipa} ${\cA}$ determine if there is a $d > 0$ such that for all $\epsilon > 0$, $\cA$ is $d\epsilon$-differentially private. In this section we present the main result of this paper, namely, that this problem is efficiently decidable in {linear} time. We also show that we can compute an upper bound on $d$ in linear time if $\cA$ is differentially private.  The crux of the proof is the identification of simple graph-theoretic conditions that are both \emph{necessary and sufficient} to ensure a {\dipa} is $d\epsilon$-differentially private for all $\epsilon$ and some $d$.

Before presenting the properties that are needed to guarantee differential privacy, we first define the notion of reachability. Let us fix a {\dipa} $\cA = \defaut$. A state $q$ is said to be \emph{reachable} if there is a path $\rho$ starting from state $\qinit$ and ending in $q$. In addition, we say that a path (cycle) $\rho$ is reachable if there is a path $\rho'$ from $\qinit$ to $\fstst(\rho)$. We now start by identifying the first interesting property.
\begin{definition}
\label{def:leaky-paths}
A path $\rho$ in a {\dipa} $\cA$ is said to be a \emph{\criticalpath} if there exist indices $i,j$ with $0 \leq i < j < \len{\rho}$  such that the $i$th transition $\trname(\ith{\rho})$ is an assignment transition and the guard of the $j$th transition $\grdname(\ith[j]{\rho}) \neq \true$.
A {\criticalpath} $\rho$ is said to be a \emph{\criticalcycle} if it is also a cycle.
\end{definition}

Intuitively, in a {\criticalpath}, the variable $\rvar$ is assigned a value in some transition which  is used in the guard of a later transition. Observe that if a path is {\critical} then all paths equivalent to it are also {\critical}. The presence of a reachable {\criticalcycle} is a witness that the {\dipa} is not differentially private. The intuition behind this is as follows. One can show that there are a pair of adjacent inputs such that traversing {\criticalcycle} $C$ on these inputs results in two paths the ratio of whose probability is at least $\eulerv{k\epsilon}$ for some number $k$. Thus, given $d$, we can find an $\ell$ and $\epsilon$ such that traversing the cycle $\ell$ times ``exhausts the privacy budget'', i.e., the adjacent input corresponding to these $\ell$ repetitions have probabilities that are more than $\eulerv{d\epsilon}$ apart. We illustrate this through our next example.

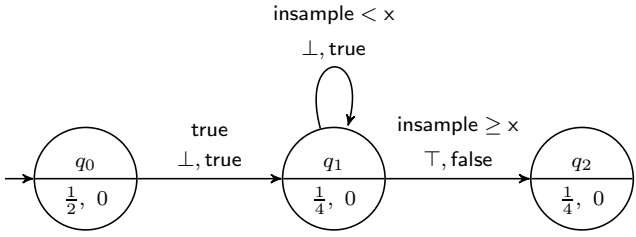
\begin{figure}
\begin{center}
\begin{tikzpicture}
\footnotesize
\node[myinstate, initial] (q0) {$q_0$ \nodepart{lower} $\frac{1}{2},\ 0$};
\node[myinstate, right of=q0] (q1) {$q_1$ \nodepart{lower} $\frac{1}{4},\ 0$};
\node[myinstate, right of=q1] (q2) {$q_2$ \nodepart{lower} $\frac{1}{4},\ 0$};
\draw (q0) edge node[myedgelabel] {$\true$ \nodepart{two} $\bot, \true$} (q1);
\draw (q1) edge[loop above] node[myedgelabel] {$\lttest$ \nodepart{two} $\bot, \true$} (q1)
                 edge node[myedgelabel] {$\getest$ \nodepart{two} $\top, \false$} (q2);
\end{tikzpicture}
\end{center}
\caption{{\dipa} $\sortauto$ modeling an algorithm that checks whether the sequence of real numbers given as input are sorted in descending order. Since $\svar'$ is not output in any state, the parameters used in sampling $\svar'$ are not important.}
\label{fig:sort-auto}
\end{figure}
\begin{example}
\label{ex:leakcycle}
Consider an algorithm that checks whether the input sequence of real numbers is sorted in descending order. The goal of the algorithm is to read a sequence of numbers, output $\bot$ as long as it is sorted, and output $\top$ the first time it encounters two numbers in the wrong order and stop. A ``differentially private'' version of this algorithm is modeled by {\dipa} $\sortauto$ shown in Fig.~\ref{fig:sort-auto}. It works as follows. It starts by reading an input in state $q_0$, perturbing it by sampling from the Laplace distribution, outputting $\bot$, and storing the perturbed input in $\rvar$. In state $q_1$, $\sortauto$ repeatedly reads an input, perturbs it, and checks if it is less than the previous perturbed value read by the automaton, which is stored in $\rvar$. If it is, the automaton outputs $\bot$, saves the new perturbed value, and stays in $q_1$ to read the next input symbol. On the other hand, if the new value is greater, then it outputs $\top$ and moves to a terminal state. $\sortauto$ is almost identical to the automaton $\svtauto$ (Fig.~\ref{fig:svt-auto}) --- the only difference is that initial state of $\sortauto$ is an input state as opposed to a non-input state, and the self loop on state $q_1$ is an assignment transition. 

This difference (that the self loop on $q_1$ is an assignment transition) turns out to be critical; $\sortauto$ is not differentially private even though $\svtauto$ is. Observe that the cycle $q_1 \trns{a_0,\bot} q_1 \trns{a_1,\bot} q_1$ is a {\criticalcycle} as the $0$th transition is an assignment transition and the $1$st transition's guard is $\lttest$. We can exploit this cycle to demonstrate why $\sortauto$ is not differentially private. Consider the paths of length $n$ given as
\[
\begin{array}{l}
\rho_1^n = q_0 \trns{0,\bot} q_1 \trns{-1,\bot} q_1 \trns{-2,\bot} q_1 \trns{-3,\bot} q_1 \trns{-4,\bot} q_1 \cdots\\
\rho_2^n = q_0 \trns{0,\bot} q_1 \trns{-2,\bot} q_1 \trns{-1,\bot} q_1 \trns{-4,\bot} q_1 \trns{-3,\bot} q_1 \cdots
\end{array}
\]
Observe that for all $n$, $\inseq(\rho_1^n)$ and $\inseq(\rho_2^n)$ are adjacent (Definition~\ref{def:adjacency}). Moreover, for any $d > 0$, there is an $n$ and $\epsilon$, such that the ratio of $\pathprob{\epsilon,\rho_1^n}$ and $\pathprob{\epsilon,\rho_2^n}$ is $> \eulerv{d\epsilon}$. Thus, $\cA$ is not $d\epsilon$-differentially private for any $d$.
\end{example}

Absence of a {\criticalcycle} does not guarantee differential privacy. \blue{Privacy leaks can occur with other types of paths and cycles. We define one such path next.}
%

\begin{definition}
\label{def:lg-cycle}
A cycle $\rho$ of a {\dipa} $\cA$ is called an {\lcycle} (respectively,  {\gcycle}) if there is an $i < \len{\rho}$ such that $\grdname(\ith{\rho}) = \lttest$ (respectively, $\grdname(\ith{\rho}) = \getest$).

We say that a path $\rho$ of a {\dipa} $\cA$ is an {\alpath} (respectively, {\agpath}) if all assignment transitions on $\rho$ have guard $ \lttest$ (respectively, $\getest$).
\end{definition}

Observe that a cycle can be both an {\lcycle} and a {\gcycle}. Further, a path with no assignment transitions (including the empty path) is simultaneously both an {\alpath} and an {\agpath}.

\begin{definition}
\label{def:leakingpair}
A pair of cycles $(C,C')$ in a {\dipa} $\cA$ is called a \emph{\criticalpair} if one of the following two conditions is satisfied.
\begin{enumerate}
\item  $C$ is an {\lcycle}, $C'$ is a {\gcycle} and there is an {\agpath} from a state in $C$ to a state in $C'.$ 

\item  $C$ is a {\gcycle}, $C'$ is an {\lcycle} and there is an {\alpath} from a state in $C$ to a state in $C'.$ 
\end{enumerate}
\end{definition}

Observe that if $C$ is an {\lcycle} as well as a {\gcycle}, then the pair $(C,C)$ is a {\criticalpair} with the empty path connecting $C$ to itself. Also, if $(C,C')$ is a {\criticalpair}, then for any $C_1,C_2$ that are equivalent to $C,C'$ respectively, the pair $(C_1,C_2)$ is also a {\criticalpair}. 

The presence of a {\criticalpair} is also a witness to a {\dipa} not being differentially private. Consider a {\dipa} $\cA$ that has no {\criticalcycle} but has a {\criticalpair} of cycles $(C,C')$ such that $C$ is reachable. Assume that $C'$ is a {\gcycle}. The case when $C'$ is an {\lcycle} is symmetric. Since $\cA$ has no {\criticalcycle}s, the value stored in $\rvar$ does not change while the automaton is executing the transitions in either $C$ or $C'$. Let $y$ be the value of $\rvar$ when $C'$ starts executing. One can show that if $y > 0$ then there are a pair of adjacent inputs such that traversing $C'$ on those inputs results in paths whose probabilities have ratios that are at least $\eulerv{k\epsilon}$ for some $k$. Moreover, this pair of inputs does not depend on the actual value of $y$. This once again means that by repeating $C'$ $\ell$ times, we can get adjacent inputs whose probabilities violate the $d\epsilon$ privacy budget (for any $d$). A similar observation holds for {\lcycle} $C$ --- if the value of $\rvar$ at the start of $C$ is $\leq 0$ then we can find adjacent inputs such that traversing $C$ for those inputs results in paths whose probabilities have a ``high'' ratio. The next observation is that value stored in $\rvar$ at the end of an {\agpath} is at least the value at the beginning of the path. We can now put all these pieces together to get our witness for a violation of differential privacy. If the value of $\rvar$ is $\leq 0$ at the start of $C$, then repeating $C$ $\ell$ times gives us a pair of adjacent inputs that violate the privacy budget. On the other hand, if $\rvar$ at the start of $C$ is $> 0$ then it will be $> 0$ even at the start of $C'$, and then repeating $C'$ $\ell$ times gives us the desired witnessing pair. Let us illustrate this through an example.

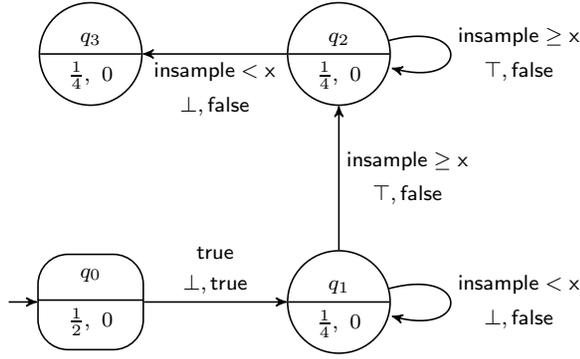
\begin{figure}
\begin{center}
\begin{tikzpicture}
\footnotesize
\node[mynonstate, initial] (q0) {\mystrut $q_0$ \nodepart{two} \mystrut $\frac{1}{2},\ 0$};
\node[myinstate, right of=q0] (q1) {$q_1$ \nodepart{lower} $\frac{1}{4},\ 0$};
\node[myinstate, above of=q1] (q2) {$q_2$ \nodepart{lower} $\frac{1}{4},\ 0$};
\node[myinstate, left of=q2] (q3) {$q_3$ \nodepart{lower} $\frac{1}{4},\ 0$};
\draw (q0) edge node[myedgelabel] {$\true$ \nodepart{two} $\bot, \true$} (q1);
\draw (q1) edge[loop right] node[myedgelabel] {$\lttest$ \nodepart{two} $\bot, \false$} (q1)
                 edge node[myedgelabel, right] {$\getest$ \nodepart{two} $\top, \false$} (q2);
\draw (q2) edge[loop right] node[myedgelabel] {$\getest$ \nodepart{two} $\top, \false$} (q2)
                 edge node[myedgelabel] {$\lttest$ \nodepart{two} $\bot,\false$} (q3);
\end{tikzpicture}
\end{center}
\caption{{\dipa} $\svtpauto$ modeling an algorithm that processes a sequence of real numbers and implements a ``noisy' version'' of the following process. As long as the input numbers are less than threshold $T$ ($=0$) it outputs $\bot$. Once it sees the first number $\geq T$, it moves to the second phase. In the phase, it outputs $\top$ as long as the numbers are $\geq T$. When it sees the first number $< T$, it outputs $\bot$ and stops. Since $\svar'$ is never output, parameters used in its sampling are not shown and not important.}
\label{fig:svtp-auto}
\end{figure}
\begin{example}
\label{ex:leakingpair}
Consider the automaton $\svtpauto$ shown in Fig.~\ref{fig:svtp-auto}. It implements an algorithm that is a slight modification of Algorithm~\ref{fig:SVT} (or the {\dipa} $\svtauto$ in Fig.~\ref{fig:svt-auto}). Like in SVT, the automaton starts in state $q_0$ by sampling a value that is a perturbed value of a threshold $T$ (which is $0$ here). It stores this sampled value in $\rvar$ and moves to the first phase (state $q_1$). In this phase, the automaton outputs $\bot$ and stays in $q_1$ as long as a perturbed value of the input read is less than the perturbed threshold stored in $\rvar$. The first time it encounters a perturbed value that is at least $\rvar$, it moves to phase two (state $q_2$) and outputs $\top$. In state $q_2$, it outputs $\top$ as long as the perturbed inputs it samples are $\geq \rvar$. The first time it encounters a value $< \rvar$ it outputs $\bot$ and terminates. Throughout the computation, the automaton never over-writes the value stored in the first step in variable $\rvar$. 

$\svtpauto$ has a {\criticalpair}. Observe that $C=q_1\trns{a_1,\bot}q_1$ is an {\lcycle} and $C'=q_2\trns{a_2,\top}q_2$ is a {\gcycle}. The path $q_1\trns{a_3,\top}q_2$ is an {\agpath} from $C$ to $C'.$ Hence $(C,C')$ is a {\criticalpair}. The presence of this {\criticalpair} can be exploited to show that $\svtpauto$ is not $d\epsilon$-differentially private for any $d > 0$.

Consider the following two paths.
\[
\begin{array}{l}
\rho_1^\ell = q_0 \trns{\emptystr,\bot} \left[ q_1 \trns{-\frac{1}{2},\bot} q_1 \right]^\ell \trns{0,\top} \left[ q_2 \trns{\frac{1}{2},\top} q_2 \right]^\ell \trns{0,\bot} q_3\\
\rho_2^\ell = q_0 \trns{\emptystr,\bot} \left[ q_1 \trns{\frac{1}{2},\bot} q_1 \right]^\ell \trns{0,\top} \left[ q_2 \trns{-\frac{1}{2},\top} q_2 \right]^\ell \trns{0,\bot} q_3
\end{array}
\]
In the above $[ p \trns{a,o} q ]^\ell$ means that the path consists of repeating this transition $\ell$ times. Notice that the $\inseq(\rho_1^\ell) = (-\frac{1}{2})^\ell 0 (\frac{1}{2})^\ell 0$ and $\inseq(\rho_2^\ell) = (\frac{1}{2})^\ell 0 (-\frac{1}{2})^\ell 0$ are adjacent. Moreover, for any $d > 0$, there is a $\ell$ such that for every $\epsilon$ the ratio of $\pathprob{\epsilon,\rho_1^\ell}$ and $\pathprob{\epsilon,\rho_2^\epsilon}$ is $> \eulerv{d\epsilon}$. Thus, for an appropriately chosen value for $\ell$, $\rho_1^\ell$ and $\rho_2^\ell$ witness the violation of differential privacy. 
\end{example}

The two conditions we have identified thus far --- existence of reachable {\criticalcycle} or {\criticalpair} --- demonstrate differential privacy violations even in {\dipa}s that do not output any real value. In automata that output real values, there are additional sources of privacy violations. We identify these conditions next.

\begin{definition}
\label{def:disclosingcycle}
A cycle $C$ of a {\dipa} $\cA$ is a \emph{\violatingc} if there is an $i$, $0 \leq i < \len{C}$ such that $\trname(\ith{C})$ is an input transition that outputs either $\svar$ or $\svar'$.
\end{definition}

Again the existence of a reachable  {\violatingc} demonstrates that the {\dipa} is not differentially private --- outputting a perturbed input repeatedly exhausts the privacy budget. 

We now present the last property of importance that pertains to paths that have transitions that output the value of $\svar$. We say that a state $q$ \emph{is in a cycle} ({\gcycle} or {\lcycle}) if there is a cycle ({\gcycle}/{\lcycle}) $C$ and index $i$ such that $q = \stname(\ith{C})$.

\begin{definition}
\label{def:violating}
We say that a path $\rho = \defexec$ of length $n$ of {\dipa} $\cA$ is a \emph{\violatingp} if one of the following conditions hold.
\begin{itemize}
\item $\tl(\rho)$ is an {\agpath} (resp., {\alpath}) such that $\lstst(\rho)$ is in a {\gcycle} (resp., {\lcycle}) and the $0$th transition $\trname(\ith[0]{\rho})$ is an assignment transition that outputs $\svar$.
\item $\rho$ is an {\agpath} (resp., {\alpath}) such that $\lstst(\rho)$ is in a {\gcycle} (resp., {\lcycle}) and the $0$th transition has $\grdname(\ith[0]{\rho}) = \lttest$ (resp., $\grdname(\ith[0]{\rho}) = \getest$) and outputs $\svar$.
\item $\rho$ is an {\agpath} (resp., {\alpath}) such that $\fstst(\rho)$ is in an {\lcycle} (resp., {\gcycle}) and the last transition has guard $\grdname(\ith[n-1]{\rho}) = \getest$ (resp., $\grdname(\ith[n-1]{\rho}) = \lttest$) and outputs $\svar$.
\end{itemize}
\end{definition}

Once again, the presence of a reachable {\violatingp} demonstrates that the automaton is not differentially private. Let us provide some intuition why that is the case. We do this for some of the cases that form a {\violatingp} with reasoning for the missing cases being similar. As before, let us assume that there is no {\criticalcycle} because if there is one then we already know that the automaton is not differential privacy. A consequence of this that there are no assignment transitions in a {\gcycle} or {\lcycle} and hence the value stored in $\rvar$ remains unchanged in these cycles. Let us recall a couple of crucial observation that we used when we argued in the case of a {\criticalpair}. First, the value stored in $\rvar$ at the end of an {\agpath} is at least as large as the value at the beginning. Next, if a {\gcycle} ({\lcycle}) is traversed when the starting value in $\rvar$ is $>0$ ($\leq 0$) then we have a family of pairs of adjacent inputs that correspond to traversing the cycle multiple times with the property that the ratio of their probabilities diverges as the cycle is traversed more times. Let us now consider each of the cases in the definition of {\violatingp}. If $\rho$ starts with an assignment transition that outputs $\svar$ and if the output of this first step is in the interval $(0,\infty)$ then the value of $\rvar$ is $>0$ at the end of $\rho$ when a {\gcycle} can be traversed. These observations can be used to give us a pair of adjacent inputs that violate privacy. If $\rho$ starts with a transition whose guard is $\lttest$ that outputs $\svar$ and suppose the value output in this step is in the interval $(0,\infty)$ then the value in $\rvar$ at the start is $> 0$. Like in the previous case this can be used to get a violating pair of inputs. Finally, if $\rho$ ends in transition outputting $\svar$, guard $\getest$ and the value output in this last step in the interval $(-\infty,0)$, then we can conclude that the value in $\rvar$ at the end of $\rho$ is $\leq 0$. This combined with properties of {\agpath}s means that $\rvar$ has a value $\leq 0$ at the beginning of $\rho$. This means the {\lcycle} at the start of $\rho$ can be traversed with $\rvar$ having a value $\leq 0$ which means that a violating pair of inputs can be constructed.

Let us illustrate this last condition through another example.
\begin{figure}
\begin{center}
\begin{tikzpicture}
\footnotesize
\node[mynonstate, initial] (q0) {\mystrut $q_0$ \nodepart{two} \mystrut $\frac{4}{9},\ 0$};
\node[myinstate, right of=q0] (q1) {$q_1$ \nodepart{lower} $\frac{2}{9},\ 0$};
\node[myinstate, right of=q1] (q2) {$q_2$ \nodepart{lower} $\frac{2}{9},\ 0$};
\draw (q0) edge node[myedgelabel] {$\true$ \nodepart{two} $\bot, \true$} (q1);
\draw (q1) edge[loop above] node[myedgelabel] {$\lttest$ \nodepart{two} $\bot, \false$} (q1)
                 edge node[myedgelabel] {$\getest$ \nodepart{two} $\svar, \false$} (q2);
\end{tikzpicture}
\end{center}
\caption{{\dipa} $\cA_{\s{mod}}$ is a modification of $\numspauto$. Label of each state below the line shows the parameters for sampling $\svar$. Parameters for sampling $\svar'$ are not shown in the figure; they are $\frac{1}{9}$ (scaling factor) and $0$ (mean) in every state.}
\label{fig:mod-auto}
\end{figure}
\begin{example}
\label{ex:violating}
Consider automaton $\cA_{\s{mod}}$ (Fig.~\ref{fig:mod-auto}) which is a modification of the Numeric Sparse algorithm modeled by automaton $\numspauto$ (Fig.~\ref{fig:numsp-auto}). The only difference is that the transition from $q_1$ to $q_2$ outputs $\svar$ as opposed to $\svar'$. This change causes this automaton to be not differentially private.

Observe that the state $q_1$ is in a $\lcycle$ $q_1 \trns{a,\bot} q_1$ and then path $\rho = q_1 \trns{a,(\svar, (0,\infty))} q_2$ is an {\agpath}. Finally, the last transition (or rather the only transition) of $\rho$ has guard $\getest$ that outputs $\svar$. Thus, $\rho$ is a {\violatingp}.

We can use $\rho$ to find a violation for privacy. Consider the following pair of paths.
\[
\begin{array}{l}
\rho_1^\ell = q_0 \trns{\emptystr,\bot} \left[ q_1 \trns{-\frac{1}{2},\bot} q_1 \right]^\ell \trns{0,(\svar,(0,\infty))} q_2\\
\rho_2^\ell = q_0 \trns{\emptystr,\bot} \left[ q_1 \trns{\frac{1}{2},\bot} q_1 \right]^\ell \trns{0,(\svar,(0,\infty))} q_2
\end{array}
\]
Observe that $\inseq(\rho_1^\ell) = (-\frac{1}{2})^\ell 0$ and $\inseq(\rho_2^\ell) = (\frac{1}{2})^\ell 0$ are adjacent. Moreover, for any $d > 0$, there is an $\ell$ such that for any $\epsilon$, the ratio of $\pathprob{\epsilon,\rho_1^\ell}$ and $\pathprob{\epsilon,\rho_2^\ell}$ is $>  \eulerv{d\epsilon}$. Thus, $\rho_1^\ell$ and $\rho_2^\ell$ demonstrate the violation of privacy.
\end{example}

As the discussion and examples above illustrate, absence of {\criticalcycle}s, {\criticalpair}s, {\violatingc}s, and {\violatingp}s is necessary for a {\dipa} to be differentially private. We call such automata \emph{well-formed}.

\begin{definition}
\label{def:well-formed}
A {\dipa} $\cA$ is said to be {\em well-formed} if $\cA$ has no reachable {\criticalcycle}, no {\criticalpair} $(C,C')$ where $C$ is reachable, no reachable {\violatingc}, and no reachable {\violatingp}. 
\end{definition}

Our main theorem is that well-formed {\dipa}s are exactly the class of automata that are differentially private. 
\ifdefined\AppendixTrue
The proof of this Theorem is carried out in the Appendix (See Appendix~\ref{app:necessity} for the \lq\lq only if\rq\rq\ direction and Appendix~\ref{app:sufficient} for the \lq\lq if\rq\rq\ direction).
\fi
\begin{theorem}
\label{thm:main}
Let $\cA$ be a {\dipa}. There is a $d > 0$ such that for every $\epsilon > 0$, $\cA$ is $d\epsilon$-differentially private if and only if $\cA$ is well-formed.
\end{theorem}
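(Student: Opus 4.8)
The plan is to prove the two directions separately. The \emph{necessity} direction (``differentially private $\Rightarrow$ well-formed'') amounts to making rigorous the four intuitive arguments already sketched informally in Section~\ref{sec:decidability}: for each of the four forbidden patterns (reachable \criticalcycle, reachable \criticalpair, reachable \violatingc, reachable \violatingp), I would exhibit, for every candidate constant $d>0$, a pair of equivalent, adjacency-respecting paths $\rho_1,\rho_2$ and a value of $\epsilon$ so that $\pathprob{\epsilon,\rho_1} > \eulerv{d\epsilon}\,\pathprob{\epsilon,\rho_2}$. The core quantitative lemma to isolate is: traversing a \gcycle\ (resp.\ \lcycle) $\ell$ times on a suitable pair of adjacent input sequences, starting from a value of $\rvar$ that is $>0$ (resp.\ $\le 0$), multiplies the probability ratio by a factor bounded below by $\eulerv{k\epsilon}$ for some fixed $k>0$ depending only on the cycle's scaling parameters; crucially the witnessing input pair is independent of the exact value of $\rvar$. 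Two auxiliary monotonicity facts are needed: (i) the value of $\rvar$ is non-decreasing along an \agpath\ and non-increasing along an \alpath; (ii) absence of a reachable \criticalcycle\ forces $\rvar$ to remain constant inside any \gcycle\ or \lcycle. Combining these with a prefix path that reaches the cycle and sets $\rvar$ to the right sign (via the structure of the forbidden pattern), then pumping the relevant cycle $\ell$ times, beats any fixed $d$. The \violatingc\ case is a direct computation: outputting a perturbed input over an interval contributes a factor that differs by $\eulerv{d'\epsilon}$-type ratios between adjacent inputs, and these compound under repetition. The \violatingp\ case is handled by the same cycle-pumping lemma after using (i) and the interval attached to the $\svar$-output to pin down the sign of $\rvar$ at the cycle.

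The \emph{sufficiency} direction (``well-formed $\Rightarrow$ differentially private'') is the hard part and I expect it to be the main obstacle. Here the plan is to exhibit an explicit constant $d$ — computed from the strongly-connected-component decomposition of the underlying graph, essentially as a weighted sum/maximum of the scaling parameters $d,d'$ over states lying on certain paths — and then bound $\pathprob{\epsilon,\rho_1}/\pathprob{\epsilon,\rho_2}$ uniformly over all equivalent adjacent path pairs $\rho_1,\rho_2$ and all $\epsilon>0$. The natural approach is an induction on path length that tracks not a single probability but the \emph{function} $x \mapsto \pathprob{\epsilon,x,\rho}$ (the dependence on the incoming value of $\rvar$), proving a statement of the form: for equivalent adjacent $\rho_1,\rho_2$ there is a bounded ``shift/coupling'' relating $\pathprob{\epsilon,x,\rho_1}$ and $\pathprob{\epsilon,x',\rho_2}$ for appropriately coupled $x,x'$, with the accumulated multiplicative error controlled by $\eulerv{d\epsilon}$. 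Concretely I would set up a coupling between the sampled variables on the two runs: on input transitions the two runs read values differing by at most $1$, so shifting the sampled $\svar$ by that difference aligns the guard outcomes at a cost of at most $\eulerv{d_q\epsilon}$ per step; the subtlety is that such a shift changes the value stored in $\rvar$ on an assignment transition, which would normally propagate unboundedly — and this is exactly where well-formedness is used. Because there is no reachable \criticalcycle, an assigned value of $\rvar$ is used in at most boundedly many guards before it is overwritten or the computation leaves the relevant region; because there is no reachable \criticalpair, one cannot alternate between ``needs $\rvar$ large'' (\gcycle) and ``needs $\rvar$ small'' (\lcycle) regions in a way that forces the shift to grow; and absence of \violatingc/\violatingp\ controls the real-valued outputs. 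The technical heart is a potential/amortization argument along the SCC-DAG: assign to each SCC a ``sign regime'' for $\rvar$ (forced $\le 0$, forced $>0$, or no constraint), show well-formedness makes these regimes consistent along any path, and show that within each regime the coupling shift needed is bounded by a constant depending only on that SCC, so the total cost over the whole path telescopes to a fixed $d\epsilon$.

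I would organize the sufficiency proof as: (1) structural lemmas on well-formed automata — classification of SCCs, the sign-regime assignment, the ``bounded reuse of $\rvar$'' lemma, and a bound on the number of times any real-valued output can occur on a path; (2) the coupling construction for a single transition, with per-step cost $\eulerv{d_q\epsilon}$ and a controlled change in the coupling parameter; (3) the inductive aggregation showing the coupling parameter stays bounded (this is where structural lemmas feed in) so the product of per-step costs is at most $\eulerv{d\epsilon}$ with $d$ the advertised SCC-based constant; (4) reading off that $d$ matches the literature values for $\svtauto$ and $\numspauto$. The single step I expect to be most delicate is step~(3): proving the coupling parameter — morally the discrepancy $|x - x'|$ between the $\rvar$ values on the two coupled runs — stays uniformly bounded. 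Naively it can grow by $1$ at each assignment transition; the whole force of the four well-formedness conditions is to show that any path along which it could grow without bound would contain one of the forbidden patterns, so the discrepancy is reset or capped, and the amortized cost per transition remains summable to a finite multiple of $\epsilon$ independent of path length.
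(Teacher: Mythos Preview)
Your necessity direction is essentially the paper's approach: four lemmas, one per forbidden pattern, each pumping a cycle to beat any fixed $d$. The quantitative core you isolate --- that a \gcycle\ (resp.\ \lcycle) traversed $\ell$ times from $\rvar>0$ (resp.\ $\rvar\le 0$) contributes a ratio $\ge \eulerv{k\ell\epsilon}$, with the witnessing inputs independent of the exact value of $\rvar$ --- is exactly the claim the paper proves and uses.

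For sufficiency, your plan has one real misconception and one structural difference from the paper that matters. The misconception is the sentence ``an assigned value of $\rvar$ is used in at most boundedly many guards before it is overwritten.'' This is false already for $\svtauto$: the perturbed threshold is stored once and then compared against \emph{unboundedly} many samples. So the cost cannot be bounded by counting uses of $\rvar$; cycle transitions must be shown to contribute \emph{zero} cost, not merely boundedly many times a per-step cost.

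The paper's mechanism for making cycle transitions free is not a bound on reuse, nor an amortization over an SCC sign-regime, but a \emph{choice of shift direction} dictated by the first cycle type in the path. Concretely, the paper proves by induction on the \emph{number of assignment transitions} (not path length) a five-case statement of the form $\pathprob{x_0,\exec'}\ge \eulerv{-\weight{\exec}\epsilon}\pathprob{x_0+s,\exec}$ with $s\in\{-1,0,+1\}$ determined by the guard of the first transition and whether the first cycle met is a \gcycle\ or an \lcycle\ (and whether an assignment of the opposite guard intervenes). The analytic workhorse is a shift inequality for nested Laplace integrals: shifting the outer integration variable by $+1$ handles all $\int_x^\infty$ factors (the $\getest$ guards) for free and costs only on the $\int_{-\infty}^x$ factors (the $\lttest$ guards), and symmetrically for $-1$. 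Well-formedness enters precisely to guarantee that the case analysis is exhaustive and consistent across segments --- e.g., after a \gcycle\ segment you cannot, via an \agpath, hit a suffix satisfying the ``\lcycle-first with no intervening $\getest$-assignment'' case, because that would be a \criticalpair. Your coupling-with-bounded-discrepancy picture is morally compatible with this, but the discrepancy never grows past $1$: the induction hypothesis is always at shift exactly $\pm 1$, and the entire art is choosing which sign. If you rework step~(3) around that choice (and induct on assignment transitions so that the block of non-assignment transitions between them is handled in one integral-shift step), you recover the paper's argument; as written, the ``bounded reuse'' justification would not close.
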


\begin{remark}
Before presenting a proof sketch for Theorem~\ref{thm:main}, it is useful to point out one special case for the result. Observe that {\violatingc}s and {\violatingp}s pertain to paths that have transitions that output real values. For {\dipa}s that do not have real outputs, {\violatingc}s and {\violatingp}s are not needed to get an exact characterization of differential privacy. More precisely, we say that a {\dipa} $\cA = \defaut$ has \emph{finite valued outputs} if every transition in $\cA$ outputs a value in $\outalph$. Now, a {\dipa} with finite valued outputs is differentially private if and only if it has no reachable {\criticalcycle}s and {\criticalpair}s. 
\end{remark}

Discussion in this section has provided intuitions for why well-formed-ness is necessary for an automaton to be differentially private; the formal proof that captures these intuitions is subtle, long, and non-trivial.
\ifdefined\AppendixTrue
The proof  is postponed to Appendix~\ref{app:necessity}.
\fi
We sketch some key properties that show why it is sufficient.

Let us fix a transition $t = (p,c,q,o,b)$ in a {\dipa} $\cA = \defaut$. The transition $t$ is said to lie on a cycle if there is a reachable cycle $\rho$ and index $i$ such that $\trname(\ith{\rho}) = t$. On the other hand, we will say $t$ is a \emph{\criticaltransition} if $t$ does not lie on a cycle. Let $\parf(p) = (d,\mu,d',\mu')$ be the parameters for sampling $\svar$ and $\svar'$ in state $p$. We define the cost of $t$ as follows.
\[
\cost{t}= \begin{cases}
                          d & t \mbox{ is a {\criticalnotransition}}\\
                          2 d & t \mbox{ is a {\criticalintransition} and }\\
                           & o \neq \svar'\\
                          2d+d' & t \mbox{ is a {\criticalintransition} and }\\
                           & o = \svar'\\
                          0 & \mbox{otherwise}                                              
\end{cases}.
\]
For a path $\rho$, define weight of $\rho$ as $\weight{\rho} = \sum_{i=0}^{\len{\rho}-1} \cost{\trname(\ith{\rho})}$, i.e., the sum of the costs of all the transitions in $\rho$. Finally, define $\weight{\cA}$ to be the supremum over all paths $\rho$, $\weight{\rho}$. In fact, the weight of $\cA$ could have been defined as a maximum (as opposed to a supremum) because they are the same in this case. The crucial observation about weight of an automaton that is used in proving the sufficiency of well-formed-ness for differential privacy, is that it provides an upper bound on the privacy budget for $\cA$.
\begin{lemma}
\label{lem:sufficiency}
A well-formed {\dipa} $\cA$ is $\weight{\cA}\epsilon$-differentially private for all $\epsilon > 0$.
\end{lemma}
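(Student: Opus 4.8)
The plan is to show that for any two equivalent paths $\rho_1, \rho_2$ with adjacent input sequences, the ratio $\pathprob{\epsilon,\rho_1}/\pathprob{\epsilon,\rho_2}$ is bounded by $\eulerv{\weight{\cA}\epsilon}$. The natural approach is to work transition-by-transition, comparing the two paths step by step (they take the same transitions, since they are equivalent and output distinction holds). The key technical device I would introduce is a ``potential'' or ``invariant'' relating the value stored in $\rvar$ along $\rho_1$ to the value stored along $\rho_2$ at each corresponding position; I expect this invariant to say something like ``the $\rvar$-values differ by at most the accumulated input perturbation seen so far,'' or more precisely a statement tracking how much ``slack'' between the two computations has been spent. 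Because the path probabilities are defined by the nested integrals in Section~\ref{sec:dipauto} (the \pathprob{\cdot} recursion), I would set up an induction on $\len{\rho}$ that simultaneously strengthens the statement to a family of inequalities parameterized by the two starting $\rvar$-values $x_1$ and $x_2$ — namely, something of the form $\pathprob{\epsilon,x_1,\rho} \le \eulerv{\weight{\rho}\epsilon}\,\pathprob{\epsilon,x_2,\rho'}$ whenever $\rho,\rho'$ are equivalent, $\inseq(\rho),\inseq(\rho')$ adjacent, and $x_1,x_2$ are suitably related (e.g.\ $x_1 \le x_2$, or $|x_1-x_2|$ bounded, depending on the guards that will be encountered). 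Getting this auxiliary hypothesis exactly right is the crux.

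The main steps, in order: (1) set up the strengthened induction hypothesis with the right relation between $x_1$ and $x_2$; (2) handle the base case $\len{\rho}=0$ trivially; (3) for the inductive step, do a case analysis on the type of the $0$th transition $t$ — non-input vs.\ input, assignment vs.\ non-assignment, and guard $\true/\getest/\lttest$ — mirroring exactly the case split in the definition of \pathprob{\cdot}; (4) in each case, bound the ``local'' factor contributed by step $0$ (an integral of a Laplace density over an interval, where the interval endpoints depend on $x_1$ resp.\ $x_2$ and on the read inputs $a_0$ resp.\ $a_0'$ with $|a_0 - a_0'|\le 1$) by $\eulerv{\cost{t}\epsilon}$ times the corresponding factor for $\rho'$, while checking that the $\rvar$-values passed to the recursive call still satisfy the invariant; (5) apply the induction hypothesis to the tails and multiply. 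The well-formedness hypothesis enters crucially here: it is what guarantees that the relation between $x_1$ and $x_2$ needed for the inductive call can actually be maintained — e.g.\ absence of \criticalcycle s means $\rvar$ is never overwritten inside a cycle so the ``bad'' direction of the comparison never accumulates unboundedly, absence of \criticalpair s controls the interaction of \lcycle s and \gcycle s, and absence of \violatingc s / \violatingp s is what lets us charge the real-output integrals ($k$, $k'$) against the cost without the slack blowing up. Also, the fact that $\cost{t}=0$ for transitions lying on a cycle is precisely what makes $\weight{\cA}$ finite and is the reason well-formedness is needed: if a costly transition could be repeated, no finite bound would hold.

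The main obstacle I anticipate is pinning down the correct strengthened invariant on $(x_1, x_2)$ and verifying it is preserved in the assignment-transition cases, where $\rvar$ is updated to the freshly sampled value $\svar$ and the integral is over $z$ with $\pathprob{\epsilon,z,\tl(\rho)}$ inside. There one cannot simply pull a constant factor out; instead one must compare $\int (\text{density}_1(z))\,\pathprob{\epsilon,z,\tl(\rho)}\,dz$ against $\int (\text{density}_2(z))\,\pathprob{\epsilon,z,\tl(\rho')}\,dz$, and the bound has to come from a change of variables in $z$ together with the pointwise density ratio $\text{density}_1(z)/\text{density}_2(z-\text{shift}) \le \eulerv{d\epsilon\cdot(\text{something}\le 1)}$ — i.e.\ the standard Laplace-mechanism argument, but now nested and with the added subtlety that the shift applied to $z$ must be consistent with maintaining the invariant for the tail. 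A second, more bookkeeping-heavy obstacle is the real-output cases involving $\svar'$, where an extra factor $k'$ (resp.\ $k$) appears; one must argue that $k'_{\rho_1}/k'_{\rho_2} \le \eulerv{d'\epsilon}$ (again the Laplace ratio bound, using $|a_0-a_0'|\le 1$), which accounts for the $+d'$ in the $o=\svar'$ case of $\cost{t}$, and that for the $o=\svar$ case the corresponding factor is absorbed either into the $2d$ (critical input transition) or into $0$ (the transition lies on a cycle — but here well-formedness via absence of \violatingc s rules out the dangerous configuration). Once the case analysis is organized to parallel the \pathprob{\cdot} definition and the invariant is chosen correctly, each case should reduce to an elementary estimate on one-dimensional Laplace integrals.
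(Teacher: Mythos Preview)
Your broad outline is right in spirit, but there is a genuine structural gap. An induction on $\len{\rho}$ with a per-transition ratio bound of $\eulerv{\cost{t}\epsilon}$ cannot work as stated, because transitions lying on a cycle have $\cost{t}=0$, yet the local factor they contribute does \emph{not} have ratio $1$ when the inputs differ. Concretely, for a non-assignment $\getest$-transition on a cycle, the factor for $\rho$ is $\prbfn{X \ge x}$ and for $\rho'$ is $\prbfn{X' \ge x}$, with $X,X'$ Laplace variables whose means differ by up to $1$; this ratio is of order $\eulerv{d\epsilon}$, not $1$. Since a cycle can be traversed arbitrarily many times, no finite bound is obtainable by comparing transition-by-transition at the \emph{same} stored value of $\rvar$.

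The paper's fix is exactly the ``correct strengthened invariant'' you anticipate but do not pin down: one compares $\pathprob{x_0,\rho'}$ not to $\pathprob{x_0,\rho}$ but to $\pathprob{x_0+1,\rho}$ or $\pathprob{x_0-1,\rho}$, the sign depending on whether the first cycle transition in $\rho$ is a {\gcycle} or an {\lcycle} transition (and on the guard $c_0$). This $\pm 1$ shift in the outer integration variable is what lets the product of cycle-transition factors be bounded by $1$ via a single change of variables (the paper isolates this as a separate integral inequality, Lemma~\ref{lem:integralineq}); the cost of the shift is then paid \emph{only} at critical transitions, which is why the factor of $2$ appears in $\cost{t}$ for critical input transitions. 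Because the sign of the shift depends on non-local information about which kind of cycle comes next, the natural induction is not on $\len{\rho}$ but on the number of \emph{assignment} transitions: one processes the whole block of non-assignment transitions up to the next assignment at once, applies the integral inequality to absorb the cycle factors into the shift, and then invokes the hypothesis on the suffix. The resulting strengthened statement (Lemma~\ref{lem:main2}) has five mutually exclusive cases keyed on $c_0$ and the type and position of the first cycle transition; well-formedness is used to make the case split exhaustive and to ensure that the shift direction required by the suffix is compatible with the one already chosen for the current block (this is exactly where absence of {\criticalpair}s and of {\violatingp}s is consumed).
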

\begin{proof}(Sketch.)
\blue{\ifdefined\AppendixTrue
\else
 FIX THE PROOF.
\fi}
The Lemma is a consequence of the proof of Lemma~\ref{lem:main2} given in Appendix~\ref{app:sufficient}.
This lemma relates the probabilities of two paths,
$\exec$ and $\exec'$ of $\cA$, such that $\exec$ and $\exec'$ are equivalent, $\inseq(\exec)$ and $\inseq(\exec')$ are neighbors, and the initial transition of $\exec$ and $\exec'$ are assignment transitions. More precisely, for an initial value of $\rvar$, $x_0,$ Lemma~\ref{lem:main2} shows that $\pathprob{\epsilon,x_0,\exec'}$ is at least $\euler^{-\weight{\exec}\epsilon}$ times one of three quantities: $\pathprob{\epsilon,x_0,\exec}$, $\pathprob{\epsilon,x_0+1,\exec}$ or $\pathprob{\epsilon,x_0-1,\exec}.$   The specific quantity the Lemma compares $\pathprob{\epsilon,x_0,\exec'}$ to depends on some properties of the path $\exec$ stated in  Lemma~\ref{lem:main2}. Together these mutually exclusive properties serve as an exhaustive list of properties that the path $\exec$ can satisfy. The fact that the list is exhaustive is a consequence of well-formed-ness. In particular, one of the parts of the Lemma is that when the guard of the initial transition is $\true$
then $\pathprob{\epsilon,x_0,\exec'} \geq \euler^{-\weight{\exec}\epsilon} \pathprob{\epsilon,x_0,\exec}.$ This immediately implies the statement of the current Lemma.  
The proof of Lemma~\ref{lem:main2} itself is intricate and proceeds by induction on the number of assignment transitions in $\exec$.  
\end{proof}

\begin{example}
\label{ex:diff-priv-proof}
Let us consider the automata $\svtauto$ (Fig.~\ref{fig:svt-auto}) and $\numspauto$ (Fig.~\ref{fig:numsp-auto}). Both these automata are well-formed and hence they are differentially private. Moreover, we can use Lemma~\ref{lem:sufficiency} to provide an upper bound on the required privacy budget.

Observe that the only {\criticaltransition}s in $\svtauto$ are $t_{01}$, the transition from $q_0$ to $q_1$, and $t_{12}$, the transition from $q_1$ to $q_2$. Now $\cost{t_{01}} = \frac{1}{2}$, while $\cost{t_{12}} = 2(\frac{1}{4}) = \frac{1}{2}$. Thus, $\weight{\svtauto} = \frac{1}{2}+\frac{1}{2} = 1$, or $\svtauto$ is $\epsilon$-differentially private for all $\epsilon$.

Similarly, the only {\criticaltransition}s in $\numspauto$ are again transition $t_{01}$ from $q_0$ to $q_1$ and transition $t_{12}$ from $q_1$ to $q_2$. They have the following costs: $\cost{t_{01}} = \frac{4}{9}$ and $\cost{t_{12}} = 2(\frac{2}{9}) + \frac{1}{9} = \frac{5}{9}$. Thus, $\weight{\numspauto} = \frac{4}{9} + \frac{5}{9} = 1$ and $\numspauto$ is $\epsilon$-differentially private for all $\epsilon > 0$. 
\end{example}

\begin{remark}
Observe that the means used in sampling $\svar$ and $\svar'$ do not play any role in the definition of well-formed (Definition~\ref{def:well-formed}). They also do not play any role in the calculation of the weight of an automaton or Lemma~\ref{lem:sufficiency}. This allows one to make some simple observations. Recall that $\svtauto$ and $\numspauto$ were defined by taking the threshold $T = 0$. However, these observations allow us to conclude that no matter what value is chosen for the threshold $T$, $\svtauto$ and $\numspauto$ are $\epsilon$-differentially private for all $\epsilon > 0$.
\end{remark}

We get as a corollary of Theorem~\ref{thm:main} that the problem of checking whether a {\dipa} $\cA$ is differentially private can be checked using graph-theoretic algorithms in linear time.
\begin{corollary}
The differential privacy problem for {\dipautop} is decidable in linear time. In addition, $\weight{\cA}$ can be computed in linear time, assuming addition and comparison of numbers takes constant time.
\end{corollary}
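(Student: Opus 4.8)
The plan is to invoke Theorem~\ref{thm:main}, which equates the differential privacy problem with deciding whether $\cA$ is well-formed, and to show that well-formedness --- the conjunction of the four ``no reachable bad configuration'' clauses of Definition~\ref{def:well-formed} --- reduces to a bounded number of linear-time searches on the control graph of $\cA$ and on two edge-restricted variants of it. First I would build the directed graph $G_\cA$ with vertex set $\states$ and one edge per transition, each edge annotated with its guard, its assignment bit and its output; a DFS from $\qinit$ marks the reachable states, and a standard linear-time strongly-connected-components procedure produces the SCCs and the acyclic condensation. I would also form $G^{AG}_\cA$, the subgraph of $G_\cA$ keeping only non-assignment edges and assignment edges with guard $\getest$, and $G^{AL}_\cA$, defined symmetrically with $\lttest$; by Definition~\ref{def:lg-cycle} a path of $\cA$ is an {\agpath} (resp.\ {\alpath}) precisely when it is a path of $G^{AG}_\cA$ (resp.\ $G^{AL}_\cA$), so the {\agpath}/{\alpath}-reachability appearing in Definitions~\ref{def:leakingpair} and~\ref{def:violating} is ordinary reachability in these graphs.

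Next I would translate each well-formedness clause into such a search. Every cycle lies inside a single SCC, and conversely any two intra-SCC edges can be spliced, via intra-SCC paths, into one cycle traversing the first before the second; hence a reachable {\criticalcycle} exists iff some reachable nontrivial SCC has both an internal assignment transition and an internal transition with guard $\neq\true$, and a reachable {\violatingc} exists iff some reachable nontrivial SCC has an internal input transition outputting $\svar$ or $\svar'$. For {\criticalpair}s, let $L$ (resp.\ $G$) be the set of states lying in an SCC with an internal $\lttest$-transition (resp.\ $\getest$-transition); up to cyclic rotation these are exactly the states on an {\lcycle} (resp.\ {\gcycle}), so a reachable Case-1 {\criticalpair} exists iff in $G^{AG}_\cA$ some vertex of $G$ is reachable from a reachable vertex of $L$, and Case~2 is the mirror statement in $G^{AL}_\cA$. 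Finally, each clause (with its two ``resp.'' sub-cases) of the {\violatingp} definition is a bounded reachability statement combining a reachable source transition of a fixed shape (assignment outputting $\svar$, or guard $\lttest$/$\getest$ outputting $\svar$), an {\agpath}/{\alpath} segment, and an endpoint condition (``in $L$'', ``in $G$'', or ``has an outgoing $\getest$/$\lttest$-transition outputting $\svar$''); each is decided by one forward or backward search in $G^{AG}_\cA$ or $G^{AL}_\cA$ plus a scan of the edge list. Only a fixed number of such searches is needed, each linear, and none inspects the rational sampling parameters; so deciding well-formedness, hence the differential privacy problem, is linear.

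For the bound $\weight{\cA}$, the key observation is that, by the definition of $\cost{\cdot}$, a transition has nonzero cost exactly when it lies on no reachable cycle, which for a transition from a reachable state happens exactly when its two endpoints lie in distinct SCCs (any reachable edge internal to a nontrivial SCC lies on a reachable cycle and so has cost $0$). Therefore on a path issuing from $\qinit$ the nonzero-cost edges are precisely those crossing between SCCs, and since the condensation of the reachable subgraph is acyclic such a path crosses at most $\card{\states}-1$ of them; the weight of $\cA$ (the supremum, which is attained) equals the longest-path value, from the SCC of $\qinit$, in the condensation DAG with each edge weighted by the maximum $\cost{\cdot}$ over the transitions realizing it. A single dynamic program in topological order computes this, the only arithmetic being additions and comparisons of rational costs, so under the stated assumption it is linear in the size of $\cA$.

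The main obstacle is not algorithmic but lies in justifying the translation: for each of the four clauses one must prove that the presence of the prescribed intra-SCC edges --- or of the prescribed restricted path in the condensation --- is \emph{equivalent} to the existence of a genuine {\criticalcycle}/{\criticalpair}/{\violatingc}/{\violatingp} meeting the exact reachability side conditions of Definition~\ref{def:well-formed}. The splicing and rotation arguments this requires (rotating a cycle to begin at a chosen state, merging two SCC-internal edges into one cycle in the right order, prepending or appending intra-SCC detours to an {\agpath}) are elementary, but must be checked against the precise index requirements in the definitions --- the condition $0\le i<j<\len{\rho}$ for {\criticalpath}s, the roles of $\fstst(\rho)$, $\lstst(\rho)$ and the first/last transition in {\violatingp}s, and the convention that the empty path is simultaneously an {\agpath} and an {\alpath}. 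Once these equivalences are in hand, both linear-time claims follow immediately from the standard graph primitives.
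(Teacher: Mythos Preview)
Your proposal is correct and follows essentially the same approach as the paper's own proof: reduce to well-formedness via Theorem~\ref{thm:main}, build the control graph $\cG$ together with the two edge-restricted subgraphs $\cG_{\s{AG}}$ and $\cG_{\s{AL}}$, compute SCCs, and translate each of the four clauses of Definition~\ref{def:well-formed} into a constant number of linear-time reachability/scan checks; then compute $\weight{\cA}$ as a longest path in the acyclic condensation. Your write-up is in fact more explicit than the paper's in justifying the SCC--cycle equivalences (the splicing/rotation arguments) and in noting that parallel inter-SCC edges must be collapsed to their maximum cost, but the algorithm and its analysis are the same.
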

\begin{proof}
We describe a linear time algorithm that checks whether a {\dipa} $\cA$ is well-formed. The Corollary then follows from Theorem~\ref{thm:main}.  

Let us fix $\cA = \defaut$. Consider the edge-labeled directed graph $\cG$ whose vertex set is $\states$ and there is an edge-labeled $(c,b)$ from $p$ to $q$ if $\transf(p,c) = (q,o,b)$ for some $o$. Without loss of generality, we can assume that every state is reachable from $\qinit$. It is worth observing that because of the {\detcond} condition of {\dipa}s, the number of edges in $\cG$ is at most twice the number of vertices. The subgraph $\cG_{\s{AG}}$ of $\cG$ has the same vertex set but an edge labeled $(c,b)$ is present in $\cG_{\s{AG}}$ only if whenever $b = \true$, $c = \getest$. Similarly, the subgraph $\cG_{\s{AL}}$ of $\cG$ only has those edges labeled $(c,b)$ with the property that if $b = \true$ then $c = \lttest$. Notice that the graphs $\cG$, $\cG_{\s{AG}}$ and $\cG_{\s{AL}}$ can each be constructed in linear time from $\cA$.

Next, we compute the maximal strongly connected components (SCC) of $\cG$; this can also be done in linear time. Observe that a state $q$ is part of some {\gcycle} if it's SCC has an edge with label $(\getest,b)$. Similarly, $q$ is part of some {\lcycle} if it's SCC has an edge with label $(\lttest,b)$. Notice that the set of all states that belong to some {\gcycle} and those that belong to some {\lcycle} can be computed in linear time. Next, the set of all vertices that can be reached by an {\agpath} from an {\lcycle} can be computed in linear time by performing a BFS on $\cG_{\s{AG}}$ starting from vertices that are on {\lcycle}s. Similarly, we can compute all vertices from which a {\gcycle} can be reached by an {\agpath} in linear time. Using BFS on $\cG_{\s{AL}}$ we can also compute the set of all vertices that can be reached from a {\gcycle} by an {\alpath}, and the set of all vertices from which an {\lcycle} can be reached by an {\alpath} in linear time.

We can now check each of the conditions of well-formed-ness in linear time using the sets computed in the previous paragraph.
\begin{itemize}
\item \emph{\criticalcycle}: Check if there is a SCC of $\cG$ that has an edge labeled $(c,\true)$ and an edge labeled $(c',b')$ where $c' \neq \true$.
\item \emph{\criticalpair}: Check if there is a state on an {\lcycle} that can reach a {\gcycle} by an {\agpath} and check if there is a state on an {\gcycle} that can reach a {\lcycle} by an {\alpath}.
\item \emph{\violatingc}: Check if there is a SCC of $\cG$ that contains an edge from an input state that outputs $\svar$ or $\svar'$.
\item \emph{\violatingp}: Check if any of the following conditions holds: (a) there is an {\agpath} ({\alpath}) from the target of an assignment transition to a state on a {\gcycle} ({\lcycle}); (b) there is an {\agpath} ({\alpath}) from the target of a non-assignment transition with output $\svar$ and guard $\lttest$ ($\getest$) to a state on a {\gcycle} ({\lcycle}); (c) there is an {\agpath} ({\alpath}) from a state on an {\lcycle} ({\gcycle}) to the source of a transition with guard $\getest$ ($\lttest$) that outputs $\svar$.
\end{itemize}

We now show how $\weight{\cA}$ can be computed in linear time assuming that arithmetic operations take constant time. Observe that we can construct the graph of SCCs of $\cG$ in linear time and that {\criticaltransition}s are those that correspond to edges in this graph of SCCs. $\weight{\cA}$ is the length of the longest path in this graph, where the weight of an edge is the cost of the corresponding transition. Note that this can be computed in linear time because the graph of SCCs is a DAG.
\end{proof}
\begin{remark}
Observe that the well-formed-ness of an automata $\cA$ does not depend on the parameter function $\parf$ of the automata. Hence, once we have established that $\cA$ is differentially private, we establish it for all possible parameter functions. The weight of a well-formed $\cA$, however, does indeed with the scaling parameters given by $\parf.$  It is independent of the mean parameters given by $\parf.$
\end{remark}


\section{Related Work}
\label{sec:related}

\paragraph*{Privacy proof construction}
Several works~\cite{GHHNP13, AGHK18,RP10,AmorimAGH15,ZK17,CheckDP} have proposed the use of type systems to construct proofs of differential privacy. Some of the type-based approaches such as \cite{GHHNP13, AGHK18,RP10,AmorimAGH15} rely on linear dependent types, for which the type-checking and type-inference may be challenging. For example, the type checking problem for the type system in~\cite{AmorimAGH15} is undecidable. The type systems in Zhang and Kifer~\cite{ZK17}, later expanded on in~\cite{CheckDP}, rely on using the techniques of randomness alignments and can handle advanced examples such as the sparse vector technique. 
Barthe et al.~\cite{BKOZ13,BGGHS16,BFGGHS16} develop several program logics based on probabilistic couplings for reasoning about differential privacy, which have been used successfully to analyze standard examples from the literature, including the sparse vector technique. The probabilistic couplings and randomness alignment arguments are synthesized into coupling strategies by Albarghouthi and Hsu~\cite{AH18}. A shadow execution based method is introduced in~\cite{WDWKZ19}.
Both~\cite{AH18} and~\cite{WDWKZ19} are automated and can handle advanced examples such as sparse vector technique efficiently.
\blue{Probabilistic I/O automata are used in~\cite{TschantzKD11} to model interactive differential privacy algorithms. Simulation-based methods are used to verify differential privacy. They assume that inputs and outputs take values from a discrete domain and that the sampling is from discrete probability distributions.}
While these approaches can handle arbitrarily long sequences of inputs and verify $\epsilon$-differential privacy,
they are not shown to be complete and may fail to construct a proof of differential privacy even when the mechanism is differentially private.  

\paragraph*{Counterexample generation} 
Another investigation line develops automated techniques to search for privacy violations. 
Ding et al. ~\cite{DingWWZK18} use statistical techniques based on hypothesis testing for automatic generation of counterexamples. Bischel et al. ~\cite{BichselGDTV18} use optimization-based techniques and symbolic differentiation to search for counterexamples. These methods search only amongst a bounded sequence of inputs and assume a concrete 
value of the parameter $\epsilon.$ Wang et al.~\cite{CheckDP} use program analysis techniques to generate counterexamples when it fails to construct a proof.  

\paragraph*{Model-checking/Markov Chain approaches}
\blue{The probabilistic model checking approach for verifying $\epsilon$-differential privacy  is employed in~\cite{ChatzGP14,LiuWZ18}, where it is assumed that the program is given as a Markov Chain.
These approaches do not allow for sampling from continuous random variables. Instead, they assume that the program behavior is given as a finite Markov Chain, and the transition probabilities are specified as inputs. Thus, they also implicitly assume a bounded sequence of inputs and a concrete value of $\epsilon.$
In~\cite{ChistikovKMP20}, the authors use labeled Markov Chains to model differential privacy algorithms. 
They consider discrete probability only, and can only model inputs taking values from a finite set. They also implicitly assume a concrete value of $\epsilon.$ Further, they check whether the ratio of probabilities of observations on neighboring inputs is bounded by a constant. If it is bounded, it implies  the algorithm is $\epsilon$-differentially private for sufficiently large epsilon. However, they do not provide a method to compute a possible $\epsilon$.}


\paragraph*{Decision Procedures}
The decision problem of checking whether a randomized program is differentially private is studied in~\cite{BartheCJS020}, where it is shown to be undecidable for programs with a single input and single output, assuming that the program can sample from Laplacian distributions. They identify a language that restricts the mechanisms in order to obtain decidability. The restriction forces sampling from the Laplace distribution only a bounded number of times. The number of inputs and outputs are also bounded and constrained to take values from a finite domain.  The decision procedure in~\cite{BartheCJS020}
relies on the decision procedure for checking the validity of a sentence in the fragment of the theory of Reals with exponentiation identified in~\cite{mccallum2012deciding}, and has very high complexity.
%
%
The decision procedure allows for verification of differential privacy for all $\epsilon.$

\paragraph*{Complexity} Gaboardi et. al~\cite{GaboardiNP19} study the complexity of deciding differential privacy for randomized Boolean circuits, and show that the problem is $\mathbf{coNP^{\#P}}$-complete. Their results are proved by reduction to majority problems. They assume finite number of inputs, the only probabilistic choices in~\cite{GaboardiNP19}  are fair coin tosses, and $\euler^{\epsilon}$ is taken to be a fixed rational number.
%

\section{Conclusion}
\label{sec:conclusions}
In this paper, we introduced a model called {\dipautop} for modeling differential privacy mechanisms. Such automata can be used to model some of the interesting classes of mechanisms presented in the literature. We studied the problem of checking if a mechanism given by a {\dipa}  is {\em differentially private}, i.e., it is  $d\epsilon$-differentially private, for some constant $d>0$ and  for all values of the scaling parameter $\epsilon>0.$ We showed that this problem is decidable in time that is linear in the size of the automaton. Our decidability result is based on checking  the necessary and sufficient conditions for differential privacy, presented in the paper. If the mechanism, given by an automaton, is differentially private, then it outputs a constant $w$ such that the mechanism is $w\epsilon$-differentially private, for all $\epsilon>0.$ If the mechanism is not differentially private, 
a counterexample can be constructed explaining why it is not differentially private. For the published mechanisms presented in the literature, that are differentially private, the constant \blue{$d$} computed by our method matches the published values. The proofs showing that the given conditions presented in the paper, are necessary and sufficient for differential privacy, are highly non-trivial.

 As part of future work, it will be interesting to come up with computation of a smaller constant $d$, than the one given in the paper, for mechanisms modeled by {\dipa}, that are differentially private. Furthermore, it will be interesting to investigate new models of automata, that can describe other interesting sub-classes of mechanisms \blue{that are currently out-of-scope such as private smart sum
algorithm~\cite{CSS10}, private vertex cover~~\cite{GLMRT10} and NoisyMax~\cite{DR14}}, for which the problem of checking differential privacy can be decided efficiently. \blue{We also plan to investigate decision procedures for verifying approximate differential privacy when for unbounded sequence of inputs and outputs.}
 
 \section*{Acknowledgment}
 The authors would like to thank anonymous reviewers for their interesting and valuable comments.
 Rohit Chadha was partially supported by NSF CNS 1553548 and NSF CCF 1900924. A. Prasad Sistla was partially supported by NSF CCF 1901069, and Mahesh Viswanathan was partially supported by NSF NSF CCF 1901069 and NSF CCF 2007428.



%



\bibliographystyle{IEEEtran}
\bibliography{header,main}
%

  \ifdefined\AppendixTrue 
\appendices

\section{Auxiliary definitions}
\label{app:auxdefintiions}

We shall start by defining some auxiliary definitions that shall help us in the proof of Theorem~\ref{thm:main}.

\paragraph*{Path Suffixes} Let $\cA=\defaut$ be an {\dipautos}. For any execution/path $\absexec=\execl{n}$ of $\cA$ and $i< n$, the suffix of $\exec$ starting from state $q_i$ (or position $i$) is the 
path $\execsf{i}{n}$ and is denoted as $\exec||i.$
 
\paragraph*{Abstract paths} For any execution/path $\absexec=\execl{n}$ of $\cA,$
the \emph{abstraction} of $\rho$, denoted $\abst(\rho)$, will be the word
$\eabsexecl{n}$ where  

$$\sigma_i=\begin{cases}
                          o_i & \mbox{if }o_i\in \outalph \\
                          \svar & \mbox{if } o_i=(\svar,r,s)\\
                          \svar' & \mbox{otherwise}
                    \end{cases}$$
Note that for {\dipautop}, $\sigma_i=o_i$ for each $i.$

A sequence $\absexec=\eabsexecl{n}$ is said to be an \emph{abstract path} if $\absexec=\abst(\exec)$ for some execution $\exec$. By abuse of notation, we shall say that the length pf the execution $\absexec$ is $n.$
Further such a $\exec$ shall be called an execution of $\absexec$ on input $\inalphaseq=a_0\cdots a_n.$ Note that $\exec$ is unique if $\sigma_i\in \outalph$
for each $i.$ In general, two distinct sequences $\exec$ and $\exec'$ having the same abstraction $\absexec$ will only differ at indices $i$ such that $\sigma_i \notin\outalph.$
At those indices, we would need to specify the values of the interval end-points, $r_i,s_i,$ where the real output is assumed to belong to.

Fix an abstract path  $\absexec=\eabsexecl{n}.$ The $i$th-transition, denoted $\trname[i]$, is the word  $q_i\sigma_{i}q_{i+1}.$  The guard of the $i$th transition, denoted $\grdname{\trname[i]}$ is the unique $c$
such that $\delta(q_i,c)=(q_i,\sigma_i,b).$ The output sequence of $\absexec$, denoted 
$\outseq(\absexec)$ is the sequence $\sigma_0\cdots \sigma_n.$   Note that we can classify transitions of an abstract path as input, non-input, assignment and non-assignment as expected. The notions of paths, cycles, reachability, {\criticalcycle}, {\criticalpair}, {\violatingc}, {\violatingp} and {\criticaltransition} 
extends  naturally to abstract paths.

\section{Necessity of well-formedness}
\label{app:necessity}


We shall now show that if the {\dipa} $\cA$ is not well-formed then $\cA$ is not differentially private, thus establishing the \lq\lq only if\rq\rq\ part of Theorem~\ref{thm:main}.
The proof of necessity will be broken into four Lemmas. Lemma~\ref{lem:criticalcnec} shall show that if $\cA$ has a {\criticalcycle} then $\cA$ is not differentially private. 
Lemma~\ref{lem:criticalpnec} will deal with presence of  {\criticalpair}s, Lemma~\ref{lem:violatingcnec}  with presence of {\violatingc}s, and Lemma~\ref{lem:violatingpnec} with presence of {\violatingp}s. Please note that we shall use the notions of path suffixes and abstract paths introduced in Appendix~\ref{app:auxdefintiions}.

Before we proceed, we need a technical lemma that characterizes the probability of two samples from Laplace distributions being ordered.
\begin{lemma}
\label{lem:problessequal}
Suppose $X_i$, for $i=1,2$, are  random variables with $X_i \sim \Lap{k_i,\mu_i}$. Then $\prbfn{X_1\leq X_2}$  is given as follows. When $k_1 \neq k_2$
\begin{dmath*}
\prbfn{X_1 \leq X_2} = \frac{1}{2}\left[1 + \sgn(\mu_2-\mu_1)\left(1- \frac{k_2^2}{2(k_2^2-k_1^2)}\eulerv{-k_1\card{\mu_2-\mu_1}} + \frac{k_1^2}{2(k_2^2-k_1^2)}\eulerv{-k_2\card{\mu_2-\mu_1}}\right)\right].
\end{dmath*}
On the other hand, when $k_1=k_2=k$
\begin{dmath*}
\prbfn{X_1 \leq X_2} = \frac{1}{2}\left[1+\sgn(\mu_2-\mu_1)\left(1 - \eulerv{-k\card{\mu_2-\mu_1}}(1+\frac{k}{2}\card{\mu_2-\mu_1})\right)\right]
\end{dmath*}
\end{lemma}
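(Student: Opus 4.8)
The plan is to compute the cumulative distribution directly by integrating the joint density, exploiting independence of $X_1$ and $X_2$. Write $\prbfn{X_1 \leq X_2} = \int_{-\infty}^{\infty} f_{k_2,\mu_2}(t)\, F_{k_1,\mu_1}(t)\, dt$, where $F_{k_1,\mu_1}(t) = \prbfn{X_1 \leq t}$ is the CDF already recorded in the preliminaries, namely $F_{k_1,\mu_1}(t) = \tfrac12[1 + \sgn(t-\mu_1)(1 - \eulerv{-k_1\card{t-\mu_1}})]$. Substituting and expanding, the integral splits according to the sign of $t-\mu_1$, so one naturally breaks the real line at $t = \mu_1$ (and, because of the absolute value $\card{t-\mu_2}$ appearing in $f_{k_2,\mu_2}$, also at $t=\mu_2$). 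Without loss of generality one may reduce to a normalized situation: by the shift-invariance noted in the preliminaries ($\Lap{k,\mu}$ is $\Lap{k,0}$ shifted by $\mu$), only the difference $\mu_2-\mu_1$ matters, and by the symmetry of the Laplace density about its mean together with swapping the roles of $X_1$ and $X_2$ (using $\prbfn{X_1 \leq X_2} = 1 - \prbfn{X_2 \leq X_1}$ up to the null event $X_1 = X_2$), one may assume $\mu_2 \geq \mu_1$, which is where the $\sgn(\mu_2-\mu_1)$ factor in the statement comes from.

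So after normalization I would set $\mu_1 = 0$, $\mu_2 = \Delta \geq 0$, and compute $\int_{-\infty}^{\infty} \tfrac{k_2}{2}\eulerv{-k_2\card{t-\Delta}} \cdot \tfrac12[1 + \sgn(t)(1 - \eulerv{-k_1\card{t}})]\, dt$. This is a sum of elementary exponential integrals over the three regions $(-\infty,0)$, $(0,\Delta)$, $(\Delta,\infty)$ (or just two regions if one is slightly cleverer), each of the form $\int e^{at}\,dt$ or $\int e^{at+bt}\,dt$. The case distinction $k_1 \neq k_2$ versus $k_1 = k_2$ arises precisely because one of these sub-integrals is $\int_0^{\infty} e^{-(k_1+k_2)u}\,du$-type (always fine) while another is $\int_0^{\Delta} e^{(k_1 - k_2)u}\,du$ (or similar), which evaluates to $\frac{e^{(k_1-k_2)\Delta}-1}{k_1-k_2}$ when $k_1 \neq k_2$ and to $\Delta$ when $k_1 = k_2$ — the $\frac{k}{2}\card{\mu_2-\mu_1}$ term in the equal-scale formula. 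One then collects terms; the coefficients $\frac{k_2^2}{2(k_2^2-k_1^2)}$ and $\frac{k_1^2}{2(k_2^2-k_1^2)}$ should emerge after combining the $\frac{1}{k_1-k_2}$ and $\frac{1}{k_1+k_2}$ fractions over the common denominator $k_2^2 - k_1^2$. As a consistency check I would verify that the $k_1 \to k_2$ limit of the first formula reproduces the second (L'Hôpital on the $\frac{0}{0}$ piece), that $\Delta = 0$ gives $\tfrac12$ in both cases, and that $\Delta \to \infty$ gives $1$.

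The main obstacle is purely bookkeeping: keeping the sign cases, the absolute values, and the three integration regions straight without arithmetic slips, and then massaging the resulting mess of exponentials into the compact symmetric form stated. There is no conceptual difficulty — everything is an integral of a piecewise-exponential function — but the algebra is error-prone, so I would organize it by first writing $F_{k_1,0}(t)$ explicitly as $\tfrac12 e^{k_1 t}$ for $t<0$ and $1 - \tfrac12 e^{-k_1 t}$ for $t \geq 0$, and similarly handle $f_{k_2,\Delta}$ piecewise, so that each sub-integral is a clean single exponential. A secondary subtlety worth a sentence in the proof is justifying the reduction to $\mu_2 \geq \mu_1$: the event $\{X_1 = X_2\}$ has probability zero since the distributions are continuous, so $\prbfn{X_1 \leq X_2} + \prbfn{X_2 \leq X_1} = 1$, and swapping the pair $(k_1,\mu_1) \leftrightarrow (k_2,\mu_2)$ in the claimed formula does send it to its "one-minus" counterpart, confirming the $\sgn$ factor handles both orderings uniformly.
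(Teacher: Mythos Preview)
The paper states this lemma without proof (it is introduced as ``a technical lemma'' in Appendix~\ref{app:necessity} and then immediately used), so there is no paper proof to compare against. Your approach is the standard direct computation and is correct: condition on $X_2 = t$ using independence, substitute the Laplace CDF for $X_1$, reduce by shift-invariance to $\mu_1 = 0$, $\mu_2 = \Delta \geq 0$, split the integral at the kinks $t=0$ and $t=\Delta$, and evaluate the resulting elementary exponential integrals; the $k_1 = k_2$ case indeed arises exactly where an exponent $k_1 - k_2$ collapses to zero and the antiderivative becomes linear rather than exponential. Your symmetry justification via $\prbfn{X_1 = X_2} = 0$ and the consistency checks ($\Delta = 0$, $\Delta \to \infty$, and the $k_1 \to k_2$ limit via L'H\^opital) are all sound and would make the write-up self-contained.
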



\subsection*{\textbf{{\Criticalcycle}s implies no privacy}}

\begin{lemma}
\label{lem:criticalcnec}
A  {\dipa} $\cA $ is 
not differentially private if  it has a reachable {\criticalcycle}.
\end{lemma}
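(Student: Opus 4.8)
The plan is to exhibit, for any putative privacy constant $d$, a pair of equivalent paths with adjacent input sequences whose probability ratio exceeds $\eulerv{d\epsilon}$ for a suitable $\epsilon$. Let $C$ be the reachable {\criticalcycle}, and let $\pi$ be a path from $\qinit$ to $\fstst(C)$. By definition of {\criticalcycle}, there are indices $i<j<\len{C}$ such that the $i$th transition of $C$ is an assignment transition and the guard of the $j$th transition is $\getest$ or $\lttest$. First I would argue, using Proposition~\ref{prop:factspath}, that we may freely choose the input values along any equivalent copy of $C$ (and of $\pi$), so the analysis reduces to choosing convenient real inputs. The key structural fact to extract is that after the $i$th (assignment) transition of $C$, the variable $\rvar$ holds a freshly sampled value $\svar$, and this value is then tested at the $j$th transition; crucially, on the \emph{next} traversal of $C$ the value stored by the $i$th transition again depends on the input read there. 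So by shifting the inputs read at the assignment position on the two paths in opposite directions (by $\tfrac12$ each, staying within adjacency), we can make the stored value systematically larger on one path and smaller on the other, which biases the outcome of the comparison at position $j$.

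Concretely, I would pick a distinguished abstract cycle (an equivalent copy of $C$) together with input values so that on path $\rho_1$ the assignment-position input is $a-\tfrac12$ and on $\rho_2$ it is $a+\tfrac12$ (for an appropriate baseline $a$ chosen so that the guard at $j$ is satisfied with the chosen output label), and symmetric choices at other relevant positions so the two input sequences are adjacent. Traversing $C$ a total of $\ell$ times yields paths $\rho_1^\ell$ and $\rho_2^\ell$ (prefixed by $\pi$ with identical inputs). Using the explicit path-probability formulas from Section~\ref{sec:dipauto} — in particular the integral expressions for assignment transitions with guard $\true$ and for non-assignment transitions with guard $\getest$/$\lttest$ — together with Lemma~\ref{lem:problessequal} to evaluate the probability that a Laplace sample falls on the correct side of the stored value, I would show that each traversal of $C$ contributes a multiplicative factor to the ratio $\pathprob{\epsilon,\rho_1^\ell}/\pathprob{\epsilon,\rho_2^\ell}$ that is bounded below by $\eulerv{c\epsilon}$ for some constant $c>0$ independent of $\ell$ (the constant depending only on the scaling parameters on $C$ and on the $\tfrac12$ shift). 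Hence the ratio after $\ell$ traversals is at least $\eulerv{\ell c\epsilon}$, up to a bounded correction from the fixed prefix $\pi$ and the non-cycle parts, which can be absorbed. Given any $d>0$, choosing $\ell$ with $\ell c > d$ (say $\ell c \ge d+1$) yields $\pathprob{\epsilon,\rho_1^\ell} > \eulerv{d\epsilon}\,\pathprob{\epsilon,\rho_2^\ell}$ for all $\epsilon>0$ (or at least for all sufficiently large $\epsilon$, which suffices), contradicting $d\epsilon$-differential privacy.

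The main obstacle I anticipate is the bookkeeping needed to show the per-cycle factor is genuinely bounded away from $1$ \emph{uniformly} and that the shifts can be arranged to keep the whole input sequence adjacent while also keeping every traversal of $C$ on the \emph{same} abstract path (so that $\rho_1^\ell$ and $\rho_2^\ell$ are equivalent and the probabilities are comparable term-by-term). This requires care: the value in $\rvar$ is overwritten each time the assignment transition fires, so one must track how the $\tfrac12$ bias propagates, ensure the comparison at position $j$ still produces the intended branch for the chosen input values, and verify that intermediate transitions (those between the assignment at $i$ and the guarded transition at $j$, and those completing the cycle) do not destroy the bias or force a different branch. Handling the two sub-cases ($\getest$ versus $\lttest$ at position $j$) symmetrically, and making the estimate robust to the exact scaling constants via the closed forms in Lemma~\ref{lem:problessequal}, is where the real work lies; the rest is a routine limiting argument in $\ell$ and $\epsilon$.
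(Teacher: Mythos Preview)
Your high-level strategy---repeat the cycle $\ell$ times and exhibit adjacent inputs whose probability ratio grows without bound---is correct, but the central claim that ``each traversal of $C$ contributes a multiplicative factor to the ratio $\pathprob{\epsilon,\rho_1^\ell}/\pathprob{\epsilon,\rho_2^\ell}$ bounded below by $\eulerv{c\epsilon}$'' does not follow from the ingredients you list. The path probability is a nested integral over the stored value (each assignment transition introduces an $\int \ldots\, dz$ that couples to all subsequent guarded comparisons until the next assignment), so the overall probability does \emph{not} factorise into a product of per-cycle terms, and a ratio of integrals is not the product of pointwise ratios. A $\pm\tfrac12$ shift at the assignment position shifts the density of the stored value, but the resulting ratio of integrals can be arbitrarily close to $1$ if the comparison at $j$ is ``easy'' (e.g.\ the mean at $j$ is far from the mean at $i$), and even when it is not, you would need a uniform lower bound over the integration variable, which you do not establish. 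What you call ``bookkeeping'' is in fact the crux.

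The paper avoids this obstacle by bounding the two probabilities \emph{separately} rather than bounding their ratio termwise. It first chooses an input $\inalphaseq(\ell)$ so that at \emph{every} guarded step the sample mean is offset by $1$ in the favourable direction from the stored mean; a union bound using Lemma~\ref{lem:problessequal} then gives $\pathprob{\epsilon,\rho_{\inalphaseq}(\ell)} \geq \tfrac12$ for all $\epsilon \geq \epsilon_\ell$. For the adjacent input $\inbetaseq(\ell)$ it \emph{swaps} the inputs at positions $i$ and $j$ (within each iteration), so that the guard at each $j+\ell'n$ is now satisfied only with probability $\leq r'\eulerv{-d'\epsilon}(1+\tfrac{d''\epsilon}{2})$. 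The key observation is that the event ``guard at $j+\ell'n$ holds'' depends only on the raw samples $U_{i+\ell'n},U_{j+\ell'n}$ (since between $i$ and $j$ all transitions are non-assignment with guard $\true$), and these pairs are independent across $\ell'$; hence the full path probability is bounded above by the \emph{product} of these one-step probabilities. The ratio is then at least $\tfrac12\bigl(\eulerv{d'\epsilon}/(r'(1+\tfrac{d''\epsilon}{2}))\bigr)^\ell$, which beats $\eulerv{s\epsilon}$ for any $s$ once $\ell$ and $\epsilon$ are large. This separate-bounding-plus-independence argument is what makes the proof go through; your per-cycle-ratio approach would need a substantially different analysis to close the gap.
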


Let $\cA= \defaut.$ Assume that $\cA$ has a {\criticalcycle} reachable from the state $\qinit.$
We give the proof first assuming that all states of $\cA$ are input states. The proof for the case when the automata has both input and non-input states can be proved along similar lines and is left out. 

Let $\absexec=\eabsexecl{m+n}$  for $k=0,\ldots,m+n-1$ be an {abstract path} such that $q_0=\qinit$, $q_m=q_{m+n}$, and the  final $n$ transitions of $\rho$, i.e., the abstract path 
 $C=\eabsexecsf{m}{m+n}$ is a
{\criticalcycle}. 


 Let $t_k$ be the {$k$-th transition} of $\absexec$ and $c_k$ be the {guard} of the $k$-th transition. 
  Further, let $d_k$ and $\mu_k$ be such that $\parf(q_k) = (d_k,\mu_k)$ for each $k.$
We have that
$c_0\:=\true$ and  $t_0$ is an
assignment transition.
Let $i,j$ be the smallest integers such that $m\leq
i<j<m+n$ and the following properties are satisfied: (a)   
$t_i$ is an assignment transition, (b) $c_j\neq \true$ and (c) for every
$k_1$ such that $i<k_1<j$, $t_{k_{1}}$ is a non-assignment transition and
$c_{k_{1}}=\true.$ We fix $i,j$ as above.
Consider any integer $\ell>0$. We define an abstract path $\absexec_\ell$ starting
from $\qinit$ by repeating the cycle $t_m,\ldots t_{m+n-1}$, $\ell$
times. Formally, $\absexec_\ell=\eabsexecl{m+\ell n}$ such that 
$q_k=q_{k-n}$ and $\sigma_k=\sigma_{k-n}$ for $m+n\leq
k\leq m+\ell n.$ Let
$\outgammaseq(\ell)=o_0\cdots o_{m+\ell n-1}$ be the output sequence of length $m+\ell n$ such that $o_k=\sigma_k$ if $\sigma_k\in \outalph,$ otherwise $o_k=(\sigma_k,-\infty,\infty).$
Once again, we let $t_k$ be the {$k$-th transition} of $\absexec_\ell$ and $c_k$ be the {guard} of the $k$-th transition. 
Now, given $\ell>0$, we define two neighboring input sequences
$\inalphaseq(\ell)=\inalpha_0\cdots \inalpha_{m+\ell n-1}$ and $\inbetaseq(\ell)=\inbeta_0\cdots \inbeta_{m+\ell n-1}$ each of length $m+\ell n.$  


The sequence $\inalphaseq(\ell)$ is chosen so that all the guards in
the transitions of $\absexec_{\ell}$ are satisfied with joint probability $>\frac{1}{2}$ 
{for large $\epsilon$}.
The input $\inalpha_0=0$ and for $0<k<m+\ell n$, $\inalpha_k$ is defined
inductively as given below: let $k'<k$ be the largest
integer such that $t_{k'}$ is an assignment transition, then $\inalpha_k$ is
given as follows: if  $c_k$ is the guard ${\getest}$ then
$\inalpha_k\:=\mu_{k'}-\mu_{k}+\inalpha_{k'}+1$, otherwise $\inalpha_k\:=\mu_{k'}-\mu_{k}+\inalpha_{k'}-1.$


Now, consider any $k$, $0\leq k<m+\ell n$, such that
$c_k\neq \true$ and fix it. Let $k'<k$ be the largest integer such that $t_{k'}$ is
an assignment transition. Let $X_{k'},X_{k}$ be the two random
variables with distributions given by
$\Lap{d_{k'}\epsilon,\inalpha_{k'}}$ and
  $\Lap{d_{k}\epsilon,\inalpha_{k}}.$  Let
    $Y_k$ denote the random variable denoting the $k^{th}$ output of $\absexec$ on
    the input sequence $\inalphaseq(\ell).$ 
 Now
    consider the case when $c_k$ is ${\getest}.$ From the way, we
    defined $\inalphaseq(\ell)$ it is the case that $\mu_k+a_k\:=\mu_{k'}+a_{k'}+1.$ Now
    $\prbfn{Y_k\neq o_k}
\:= \prbfn{X_k < X_{k'}}\:=\prbfn{X_k\leq X_{k'}}.$ Let $d_{\mathsf{mx}}=\max(d_k,d_{k'})$ and $d_{\mathsf{mn}} = \min (d_k,d_{k'}).$
From Lemma \ref{lem:problessequal}, we see that 
if $d_k\neq d_{k'}$ then 
$$\prbfn{X_k \leq X_{k'}}<
\frac{{d_{\mathsf{mx}}}^2}{2({d_{\mathsf{mx}}}^2-{d_{\mathsf{mn}}}^2)}\eulerv{-d_{\mathsf{mn}}\epsilon}.$$
If $k=k'$ then 
$$\prbfn{X_k \leq
X_{k'}}<\frac{1}{2}\eulerv{-d_{k}\epsilon}(1+\frac{d_{k}\epsilon}{2}).$$
From the above, we see that 
$$\prbfn{Y_k\neq o_k}\leq r
\eulerv{-d_{\mathsf{mn}}\epsilon}(1+\frac{d_{\mathsf{mx}}\epsilon}{2})$$
where $r$ is a constant that depends only on $\cA$ (and not on $k$).
Now consider the case when $c_k$ is $\lttest.$ In this case,
$\mu_k+\inalpha_k\:=\mu_k+\inalpha_{k'}-1$ and
    $\prbfn{Y_k\neq o_k}
\:= \prbfn{X_{k'}< X_{k}}.$ 
By a similar analysis, in this case also,
$$\prbfn{Y_k\neq o_k}\leq r
\eulerv{-d_{\mathsf{mn}}\epsilon}(1+\frac{d_{\mathsf{mx}}\epsilon}{2}).$$


Let $d_{\max}\:=\max\set{\pi_1(\parf(q))\st q\in Q}$
and $d_{\min}\:=\min\set{\pi_1(\parf(q))\st q\in Q}.$ Then, for every $k,0\leq
k<m+\ell n$, 
$$\prbfn{Y_k\neq o_k}\leq r
\eulerv{-d_{\min}\epsilon}(1+\frac{d_{\max}\epsilon}{2})$$
Using the union rule of probabilities, we see that,
 $$\prbfn{\exists k<m+\ell n,\: Y_k\neq o_k}\:\leq r(m+\ell n)
\eulerv{-d_{\min}\epsilon}(1+\frac{d_{\max}\epsilon}{2}).$$ 
Given $\ell>0$, let $\epsilon_\ell\in\Reals$, be the smallest value such that
$$\forall \epsilon\geq \epsilon_\ell,\:  r(m+\ell n)
\eulerv{-d_{\min}\epsilon}(1+\frac{d_{\max}\epsilon}{2}) \leq \frac{1}{2}.$$

Now,  $$\pathprob{\epsilon,\exec_{\inalphaseq}(\ell)}\:=1-\prbfn{\exists k<m+\ell n,\: Y_k\neq o_k}.
$$ 
From the construction of $\epsilon_\ell$ and above observations, we see that $\forall \epsilon\geq
\epsilon_\ell,\:{\pathprob{\epsilon,\exec_{\inalphaseq}(\ell)}}\:\geq \frac{1}{2}.$

Now, recall the integers $i,j$ fixed earlier. Intuitively, we define $\inbetaseq(\ell)$
so that each of the guards in the transitions $t_{j+\ell' n}, 0\leq
\ell'<\ell$ are satisfied with probability $<\frac{1}{2}$.
For each $\ell',\:0\leq \ell'<\ell$,  we let $\inbeta_{i+\ell'
  n}\:=\inalpha_{j+\ell' n}+\mu_j -\mu_i$ and $\inbeta_{j+\ell' n}\:=\inalpha_{i+\ell' n}+\mu_i-\mu_j.$
We observe the following. Now, for each $\ell',\:0\leq \ell'<\ell$,
the following hold. $c_{j+\ell' n}\:=c_j\neq \true.$ If $c_{j+\ell' n}$ is the
guard ${\getest}$ then $\inbeta_{i+\ell' n}+\mu_i\:=\inbeta_{j+\ell' n}+\mu_j+1$ since 
$\inalpha_{j+\ell' n}+\mu_j\:=\inalpha_{i+\ell' n}+\mu_i+1.$  If $c_{j+\ell' n}$ is the
guard $\lttest$ then $\inbeta_{j+\ell' n}+\mu_j\:=\inbeta_{i+\ell' n}+
\mu_i+1$ since
$\inalpha_{i+\ell' n}+\mu_i\:=\inalpha_{j+\ell' n}+\mu_j+1.$  We define $\inbeta_{i'}$, for all values of $i'<m+\ell
n$ and $i'\notin \set{ i+\ell' n, j+\ell' n \st 0\leq \ell'<\ell}$, so
that $\inbetaseq(\ell)$ is a neighbour of $\inalphaseq(\ell).$ It is not difficult to see
that such a sequence $\inbetaseq(\ell)$ can be defined. Let $\exec_{\inbetaseq}(\ell)$ be the path such that $\abst(\exec_{{\inbetaseq}(\ell)})=\absexec(\ell)$ 
and $\inseq(\exec_{\inbetaseq}(\ell))=\inbetaseq(\ell).$

For each $k, 0\leq
k<m+\ell n$, let $U_k$ be the random variable with 
distribution given by $\Lap{d_{q_{k}}\epsilon,\inbeta_k}$ and $Z_k$ be 
denoting the $k^{th}$ output of $\absexec$ on
    the input sequence $\inbetaseq(\ell).$ 
Let $d'\:=\min(d_{i},d_{j})$ and
$d''\:=\max(d_{i},d_{j}).$ Now, $\prbfn{Z_j=o_j}$ is given by
$\prbfn{U_j\geq U_i}$ if $c_j$ is the guard ${\getest}$,
otherwise it is given by $\prbfn{U_j\leq U_i}.$ Using Lemma
\ref{lem:problessequal} and similar reasoning as given earlier, we see that 
$$\prbfn{Z_j=o_j}\leq
r'\eulerv{-d'\epsilon}(1+\frac{d''\epsilon}{2})$$
for some constant $r'.$
 For each
$\ell',\:0<\ell'<\ell$, using the same reasoning as above with the random variables $U_{i+\ell' n},U_{j+\ell'n}$, we see that 
$$\prbfn{Z_{j+\ell' n}=o_{j+\ell' n}}\leq
r'\eulerv{-d'\epsilon}(1+\frac{d''\epsilon}{2}).$$
 Since for any
$\ell_1,\ell_2$ such that $0\leq\ell_1< \ell_2<\ell$, the random
variables $U_{i+\ell_1 n},U_{j+\ell_1 n}$ are independent of
$U_{i+\ell_2 n},U_{j+\ell_2 n}$,
we see that 
$$\prbfn{\forall \ell',0\leq \ell'<\ell,\:Z_{j+\ell'
  n}=o_{j+\ell' n}}\leq {r'}^\ell \eulerv{-d'\ell
  \epsilon}(1+\frac{d''\epsilon}{2})^\ell.$$
Thus,
$$\prbfn{\forall k, 0\leq k<m+\ell n,
Z_k=o_k}\leq  {r'}^\ell \eulerv{-d'\ell
  \epsilon}(1+\frac{d''\epsilon}{2})^\ell.$$
The LHS of the above equation is exactly
{$\pathprob{\epsilon,\exec_{\inbetaseq}(\ell)}$.}

Thus,  for any $\ell>0$, 
$$\forall \epsilon\geq \epsilon_\ell,\:
\frac{{\pathprob{\epsilon,\exec_{\inalphaseq}(\ell)}}}{{\pathprob{\epsilon,\exec_{\inbetaseq}(\ell)}}}\geq\frac{1}{2}(\frac{\eulerv{d'\epsilon}}{r'(1+\frac{d''\epsilon}{2})})^\ell.$$  
We claim that for any $s>0$, $\exists \ell,\epsilon$ such that
$$\frac{1}{2}(\frac{\eulerv{d'\epsilon}}{r'(1+\frac{d''\epsilon}{2})})^\ell>\eulerv{s\epsilon}.$$
Now the above inequality holds if 
$$\frac{\eulerv{(d'\ell-s)\epsilon}}{(1+\frac{d'\epsilon}{2})^\ell}>2{r'}^\ell.$$ 

Choose $\ell$ so that $d'\ell>s.$ Since the denominator of the left hand side term of the last
inequality grows polynomially in $\epsilon$, while its numerator grows
exponentially in $\epsilon$, it is easy to see that 
 $\exists
  \epsilon_{0}>\epsilon_\ell$ such that
  $$ \forall \epsilon\geq \epsilon_0, \:\frac{\eulerv{(d'\ell-s)\epsilon}}{(1+\frac{d'\epsilon}{2})^\ell}>2{r'}^\ell.$$
The crucial observation we now make is that, thanks to output determinism, for every input sequence $\alpha$ and output sequence $\gamma$, there is at most one path $\exec_{\alpha,\gamma}$ such that $\inseq(\exec_{\alpha,\gamma})=\alpha$ and $\outseq(\exec_{\alpha,\gamma})=\alpha.$  This observation combined with the above inequality shows
that $\cA$ is not differentially private.


\subsection*{\textbf{{\Criticalpair}s implies no privacy}}

\begin{lemma}
\label{lem:criticalpnec}
A  {\dipa} $\cA$ is 
not differentially private if  it has a {\criticalpair} of cycles $(C,C')$ such that $C$ is reachable from the initial state of $\cA.$   

\end{lemma}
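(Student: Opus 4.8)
The plan is to prove the statement directly from Definition~\ref{def:diff-priv-auto}: assuming $\cA$ has a {\criticalpair} $(C,C')$ with $C$ reachable, I will produce, for every $d>0$, equivalent paths $\exec_1,\exec_2$ of $\cA$ with adjacent input sequences such that $\pathprob{\epsilon,\exec_1}>\eulerv{d\epsilon}\,\pathprob{\epsilon,\exec_2}$ for every $\epsilon>0$, so that $\cA$ is not $d\epsilon$-differentially private for any $d$. If $\cA$ has a reachable {\criticalcycle} then Lemma~\ref{lem:criticalcnec} already gives the conclusion, so from now on I assume $\cA$ has \emph{no} reachable {\criticalcycle}. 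This has two consequences that drive the argument: (i) no {\gcycle} and no {\lcycle} contains an assignment transition (otherwise one or two traversals of it form a {\criticalcycle}), so the value of $\rvar$ does not change while the automaton runs around $C$ or around $C'$; and (ii) along an {\agpath} every assignment transition has guard $\getest$, and taking such a transition stores in $\rvar$ a sampled value that is at least the current value of $\rvar$, so $\rvar$ is non-decreasing along any {\agpath}.

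By symmetry I treat only Case~1 of Definition~\ref{def:leakingpair} ($C$ an {\lcycle}, $C'$ a {\gcycle}, and an {\agpath} $\pi$ from a state of $C$ to a state of $C'$); Case~2 is the mirror image, and the degenerate case $C=C'$ (with $\pi$ empty) is handled the same way using a $\lttest$- and a $\getest$-transition of the single cycle simultaneously. Rotate $C$ and $C'$ so that $\pi$ runs from $\fstst(C)$ to $\fstst(C')$, and fix some path $\exec_0$ from $\qinit$ to $\fstst(C)$, which exists since $C$ is reachable. For each $\ell>0$ take the abstract path $\exec_0\cdot C^{\ell}\cdot\pi\cdot(C')^{\ell}$ (where $C^\ell$ denotes $\ell$ consecutive traversals of $C$); I realize it as a concrete path $\exec_1^\ell$ by a particular choice of inputs, and let $\exec_2^\ell$ be the equivalent path obtained by perturbing finitely many inputs by $1$. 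Concretely, fix a real $\tau$; give $\exec_0$ and $\pi$ the same (arbitrary) inputs in both paths; in every one of the $\ell$ copies of $C$, give the distinguished $\lttest$-transition an input that makes the mean of its $\svar$ equal to $\tau$ in $\exec_1^\ell$ and to $\tau+1$ in $\exec_2^\ell$; in every one of the $\ell$ copies of $C'$, give the distinguished $\getest$-transition an input that makes the mean of its $\svar$ equal to $\tau$ in $\exec_1^\ell$ and to $\tau-1$ in $\exec_2^\ell$; and give every other transition of $C$ and $C'$ the same input in both paths. Then $\exec_1^\ell$ and $\exec_2^\ell$ are equivalent, and $\inseq(\exec_1^\ell),\inseq(\exec_2^\ell)$ are adjacent because they agree except at the distinguished positions, where they differ by exactly $1$.

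The heart of the argument is the comparison of $\pathprob{\epsilon,\exec_1^\ell}$ and $\pathprob{\epsilon,\exec_2^\ell}$. Unfolding the inductive definition of $\pathprob{\cdot}$ and using consequence~(i) (so $\rvar$ is constant throughout the $C$-loops and throughout the $C'$-loops), each probability becomes a nested integral over the value $z$ of $\rvar$ at the end of $\exec_0$ and the value $z'$ of $\rvar$ at the end of $\pi$, in which the $\exec_0$-factor, the $\pi$-kernel (which by consequence~(ii) vanishes unless $z'\geq z$), and every factor coming from a non-distinguished transition of $C$ or $C'$ are the same function of $(z,z')$ in the two paths and hence cancel. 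The distinguished transitions contribute $\bigl(\prbfn{\Lap{d_1\epsilon,\tau}<z}\bigr)^{\ell}$ and $\bigl(\prbfn{\Lap{d_2\epsilon,\tau}\geq z'}\bigr)^{\ell}$ in $\exec_1^\ell$, and $\bigl(\prbfn{\Lap{d_1\epsilon,\tau+1}<z}\bigr)^{\ell}$ and $\bigl(\prbfn{\Lap{d_2\epsilon,\tau-1}\geq z'}\bigr)^{\ell}$ in $\exec_2^\ell$, where $d_1,d_2>0$ are the scaling factors of the two distinguished source states. Thus the pointwise ratio of the two integrands is
\[
\Bigl(\tfrac{\prbfn{\Lap{d_1\epsilon,\tau}<z}}{\prbfn{\Lap{d_1\epsilon,\tau+1}<z}}\Bigr)^{\ell}\cdot\Bigl(\tfrac{\prbfn{\Lap{d_2\epsilon,\tau}\geq z'}}{\prbfn{\Lap{d_2\epsilon,\tau-1}\geq z'}}\Bigr)^{\ell}.
\]
By the closed form for $\prbfn{X\leq c}$ in Section~\ref{sec:prelims}, each of these two factors is $\geq 1$ everywhere (a larger mean is stochastically larger), the first equals $\eulerv{d_1\epsilon\ell}$ whenever $z\leq\tau$, and the second equals $\eulerv{d_2\epsilon\ell}$ whenever $z'\geq\tau$. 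On the support of the integrand one has $z'\geq z$; hence either $z\leq\tau$, or $z>\tau$ and then $z'\geq z>\tau$; in either case the product is at least $\eulerv{\min(d_1,d_2)\,\ell\,\epsilon}$.

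Integrating this pointwise bound gives $\pathprob{\epsilon,\exec_1^\ell}\geq\eulerv{\min(d_1,d_2)\,\ell\,\epsilon}\,\pathprob{\epsilon,\exec_2^\ell}$, and $\pathprob{\epsilon,\exec_2^\ell}>0$ since Laplace densities have full support. Hence, given any $d>0$, choosing $\ell$ with $\min(d_1,d_2)\,\ell>d$ makes $\exec_1^\ell$ and $\exec_2^\ell$ violate $d\epsilon$-differential privacy for every $\epsilon>0$, so $\cA$ is not differentially private. I expect the main obstacle to be the bookkeeping in the third paragraph: writing out the two nested integrals from the inductive definition of $\pathprob{\cdot}$ and checking that every ``non-distinguished'' factor is literally the same function of $(z,z')$ in both paths (so that it cancels), that the distinguished factors appear with the stated means (including the case where the distinguished transition outputs a real value, which is why the output intervals are taken to be $(-\infty,\infty)$), and that the $\pi$-kernel is supported on $z'\geq z$. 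Two minor points also need attention: a transition with scaling factor $0$ gives degenerate sampling and cannot carry the live $\lttest$/$\getest$ guard used above, so $d_1,d_2>0$ may be assumed; and all the non-distinguished guards on $\exec_0$, $C$, $\pi$, $C'$ are satisfied with positive probability by full support, which is all that is needed to ensure $\pathprob{\epsilon,\exec_2^\ell}>0$.
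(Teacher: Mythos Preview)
Your proposal is correct and follows essentially the same strategy as the paper's proof: construct the path $\exec_0\cdot C^\ell\cdot\pi\cdot (C')^\ell$, compare two adjacent input sequences, and exploit the dichotomy on the stored value $z$ (together with the monotonicity $z'\geq z$ along the {\agpath}) to show that at least one of the two cycle factors contributes a multiplicative gain $\eulerv{c\ell\epsilon}$. The paper perturbs \emph{all} non-$\true$ cycle transitions by $\pm\frac12$ and proves a detailed pointwise claim yielding a gain of $\eulerv{\frac12 d_{\min}\ell\epsilon}$, whereas you perturb only one distinguished transition per cycle by $\pm 1$ and obtain the sharper $\eulerv{\min(d_1,d_2)\ell\epsilon}$; this is a genuine simplification, but the underlying mechanism is identical.
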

%

\begin{proof}
Thanks to Lemma~\ref{lem:criticalcnec}, we can assume $\cA$ does not have a {\criticalcycle}. 
Let $\cA= \defaut.$
Assume that $\cA$  has a {\criticalpair} of cycles $(C,C')$ such that $C$ is reachable from $\qinit.$
Assume that $C$ is an {\lcycle} and $C'$ is a {\gcycle}. (The proof for
the case when $C$ is a {\gcycle} and $C'$ is an {\lcycle} is similar but
symmetric and is left out). 
Thanks to our
assumption that we do not have {\criticalcycle}s, it means that both $C,C'$ do not have assignment transitions. We
further assume that $C,C'$ are distinct. If they are the same then it is
straightforward to prove that $\cA$ is not differentially private,
using  more or less the same proof. We also assume that all the states in $\cA$ are input states. The case when $\cA$ has both input and 
non-input states can also be proved using more or less the same proof. 

Let the lengths of $C,C'$ be $n_1,n_2$, respectively. 
Now, for any $\ell>0$, consider the following abstract path $\absexec_{\ell}$ in $\cA$ starting
from $\qinit$ in which the cycles $C,C'$ are repeated $\ell$ times
each. The path
$$\begin{array}{lcl}
\absexec_{\ell}&=& q_0\sigma_0 \cdots q_u \sigma_u \cdots q_v \sigma_v \cdots q_{v+n_1\ell-1} \sigma_{v+n_1\ell-1}\cdots \\
                        && \hspace*{1cm} \cdots q_{w}\sigma_w\cdots o_{w+n_2\ell-1} q_{w+n_2\ell}
\end{array}$$
where the following guards are satisfied. For each k, let $t_k$ be the $k$-th transition of $\absexec.$ and $c_k$ be the gaurd of the $k$-th transition.
\begin{enumerate}
\item $q_0=\qinit$
\item $\eabsexecsf{v}{v+n_1}$ is the cycle C
\item $t_{j+n_1}\:=t_{j}$ for all $j,\: v\leq j<v+n_1(\ell-1)$
\item $\eabsexecsf{w}{w+n_2}$  is the cycle $C'$
\item $t_{j+n_2}\:=t_{j}$ for all $j,\:w\leq j<w+n_2(\ell-1)$
\item $t_u$ is an assignment transition and $\forall\:j,
u<j<v+n_1\ell$ and $\forall\:j,
j\geq w$, $t_j$ is a non-assignment transition
\item for all
$j,\:v+n_1\ell\leq j<w $, if $t_j$ is an assignment transition then
$c_j$ is the guard $\getest.$

\end{enumerate}
%
 Observe that the last
assignment transition before $t_{v+n_1\ell}$ is $t_u$, all assignment
transitions from $t_{v+n_1\ell}$ up to $t_w$ have ${\getest}$ as
their guard, the segment of the path from $t_v$ to
$t_v+n_1\ell-1$ is the part where cycle $C$ is repeated $\ell$ times
and the segment of the path from $t_w$ to
$t_w+n_2\ell-1$ is the part where cycle $C'$ is repeated $\ell$ times.
Let $d_k$ and $\mu_k$ be such that $\parf(q_k) = (d_k,\mu_k)$ for each $k.$
We have that 
$c_0\:=\true$ and  $t_0$ is an
assignment transition.

Let
$\outgammaseq(\ell)=o_0\cdots o_{m+\ell n-1}$ be the output sequence of length $m+\ell n$ such that $o_k=\sigma_k$ if $\sigma_k\in \outalph,$ otherwise $o_k=(\sigma_k,-\infty,\infty).$
Once again, we let $t_k$ be the {$k$-th transition} of $\absexec_\ell$ and $c_k$ be the {guard} of the $k$-th transition. 
Now, given $\ell>0$, we define two neighboring input sequences
$\inalphaseq(\ell)=\inalpha_0\cdots \inalpha_{m+\ell n-1}$ and $\inbetaseq(\ell)=\inbeta_0\cdots \inbeta_{m+\ell n-1}$ each of length $m+\ell n.$

Now, we define two adjacent input sequences $\inalphaseq(\ell)=\inalpha_0\cdots\inalpha_{w+n_2\ell-1}$ and
$\inbetaseq(\ell)\:=\inbeta_0\cdots\inbeta_{w+n_2\ell-1}$ as follows. For all
$j,0\leq j< v$ and for all $j,\:v+n_1\ell \leq
j<w$,$\inalpha_j\:=\inbeta_j\:=0$; for all $j,\:v\leq j<v+n_1\ell$ and  for all $j,\:w\leq
j<w+n_2\ell$, if $c_j$ is the guard $\getest$ then
$\inalpha_j\:=\frac{1}{2}-\mu_j,\: \inbeta_j\:=-\frac{1}{2}-\mu_j$, if $c_j$ is the
guard  $\lttest$ then
$\inalpha_j\:=-\frac{1}{2}-\mu_j,\: \inbeta_j\:=\frac{1}{2}-\mu_j$ and if $c_j$ is
$\true$ then $\inalpha_j\:=\inbeta_j\:=0.$ It is not difficult to see that
$\inalphaseq(\ell)$ and $\inbetaseq(\ell)$ are adjacent.
Let $\exec_{\inalphaseq}(\ell)$ be the path such that $\abst(\exec_{{\inalphaseq}(\ell)})=\absexec(\ell)$ 
and $\inseq(\exec_{\inalphaseq}(\ell))=\inalphaseq(\ell).$ 
Let $\exec_{\inbetaseq}(\ell)$ be the path such that $\abst(\exec_{{\inbetaseq}(\ell)})=\absexec(\ell)$ 
and $\inseq(\exec_{\inbetaseq}(\ell))=\inbetaseq(\ell).$


Let $X_j,U_j$ be random variables with distributions given by
$\Lap{d_{j}\epsilon,\inalpha_j+\mu_j}$ and
$\Lap{d_{j}\epsilon,\inbeta_j+\mu_j}$, respectively. 
Observe that $t_u$
is the last assignment transition in $\absexec_{\ell}.$ For each $j>u$, for any
given $y\in \Reals$, let
$g_j(y),h_j(y)$ be the probabilities defined as follows: if $c_j$ is
the guard ${\getest}$ then $g_j(y)\:=\prbfn{X_j\geq y}$ and 
$h_j(y)\:=\prbfn{U_j\geq y}$; if $c_j$ is
the guard $\lttest$ then $g_j(y)\:=\prbfn{X_j< y}$ and 
$h_j(y)\:=\prbfn{U_j< y}$; if $c_j$ is $\true$ then
$g_j(y)\:=h_j(y)\:=1.$ It should be easy to see that,  for  all $j,\:u<j<v$ and for all $j,\:v+n_1\ell \leq
j<w$,
$\inalpha_j\:=\inbeta_j$ and hence
$g_j(y)\:=h_j(y).$
Now, we have the following claim.


{\bf Claim:} For all $j, v\leq j <v+n_1\ell$, and for all
$j,\:w\leq j<w+n_2\ell$, it is the case that $g_j(y)\geq h_j(y)$ for
all $y\in \Reals$, and the following
additional inequalities hold.
 
\begin{enumerate}

\item If $y\leq 0$ and  $c_j$ is the guard $\lttest$  then $g_j(y)
    \geq \eulerv{\frac{1}{2}d_{j}\epsilon}h_j(y).$ 

\item  If $y>0$ and $c_j$ is the guard
  ${\getest}$ then  $g_j(y)
    \geq \eulerv{\frac{1}{2}d_{j}\epsilon}h_j(y).$

\end{enumerate}

\begin{proof}
 Observe that when $c_j\:=\true$
then trivially $g_j(y)\:=h_j(y).$ Now, consider the case when
$y<-\frac{1}{2}.$ If $c_j$ is the guard $\getest$ then
$g_j(y)\:= 1-\frac{1}{2}\eulerv{-d_{j}\epsilon(\frac{1}{2}-y)}$ and
$h_j(y)\:=1-\frac{1}{2}\eulerv{-d_{j}\epsilon(-\frac{1}{2}-y)}$(this
is so since $\inalpha_j+\mu_j=\frac{1}{2}$ and $\inbeta_j+\mu_j=-\frac{1}{2}$) ; in
this case $\frac{1}{2}-y>-\frac{1}{2}-y$ and hence $g_j(y)\geq
h_j(y).$ If $c_j$ is the guard $\lttest$ then $g_j(y)\:=
\frac{1}{2}\eulerv{-d_{j}\epsilon(-\frac{1}{2}-y)}$ and
$h_j(y)\:=\frac{1}{2}\eulerv{-d_{j}\epsilon(\frac{1}{2}-y)}$; from
this we see that $g_j(y)\geq  \eulerv{d_{j}\epsilon}h_j(y).$

Now consider the case when $y\in [-\frac{1}{2},0].$ If $c_j$ is the
guard $\getest$ then
$g_j(y)\:= 1-\frac{1}{2}\eulerv{-d_{j}\epsilon(\frac{1}{2}-y)}$ and
$h_j(y)\:=\frac{1}{2}\eulerv{-d_{j}\epsilon(y+\frac{1}{2})}$;
since $g_j(y)\geq \frac{1}{2}$ and $h_j(y)\leq \frac{1}{2}$, we see that $g_j(y)
    \geq h_j(y).$   If $c_j$ is the
guard $\svar < x$ then 
$g_j(y)\:= 1-\frac{1}{2}\eulerv{-d_{j}\epsilon(y+\frac{1}{2})}$
and $h_j(y)\:=\frac{1}{2}\eulerv{-d_{j}\epsilon(\frac{1}{2}-y)}$;
since $g_j(y)\geq \frac{1}{2}$, we see that $g_j(y)\geq
\eulerv{\frac{1}{2}d_{j}\epsilon}h_j(y).$

Now consider the case when $y>0.$ If $y\leq \frac{1}{2}$ and $c_j$ is
${\getest}$ then $g_j(y)\:=
1-\frac{1}{2}\eulerv{-d_{j}\epsilon(\frac{1}{2}-y)}$ and 
$h_j(y)\:=\frac{1}{2}\eulerv{-d_{j}\epsilon(y+\frac{1}{2})}$;
observe that $g_j(y)\geq \frac{1}{2}$ and $h_j(y)\leq
\frac{1}{2}\eulerv{-\frac{1}{2}d_{j}\epsilon}$; from this we get the desired
  inequality. 

If $y\leq \frac{1}{2}$ and $c_j$ is
$\lttest$ then $g_j(y)\:=
1-\frac{1}{2}\eulerv{-d_{j}\epsilon(y+\frac{1}{2})}$ and $h_j(y)\:=\frac{1}{2}\eulerv{-d_{j}\epsilon(\frac{1}{2}-y)}$; since $g_j(y)\geq \frac{1}{2}$ and $h_j(y)\leq \frac{1}{2}$, we see $g_j(y)\geq h_j(y).$ If $y>\frac{1}{2}$ and $c_j$ is 
${\getest}$ then
$g_j(y)\:=\frac{1}{2}\eulerv{-d_{j}\epsilon(y-\frac{1}{2})}$ and
$h_j(y)\:=\frac{1}{2}\eulerv{-d_{j}\epsilon(y+\frac{1}{2})}$; from
this we see that the
desired inequality follows easily. If $y>\frac{1}{2}$ and $c_j$ is 
$\lttest$ then
$g_j(y)\:=1-\frac{1}{2}\eulerv{-d_{j}\epsilon(y+\frac{1}{2})}$ and
$h_j(y)\:=1-\frac{1}{2}\eulerv{-d_{j}\epsilon(y-\frac{1}{2})}$; it
is easy to see that $g_j(y)\geq h_j(y).$
\end{proof}

Let $S_1(\ell)$ be the set of all $j$ such that  $v\leq j<v+n_1\ell$
and $c_j$ is the guard $\svar < x.$ 
Let $S_2(\ell)$ be the set of all $j$ such that  
$w\leq j<w+n_2\ell$, and $c_j$ is the guard ${\getest}.$ 
Since $C$ is an {\lcycle} and $C'$ is a {\gcycle}, we see that the
cardinalities of both $S_1(\ell)$ and $S_2(\ell)$ are $\geq \ell.$
 Let $d_{\min}\:= \min\set{d_{j}\st j\in
  S_1(\ell)\cup S_2(\ell)}.$ Clearly $d_{\min}>0.$ 



Let $\exec_{\inalphaseq}(\ell)$ be the path such that $\abst(\exec_{{\inalphaseq}(\ell)})=\absexec(\ell)$ 
and $\inseq(\exec_{\inalphaseq}(\ell))=\inalphaseq(\ell).$ 
For $k\leq q+n_2\ell$, let
$\exec_{\inalphaseq}(\ell)||k$ (resp, $\exec_{\inbetaseq}(\ell)||k$) be the suffix of $\exec_{\inalphaseq}(\ell)$ (resp. $\exec_{\inalphaseq}(\ell)||k$) starting with $q_k.$ 

Since $C'$ is a {\gcycle}, from the above claim, we see that 
$\forall y\in \Reals$, $\pathprob{\exec_{\inalphaseq}(\ell)||w,y}\geq
\pathprob{\exec_{\inbetaseq}(\ell)||w,y}$,  and $\forall y>0$, $\pathprob{\exec_{\inalphaseq}(\ell)||w,y}\geq
\eulerv{\frac{1}{2}d_{\min}\ell\epsilon}\pathprob{\exec_{\inbetaseq}(\ell)||w,y}.$ Using the
above property and the previous  claim, together with the
assumption that 
$\forall j,\:v+n_1\ell\leq j<w$, if $t_j$ is an assignment transition
then it's guard is $\getest$, the following can be proved
by downward induction on $k$,  $\forall k,\:v+n_1\ell \leq k <w$: 
$\forall y\in \Reals$, $\pathprob{\exec_{\inalphaseq}(\ell)||k,y}\geq
\pathprob{\exec_{\inbetaseq}(\ell)||k,y}$,  and $\forall y>0$, $\pathprob{\exec_{\inalphaseq}(\ell)||k,y}\geq
\eulerv{\frac{1}{2}d_{\min}\ell\epsilon}\pathprob{\exec_{\inbetaseq}(\ell)||k,y}.$

Now, it should be easy to see that $\forall y\in\Reals,$
$$
\begin{array}{ll}
\pathprob{y,\exec_{\inalphaseq}(\ell)||v} =  (\prod_{v\leq  j<v+n_1\ell}g_j(y)) 
\pathprob{y,\exec_{\inalphaseq}(\ell)||v+n_{1}\ell} \\
\pathprob{y,\exec_{\inbetaseq}(\ell)||v} = (\prod_{v\leq  j<v+n_1\ell}h_j(y)) 
\pathprob{y,\exec_{\inbetaseq}(\ell)||v+n_{1}\ell}.
\end{array}$$
 Observe that $\forall
j,\: v\leq  j<v+n_1\ell,$ 
$$\begin{array}{ll}
\forall y\leq 0: & g_j(y)\geq
\eulerv{\frac{1}{2}d_{\min}\epsilon} h_j(y), 
\\
&\hspace*{0.2cm}\pathprob{y,\exec_{\inalphaseq}(\ell)||j}\geq
\pathprob{y,\exec_{\inbetaseq}(\ell)||j} \\
\mbox{and}\\
 \forall y>0: & g_j(y)\geq h_j(y),\\
   &\hspace*{0.2cm} \pathprob{y,\exec_{\inalphaseq}(\ell)||j}\geq
\eulerv{\frac{1}{2}d_{\min}\ell \epsilon} \pathprob{y,\exec_{\inbetaseq}(\ell)||j}.
\end{array}$$
From this we get the
following: $$\forall y\in \Reals,\: \pathprob{y,\exec_{\inalphaseq}(\ell)||v}\geq
\eulerv{\frac{1}{2}d_{\min}\ell \epsilon} \pathprob{y,\exec_{\inbetaseq}(\ell)||v}.$$
Using this we can show by the definition of probability of a path that
$${\frac{\pathprob{\epsilon,\exec_{\inalphaseq}(\ell)}}{\pathprob{\epsilon,\exec_{\inbetaseq}(\ell)}}}\geq
\eulerv{\frac{1}{2}d_{\min}\ell\epsilon}.$$ Since  $\ell$ can be made arbitrarily large, we see that $\cA$ is not
$d\epsilon$-differentially private, for any $d>0$. Hence $\cA$ is not
differentially private. 
\end{proof}

\subsection*{\textbf{{\Violatingc}s implies no privacy}}

\begin{lemma}
\label{lem:violatingcnec}
A  {\dipa} $\cA $ is 
not differentially private if  it has a reachable {\violatingc}.
\end{lemma}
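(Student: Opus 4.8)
The plan is to show that a reachable {\violatingc} forces, for every $d>0$, the existence of adjacent equivalent paths $\rho_\alpha,\rho_\beta$ with $\pathprob{\epsilon,\rho_\alpha}>\eulerv{d\epsilon}\,\pathprob{\epsilon,\rho_\beta}$ for some $\epsilon$; by Proposition~\ref{prop:factspath} it is enough to describe the two abstract paths and the two (adjacent) input sequences. First I would invoke Lemma~\ref{lem:criticalcnec} to assume $\cA$ has no reachable {\criticalcycle}. Fix a reachable {\violatingc} $C$ of length $n$, pick one of its transitions $t$ that is an input transition outputting $\svar$ or $\svar'$, and fix a path $\pi$ from $\qinit$ to $\fstst(C)$. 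Unrolling $C$ twice and using the absence of {\criticalcycle}s, exactly one of two things holds: (A) every transition of $C$ has guard $\true$, or (B) $C$ has no assignment transition. In both cases the witnesses share one abstract path, namely $\pi$ followed by $\ell$ copies of $C$, with the real value at each of the $\ell$ occurrences of $t$ required to lie in a fixed length-one interval $I$ (and any other real output required to lie in $(-\infty,\infty)$); the inputs of $\rho_\alpha$ and $\rho_\beta$ agree everywhere except at the $\ell$ copies of $t$, where $\rho_\alpha$ reads a fixed value $a$ and $\rho_\beta$ reads $a+1$ (or $a-1$), so the two input sequences are adjacent. The number $a$, the interval $I$, and the direction of the shift will be chosen, using the Laplace closed forms of Section~\ref{sec:prelims}, so that the density sampled at $t$ is centred in the interior of $I$ on the $\rho_\alpha$-run but its shifted $\rho_\beta$-copy sits at an endpoint of $I$; then the factor contributed by one copy of $t$ is $\ge\tfrac12$ on $\rho_\alpha$ and $\le\tfrac12\eulerv{-c\epsilon}$ on $\rho_\beta$ for all large $\epsilon$, where $c>0$ is the scaling parameter of $t$'s source state used for the sampled value that is output.

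In \textbf{Case (A)}, because every guard on $C$ is $\true$, a short induction on the definition of $\pathprob{\cdot}$ shows that $\pathprob{\epsilon,x,\sigma}$ does not depend on $x$ for any suffix $\sigma$ lying inside the $C^\ell$-block; hence $\pathprob{\epsilon,\rho_\alpha}=N\,k_\alpha^{\ell}$ and $\pathprob{\epsilon,\rho_\beta}=N\,k_\beta^{\ell}$, where $N>0$ is the common contribution of $\pi$ once its inputs are fixed to keep it positive, and $k_\alpha,k_\beta$ are the single-copy factors at $t$. Therefore $\pathprob{\epsilon,\rho_\alpha}/\pathprob{\epsilon,\rho_\beta}=(k_\alpha/k_\beta)^{\ell}\ge\eulerv{c\ell\epsilon/2}$ for large $\epsilon$, and choosing $\ell$ with $c\ell/2>d$ and then $\epsilon$ large enough yields the violation.

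\textbf{Case (B)} is where the work is. Now $C$ has no assignment transition, so $\rvar$ keeps one value $z$ throughout the $C^\ell$-block, and peeling $\rho_\alpha$ and $\rho_\beta$ at the last assignment transition of $\pi$ gives $\pathprob{\epsilon,\rho_\bullet}=\int_\Reals g(z)\,\pathprob{\epsilon,z,\tau_\bullet}\,dz$ with the \emph{same} density $g$ for both (the inputs along $\pi$ are identical), $g$ being the law of $\rvar$ entering the tail $\tau_\bullet$. Choosing the inputs along $\pi$ so that $g$ is a Laplace — or, if the last assignment of $\pi$ is guarded, a truncated Laplace — of mean $0$, $g$ puts mass $\ge\tfrac12$ on $(-1,1)$ for large $\epsilon$. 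For the upper bound, every transition of $\tau_\beta$ contributes a factor $\le1$ except the $\ell$ copies of $t$, each contributing at most $\tfrac12\eulerv{-c\epsilon}$ \emph{uniformly in $z$} — here one must choose $I$, the shift direction, and $a$ so that the guard on $t$ (which constrains the very sample $\svar$ being output, when $t$ outputs $\svar$) never enlarges the $\rho_\beta$-estimate — so $\pathprob{\epsilon,z,\tau_\beta}\le(\tfrac12\eulerv{-c\epsilon})^{\ell}$ for every $z$ and hence $\pathprob{\epsilon,\rho_\beta}\le(\tfrac12\eulerv{-c\epsilon})^{\ell}$. For the lower bound I would restrict the integral to $z\in(-1,1)$ and choose the remaining inputs along $\tau_\alpha$ so that every guarded or real-output step has its sampled mean pushed strictly past $z$ (possible since $|z|<1$ is bounded), making each of the $O(\ell)$ such steps contribute at least a fixed positive constant; thus $\pathprob{\epsilon,z,\tau_\alpha}\ge c_1^{\ell}$ for a constant $c_1\in(0,1)$ depending only on $\cA$, and $\pathprob{\epsilon,\rho_\alpha}\ge\tfrac12\,c_1^{\ell}$. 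Combining, $\pathprob{\epsilon,\rho_\alpha}/\pathprob{\epsilon,\rho_\beta}\ge\tfrac12\,(2c_1\eulerv{c\epsilon})^{\ell}$; given $d>0$, take $\ell$ with $c\ell/2>d$ and then $\epsilon$ large enough that $2c_1\eulerv{c\epsilon}>1$ and $\eulerv{(c\ell/2-d)\epsilon}>2(2c_1)^{-\ell}$, which gives the violation. The two genuinely delicate points, and the reason Case (B) is the main obstacle, are (i) pinning down the value of $\rvar$ on entry to $C$ by a Laplace-concentration argument, which is fiddly when the last assignment of $\pi$ is itself guarded so that $g$ is only truncated, and (ii) engineering $I$, the $\pm1$ shift, and the auxiliary inputs so that the guard on $t$ stays satisfied with probability bounded away from $0$ on $\rho_\alpha$ while the leak at $t$ on $\rho_\beta$ is not masked.
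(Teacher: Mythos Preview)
Your approach is workable but substantially more complicated than the paper's, and there is a small slip in your upper bound that you should fix.

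The main difference is the choice of the output interval $I$ at the $\ell$ copies of $t$. You take $I$ to be a bounded interval of length~$1$. This is what forces you to (i) control the value $z$ of $\rvar$ on entry to $C$ via a concentration argument on the last assignment in $\pi$, (ii) split into Cases~(A) and~(B), and (iii) bound $\pathprob{\epsilon,\rho_\alpha}$ from below and $\pathprob{\epsilon,\rho_\beta}$ from above separately. These are exactly the ``genuinely delicate points'' you flag, and they are real obstacles with your setup. The paper instead takes $I=(0,\infty)$ (for the case where the guard on $t$ is $\getest$; the other cases are symmetric or easier) and inputs $a_k=-\mu_k$, $b_k=-\mu_k-1$ at the copies of $t$, identical elsewhere. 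The point is that on $[0,\infty)$ the ratio of Laplace densities with means $0$ and $-1$ and scale $d\epsilon$ is the \emph{constant} $\eulerv{d\epsilon}$, so for every value of $\rvar$ the factor at one copy of $t$ in $\rho_\alpha$ is exactly $\eulerv{d\epsilon}$ times the factor in $\rho_\beta$ (the guard $\getest$ can only shrink the domain to $[\max(z,0),\infty)\subseteq[0,\infty)$, where the ratio is still constant). A straightforward backward induction then yields the exact identity $\pathprob{\epsilon,x_0,\exec(\ell)}=\eulerv{\ell d\epsilon}\,\pathprob{\epsilon,x_0,\exec'(\ell)}$ for all $x_0$, with no case split, no concentration argument, and no separate lower/upper bounds. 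Both of your delicate points simply disappear.

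The slip: if the $\rho_\beta$-center sits \emph{at an endpoint} of a length-$1$ interval $I$, the Laplace mass of $I$ is $\tfrac12(1-\eulerv{-c\epsilon})$, which is not $\le\tfrac12\eulerv{-c\epsilon}$. To get the bound you claim, the $\rho_\beta$-center must lie strictly outside $I$; with a shift of exactly $\pm1$ this forces the $\rho_\alpha$-center to be at an endpoint of $I$ rather than in its interior (and then $k_\alpha\to\tfrac12$, not $k_\alpha\ge\tfrac12$). This is easy to repair and does not affect the shape of your argument.
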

\begin{proof}
Thanks to Lemma~\ref{lem:criticalcnec} and Lemma~\ref{lem:criticalpnec}, we can assume $\cA$ does not have  {\criticalcycle}s or {\criticalpair}s. 
Assume that $\cA$ is well-formed, but there is a reachable {\violatingc} $C$ in $\cA$ that
has a transition whose output is $\svar.$ The proof for the case
when $C$ has a transition whose output is $\svar'$ is simpler and
is left out.  Now, if the transition of $C$ whose output is $\svar$ has the guard $\true,$ 
then it can be shown easily that repeating the cycle $\ell$ times incurs a privacy cost linear in  
$\ell\epsilon,$ and hence $\cA$ cannot be $d\epsilon$-differentially private for any $d>0.$
Thus, we consider more interesting case when the guard is $\lttest$ or $\getest.$

We consider the case when $C$ has a transition with output  $\svar.$ 
Since $\cA$ is well-formed
the cycle $C$ has no assignment transitions. 
Let $\absexec=\eabsexecl{j+m}$  for $k=0,\ldots,j+m-1$ be an {abstract path} such that $q_0=\qinit$, $q_j=q_{j+m}$, and the final $m$ transitions of $\rho$
is the {abstract cycle} corresponding to C.  
Fix $0\leq r<m$ be such that $\sigma_{j+r}=\svar.$
We assume that the guard of the $(j+r)$-th transition is $\getest.$  The case when it is $\lttest$ is similar and left out. 
  Further, let $d_k$ and $\mu_k$ be such that $\parf(q_k) = (d_k,\mu_k)$ for each $k.$

Fix $\ell>0.$ We define an abstract path $\absexec_\ell$ starting
from $\qinit$ by repeating the cycle $C$ $\ell$
times. Formally, $\absexec_\ell=\eabsexecl{j+\ell m}$ such that 
$q_k=q_{k-m}$ and $\sigma_k=\sigma_{k-n}$ for $j+m\leq
k\leq j+\ell m.$ Let $t_k$ be the {$k$-th transition} of $\absexec_\ell$ and $c_k$ be the {guard} of the $k$-th transition.  We have that $\sigma_{j+nm+r}\:=\svar$, for all $n$ such that $0\leq
n<\ell.$

Now we construct two input sequences 
$\inalphaseq(\ell)=a_0\cdots a_{j+\ell m-1}$ and $\inbetaseq(\ell)=b_0\cdots b_{j+\ell m-1}$ as follows. We take $a_k=-\mu_k$, for all $k, 0\leq k< j+\ell m$ such that $t_k$ is an input transition, otherwise we take $a_k=\tau.$
We take $b_k=-\mu_k-1$ if $k=j+nm+r$ for some $0\leq n<\ell$ and $b_k=a_k$ otherwise. 
Let $\exec(\ell)=\execl{j+\ell m}$ be the path such that 
\begin{itemize}
\item $\absexec=\abst(\exec(\ell)),$ 
\item $\inseq(\exec(\ell))=\inalphaseq(\ell),$  and  
\item all $k$, i) $o_k=\sigma_k$ if $\sigma_k \in \Gamma$, ii) $o_k=(\sigma_k,0,\infty)$ if $k=j+nm+r$ for some $0\leq n<\ell$, and iii) $o_k=(\sigma_k,-\infty,\infty)$ otherwise.
\end{itemize} 
Let $\exec'(\ell)=\execlb{j+\ell m}$  be the  path that is equivalent to $\exec$ and $\inseq(\exec'(\ell))=\inbetaseq(\ell).$

Let $\exec(\ell)|| k$ and $\exec'(\ell) || k$ be the suffixes of executions $\exec(\ell)$ and $\exec'(\ell)$ starting from state $q_k.$ Using backward induction, we can easily show that for each $x_0,$
$\pathprob{x_0,\exec(\ell)||k}, \pathprob{x_0,\exec'(\ell)||k} $ are non-zero and that 
$$ \pathprob{x_0,\exec(\ell)||k} = \euler^{ {\#(k) d_{j+r}\epsilon} } \pathprob{x_0,\exec'(\ell)||k}$$
where $\#(k)$ is the number of indices $k_1$ such that $k\leq k_1 < j+m\ell-1$ and $k_1=j+nm+r$ for some $0\leq n<\ell.$
Thus,  $$\pathprob{\epsilon,\exec(\ell)}=\euler^{ {\ell d_{j+r}\epsilon} } \pathprob{\epsilon,\exec'(\ell)}.$$

Now, $\ell$ is arbitrary and hence for every $d>0$, there is an $\ell$ such that $\pathprob{\epsilon,\exec(\ell)}>\euler^{ {d\epsilon}} \pathprob{\epsilon,\exec'(\ell)}.$
Hence $\cA$ is not differentially private. 
\end{proof}

\subsection*{\textbf{{\Violatingp}s implies no privacy}}

\begin{lemma}
\label{lem:violatingpnec}
A  {\dipa} $\cA $ is 
not differentially private if  it has a reachable {\violatingp}.
\end{lemma}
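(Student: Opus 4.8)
The plan is to follow the template of the three preceding necessity lemmas: for each $\ell$ I would exhibit a pair of \emph{adjacent} input sequences whose induced path probabilities differ by a factor that diverges as $\ell\to\infty$, which makes $\cA$ fail $d\epsilon$-differential privacy for every fixed $d$. First I would discharge the easy reductions: by Lemmas~\ref{lem:criticalcnec}, \ref{lem:criticalpnec} and \ref{lem:violatingcnec} we may assume $\cA$ has no reachable {\criticalcycle}, no {\criticalpair} $(C,C')$ with $C$ reachable, and no reachable {\violatingc} (otherwise we are already done). Two structural facts then do most of the work. (a)~A reachable {\gcycle} or {\lcycle} cannot contain an assignment transition, since otherwise a rotation of it (or of its double) would be a reachable {\criticalcycle}; consequently $\rvar$ is \emph{invariant} along any traversal of a {\gcycle} or an {\lcycle}, while by inspecting the guards of assignment transitions $\rvar$ is non-decreasing along every {\agpath} and non-increasing along every {\alpath}. (b)~The Claim proved inside Lemma~\ref{lem:criticalpnec} already shows that traversing a {\gcycle} (resp.\ {\lcycle}) $\ell$ times from a stored value $y>0$ (resp.\ $y\le 0$) admits two adjacent input sequences, differing only on the cycle block, along which the probability ratio is at least $\eulerv{\frac{1}{2} d_{\min}\ell\epsilon}$ for a fixed $d_{\min}>0$ and \emph{every} $\epsilon>0$ --- the cycle has at least one $\getest$- (resp.\ $\lttest$-) step per copy, and every other step contributes a ratio $\ge 1$.

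Second, I would prove an auxiliary \emph{realizability} lemma: every path of $\cA$ has positive probability for a suitable input and a suitable choice of the intervals labelling its real outputs; moreover a path starting at $\qinit$ can be realized with the value of $\rvar$ at its final state forced into $(-\infty,0)$ or into $(0,\infty)$, whichever is wanted. This is by induction along the path: a Laplace law has full support, and an input read at an input state shifts the mean arbitrarily, so every guard ($\true$, $\getest$, $\lttest$) is met with positive probability whatever the current $\rvar$, and at an assignment transition the fresh sample can be steered to the chosen side of $0$ because the admissible interval (e.g.\ $(\rvar_{\mathrm{in}},0)$ when the guard is $\getest$ and $\rvar_{\mathrm{in}}<0$) is non-degenerate and carries positive Laplace mass; the pure assignment leaving $\qinit$ supplies the base case with the sample free.

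With these in hand I would split along the three clauses of Definition~\ref{def:violating}, treating the $\getest$/{\gcycle}/$(0,\infty)$ branch of each clause and noting that the $\lttest$/{\lcycle}/$(-\infty,0)$ branch is symmetric. Let $\rho=\defexec$ be the reachable {\violatingp} and $\pi$ a path from $\qinit$ to $\fstst(\rho)$. For clauses~1 and~2 I would take the composite path $\pi\cdot\rho\cdot C'^{\ell}$, with $C'$ the {\gcycle} through $\lstst(\rho)$ rotated to begin there, and fix the output word $\gamma$ so that the first transition of $\rho$ outputs $\svar$ in $(0,\infty)$: in clause~1 that transition is an assignment, so it stores this value and $\rvar\in(0,\infty)$ afterwards; in clause~2 it must be a non-assignment transition (an assignment on the {\agpath} $\rho$ would have guard $\getest$), so its guard $\lttest$ forces the incoming $\rvar$ to exceed the sampled value $>0$, leaving $\rvar>0$ afterwards as well. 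Since $\tl(\rho)$ (resp.\ $\rho$) is an {\agpath}, $\rvar$ stays $>0$ up to $\lstst(\rho)$ and, as $C'$ has no assignments, throughout the $\ell$ copies of $C'$; the realizability lemma realizes $\pi\cdot\rho\cdot C'^{\ell}$ with positive probability on two adjacent inputs $\alpha,\beta$ that agree off the $C'^{\ell}$ block, and propagating the per-block bound of (b) through the recursive definition of path probability, exactly as in the proof of Lemma~\ref{lem:criticalpnec}, yields $\pathprob{\epsilon,\exec_{\alpha,\gamma}}\ge\eulerv{\frac{1}{2} d_{\min}\ell\epsilon}\,\pathprob{\epsilon,\exec_{\beta,\gamma}}$. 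For clause~3 I would instead take $\pi\cdot C^{\ell}\cdot\rho$, with $C$ the {\lcycle} through $\fstst(\rho)$ rotated to begin there, and fix $\gamma$ so that the \emph{last} transition of $\rho$ outputs $\svar$ in $(-\infty,0)$: its guard $\getest$ then forces the value of $\rvar$ just before it to be $<0$, and since $\rho$ is an {\agpath} ($\rvar$ non-decreasing) and $C$ has no assignments, $\rvar<0$ throughout the $\ell$ copies of $C$; realizing $\pi$ so that $\rvar<0$ at $\fstst(\rho)$ makes the whole path realizable, and pumping $C$ gives the same ratio bound.

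In every case, given any $d>0$, choosing $\ell>2d/d_{\min}$ produces adjacent $\alpha,\beta$ and a common output word $\gamma$ with $\pathprob{\epsilon,\exec_{\alpha,\gamma}}>\eulerv{d\epsilon}\,\pathprob{\epsilon,\exec_{\beta,\gamma}}$ for all $\epsilon>0$; by output distinction $\exec_{\alpha,\gamma}$ and $\exec_{\beta,\gamma}$ are the unique paths with those input/output sequences, so $\cA$ is not $d\epsilon$-differentially private for any $d$, hence not differentially private. I expect the main obstacle to be twofold: (i)~the realizability lemma, where one must keep \emph{all} intermediate guards simultaneously feasible while driving $\rvar$ to the prescribed side of $0$, even through assignment transitions whose sampling distribution is strongly biased the other way; and (ii)~the bookkeeping that matches each of the three clauses of Definition~\ref{def:violating}, together with their symmetric variants, to the correct cycle, the correct sign constraint on the $\svar$-output, and the correct placement (before or after $\rho$) of the pumped cycle.
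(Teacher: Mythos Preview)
Your proposal is correct and follows essentially the same route as the paper's proof: reduce via Lemmas~\ref{lem:criticalcnec}--\ref{lem:violatingcnec}, pump the appropriate cycle $\ell$ times, use the interval on the $\svar$-output to pin the sign of the stored $\rvar$ along the pumped block, and let the ratio diverge.

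The one substantive difference is in how you obtain the per-copy privacy loss on the pumped cycle. You reuse the Claim from Lemma~\ref{lem:criticalpnec} (inputs $\pm\tfrac{1}{2}-\mu_j$ on the cycle, giving $g_j(y)\ge \eulerv{\frac{1}{2}d_{\min}\epsilon}h_j(y)$ under the relevant sign of $y$), and you therefore need your realizability lemma to guarantee that the full path has positive probability. The paper instead chooses $a_k=-\mu_k$ everywhere and $b_k=-\mu_k+1$ only at the single fixed $\lttest$-step in each of the $\ell$ copies of the cycle; with the last $\svar$-output in $(-\infty,0)$ and every other real output in $(-\infty,\infty)$, one gets an \emph{exact} ratio $\pathprob{\epsilon,\exec(\ell)}=\eulerv{\ell d_{k_1}\epsilon}\pathprob{\epsilon,\exec'(\ell)}$ (since $\Pr[\Lap{d\epsilon,0}<y]/\Pr[\Lap{d\epsilon,1}<y]=\eulerv{d\epsilon}$ for every $y\le 0$). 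That choice also makes positivity immediate by backward induction, so the paper does not need an explicit realizability lemma. Your more general realizability statement (``$\rvar$ at the end can be forced to whichever side of $0$ is wanted'') is true and the inductive sketch you give is the right one, but note that it is not pointwise --- e.g.\ after a $\getest$-assignment with incoming $\rvar>0$ one cannot steer $\rvar$ negative --- so the induction must maintain the desired sign of $\rvar$ from the pure assignment at $\qinit$ onward; you essentially say this, but be careful with the wording. The paper only writes out the third clause of Definition~\ref{def:violating} and declares the others analogous; your explicit treatment of clauses~1 and~2 (pump the {\gcycle} \emph{after} $\rho$, using the $(0,\infty)$ interval on the first $\svar$-output) matches the informal discussion preceding Example~\ref{ex:violating} and is the intended argument.
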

\begin{proof}
Thanks to Lemma~\ref{lem:criticalcnec},  Lemma~\ref{lem:violatingcnec} and Lemma~\ref{lem:criticalpnec}, we can assume $\cA$ does not have  {\criticalcycle}s, {\violatingc}s or {\criticalpair}s. 
 We give the proof for one of the cases of a 
{violatingp}, where the path starts with a transition whose
guard is $\lttest$ and which lies on an {\lcycle} $C$ which is
followed by an {\agpath} ending in a transition with guard
$\getest$ and whose output is $\svar.$ (The proofs for other
cases of the privacy violating path are similar and are leftout.)
Since $\cA$ is well-formed, the cycle $C$ does not have an
assignment transition. 

Fix $\ell>0.$ Consider an abstract path $\absexec(\ell)=\eabsexecl{n}$ of length $n$ from the initial state $\qinit$ such that $\absexec(\ell)$ contains the cycle $C$ repeated $\ell$ times, and upon exiting the cycle continues onto the
{\agpath} $p$ such that the last transition of the {\agpath} has guard $\getest$ and outputs $\svar.$ Fix a transition of $C$ with guard $\lttest,$ and let $k_1,k_2,\ldots,k_\ell$ be the indices where this transition occurs in $\absexec(\ell).$
Let $\parf(q_k)=(d_k,\mu_k).$
Next, we construct two input sequences $\inalphaseq(\ell)=a_0\cdots a_n$ and $\inbetaseq(\ell)=b_0\cdots b_n$ of length $n$ as follows. 
If the $k$th transition of $\absexec(\ell)$ is a non-input transition then $a_k=b_k=\tau.$
If  $k\in \set{k_1,k_2,\ldots,k_\ell}$  then $a_k=-\mu_k$ and $b_k=-\mu_k+1.$
For all other $k$s, $a_k=b_k=-\mu_k.$ Let $\exec(\ell)=\execl{j+\ell m}$ be the path such that
\begin{itemize}
\item $\absexec=\abst(\exec(\ell)),$ 
\item $\inseq(\exec(\ell))=\inalphaseq(\ell),$  and  
\item for all $k$, i) $o_k=\sigma_k$ if $\sigma_k \in \Gamma$, ii) $o_k=(\sigma_k,-\infty,0)$ if $k=n$, and iii) $o_k=(\sigma_k,-\infty,\infty)$ otherwise. 

\end{itemize}
Let $\exec'(\ell)=\execlb{j+\ell m}$ ibe the  path that is equivalent to $\exec$ and $\inseq(\exec'(\ell))=\inbetaseq(\ell).$
 
Please note that in $\exec(\ell),\exec'(\ell),$ the last output is a non-positive number. As the path $p$ is also an {\agpath}, this implies that stored value of $\rvar$ during the $\ell$ executions of $C$ is also a non-positive number. 
Combined with the fact that $C$ is an {\lcycle} and the construction of $\exec(\ell),\exec'(\ell)$, it can be shown that   $$\pathprob{\epsilon, \exec(\ell)}=\euler^{ {\ell d_{k_1}\epsilon}} \pathprob{\epsilon,\exec'(\ell)}.$$ 
As in the case of {\violatingc} (See Lemma~\ref{lem:violatingcnec}),  we can conclude that $\cA$ is not differentially private. 
\end{proof}

\section{Sufficiency of well-formedness}
\label{app:sufficient}

We shall now show that if the {\dipa} $\cA$ is well-formed then $\cA$ is differentially private, thus establishing the \lq\lq if\rq\rq\ part of Theorem~\ref{thm:main}. Please note that it suffices to prove Lemma~\ref{lem:sufficiency}.
In order to manage complexity, we shall first prove the Lemma for the case that $\cA$ outputs only elements of the discrete set $\outalph$ (See Lemma~\ref{lem:main}). Then we shall tackle the case of 
all outputs (See Lemma~\ref{lem:main2}). 
 Please note that we shall use the notions of path suffixes and abstract paths introduced in Appendix~\ref{app:auxdefintiions}.

Before we proceed, we need a technical lemma. 

\begin{lemma}
\label{lem:integralineq}
Let $f$ and $g_i$ for $i=1,\ldots,k$  be non-negative functions from $\Reals$ to $\Reals$, i.e.,
$f(y),g_i(y)\geq 0$ for all $i,y.$ For $i=1,\ldots,k$, let $\theta_i\in
[-1,1]$.
Let $x_0,x_1\in\Reals\cup \set{\infty,-\infty}$, be such that $x_0<x_1$.
Then, the following inequalities
are satisfied for all $k\geq 0$. The empty products (the case when $k=0$)
in these inequalities are taken be $1.$


$$\begin{array}{ll}
1. & \int^{x_{1}}_{x_{0}}\:f(x) 
  \prod^k_{i=1}\int^\infty_x\:g_i(y-\theta_i)dy dx\;\geq \\
  & \hspace*{1cm} \int^{x_{1}+1}_{x_{0}+1}\:f(x-1) 
  \prod^k_{i=1}\int^\infty_x\:g_i(y)dy\: dx \\
  \\
2. & \int^{x_{1}}_{x_{0}}\:f(x) 
  \prod^k_{i=1}\int^x_{-\infty}\:g_i(y-\theta_i)dy dx\;\geq \\
 & \hspace*{1cm}  \int^{x_{1}-1}_{x_{0}-1}\:f(x+1) 
  \prod^k_{i=1}\int^x_{-\infty}\:g_i(y)dy\: dx
  \end{array}$$
%
\end{lemma}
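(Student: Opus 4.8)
The plan is to derive both inequalities from a single change of variables applied to the right-hand side, followed by a termwise comparison of the inner integrals that uses only the non-negativity of the $g_i$'s and the bound $\theta_i\in[-1,1]$. I would begin by recording the harmless but convenient observation that, since $f$ and all the $g_i$ are non-negative, every integral occurring in the statement is a well-defined element of $[0,\infty]$ (possibly $+\infty$), so the inequalities are to be read in the extended reals and no finiteness assumption is ever needed; likewise, translating an interval $(x_0,x_1)\subseteq\Reals\cup\set{-\infty,\infty}$ by a constant again yields such an interval, so the endpoint conventions cause no trouble. Induction on $k$ is not required: the comparison will be termwise in $i$.

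For inequality~1, I would start from the right-hand side and substitute $x\mapsto x+1$ in the outer integral, which maps $(x_0+1,x_1+1)$ onto $(x_0,x_1)$ and turns $f(x-1)$ into $f(x)$ and the inner integral $\int_x^\infty g_i(y)\,dy$ into $\int_{x+1}^\infty g_i(y)\,dy$; hence the right-hand side equals $\int_{x_0}^{x_1} f(x)\prod_{i=1}^k\bigl(\int_{x+1}^\infty g_i(y)\,dy\bigr)\,dx$. It then suffices to prove, for every $x\in\Reals$ and each $i$, the pointwise bound
\[
\int_x^\infty g_i(y-\theta_i)\,dy \;\geq\; \int_{x+1}^\infty g_i(y)\,dy ,
\]
because the two sides are non-negative, so the products over $i$ compare termwise, and multiplying by $f(x)\geq 0$ and integrating over $(x_0,x_1)$ preserves the inequality by monotonicity of the integral. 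The pointwise bound follows by substituting $u=y-\theta_i$ on the left to get $\int_{x-\theta_i}^\infty g_i(u)\,du$: since $\theta_i\geq -1$ we have $x-\theta_i\leq x+1$, so $[x-\theta_i,\infty)\supseteq[x+1,\infty)$, and $g_i\geq 0$ finishes it.

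Inequality~2 is entirely symmetric: substitute $x\mapsto x-1$ in the outer integral of the right-hand side to rewrite it as $\int_{x_0}^{x_1} f(x)\prod_{i=1}^k\bigl(\int_{-\infty}^{x-1} g_i(y)\,dy\bigr)\,dx$, and then it suffices to check $\int_{-\infty}^x g_i(y-\theta_i)\,dy\geq \int_{-\infty}^{x-1} g_i(y)\,dy$ for each $x$ and $i$; substituting $u=y-\theta_i$ turns the left side into $\int_{-\infty}^{x-\theta_i} g_i(u)\,du$, and now $\theta_i\leq 1$ gives $x-\theta_i\geq x-1$, i.e.\ $(-\infty,x-\theta_i]\supseteq(-\infty,x-1]$, so non-negativity of $g_i$ concludes. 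The empty-product case $k=0$ reduces in both parts to $\int_{x_0}^{x_1} f(x)\,dx=\int_{x_0\mp1}^{x_1\mp1} f(x\pm1)\,dx$, an equality by the same outer substitution.

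I do not anticipate a genuine obstacle; the only points needing a little care are getting the direction of the shift right in the termwise comparison and seeing why the two-sided hypothesis is the natural one — it is exactly $\theta_i\geq -1$ that is used in inequality~1 and $\theta_i\leq 1$ in inequality~2, so $\theta_i\in[-1,1]$ is precisely what lets one lemma handle both cases. A secondary, purely bookkeeping point is the presence of infinite endpoints, which is absorbed into the change-of-variables step; and the interchange of ``integrate over $x$'' with the termwise bound is justified by monotone convergence in $[0,\infty]$, assuming the standard measurability that holds for all the density functions to which the lemma is applied.
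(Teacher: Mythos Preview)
Your proposal is correct and follows essentially the same argument as the paper: both proofs substitute $u=y-\theta_i$ in the inner integrals, use $\theta_i\in[-1,1]$ to compare $\int_{x-\theta_i}^\infty g_i(u)\,du$ with $\int_{x+1}^\infty g_i(u)\,du$ (respectively the lower-limit analogue), and then apply the outer shift $u=x\pm 1$ to match the limits. The only cosmetic difference is that you perform the outer substitution on the right-hand side first and then compare pointwise, whereas the paper rewrites the left-hand side first and then shifts the outer variable; the content is identical.
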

\begin{proof}
We prove the inequality (1) as follows. For each $i=1,\ldots,k$, by substituting
$z\:=y-\theta_i$, we get $\int_x^\infty  g_i(y-\theta_i)dy
\:=\int^\infty_{x-\theta_i}g_i(z)dz.$ Since $\theta_i\in [-1,1]$ and
$g_i$ is a positive function, we get
$\int^\infty_{x-\theta_i}g_i(z)dz\:\geq \int^\infty_{x+1}g_i(z)dz.$ By
rewriting the left hand side of the inequality (1) as specified above
and by substituting, $u = x+1$, we get the right hand side of the
inequality (1) where the outer integral is over the variable $u.$ By
replacing $u$ by $x$ and $z$ by $y$, we get the right hand side of the
inequality. 

We prove the inequality (2) as follows. As before, for each
$i=1,\ldots,k$, we rewrite the integral $\int^x_{-\infty}
g_i(y-\theta_i)dy$ as $\int^{x-\theta_i}_{-\infty}
g_i(z)dz$ and then observe that this is $\geq \int^{x-1}_{-\infty}
g_i(z)dz.$ Substituting $u\:=x-1$, and then replacing $u$ by $x$ later,
we get the inequality (2).
\end{proof}

\subsection*{\textbf{{\dipautop} with Finite Outputs}}

\begin{lemma}
\label{lem:main}
Let $\cA=\defaut$ be a well-formed {\dipa} with finite outputs. Let  
$\exec$ be a path of length $n>0$ such that the initial transition (i.e. the $0$th transition), $t_0$,
of $\exec$ is 
an assignment transition. 
Let $c_0$ be the guard of $t_0.$
Let $\exec'$ be a path that is equivalent  to $\exec$ such that $\inseq(\exec')$
is a neighbor of $\inseq(\exec).$ 
Then the following properties
hold for all $x_0\in \Reals.$ 
\begin{enumerate}
\item  If the guard $c_0$ is $\getest$,  and  the first
  cycle transition in $\exec$ is a {\gcycle} transition and no assignment
  transition 
with guard   $\lttest$ appears before it, then 
$$\pathprob{x_0,\exec'}\geq \eulerv{-\weight{\rho}\epsilon} \pathprob{x_0+1,\exec}.$$
\item  If the guard $c_0$ is $\getest$ and one of the
  following holds:  (a) $\exec$ has no cycle transitions,  (b)  the first
  cycle transition in $\exec$ is a  {\gcycle} transition and an assignment transition with guard 
  $\lttest$ appears before it,  (c) the first cycle transition
  in $\exec$ is an {\lcycle} transition,
  then
$$\pathprob{x_0,\exec'}\geq \eulerv{-\weight{\rho}\epsilon} \pathprob{x_0-1,\exec}. 
$$

\item  If the guard $c_0$ is $\lttest$  and the first cycle
  transition in $\exec$ is a 
{\lcycle}   transition  and  no assignment transition with guard 
  $\getest$ appears before it, then 
$$\pathprob{x_0,\exec'}\geq \eulerv{-\weight{\rho}\epsilon}\pathprob{x_0-1,\exec}.$$
\item  If the guard $c_0$ is $\lttest$ and one of the
  following holds: (a) $\exec$ has no cycle transitions, (b) the first
  cycle transition in $\exec$  is  a 
{\lcycle}   transition  and an assignment transition with guard 
  $\getest$ appears before it, (c) the first cycle
  transition in $\exec$ is a {\gcycle} transition, then 
$$\pathprob{x_0,\exec'}\geq \eulerv{-\weight{\rho}\epsilon}\pathprob{x_0+1,\exec}.$$

\item If the guard $c_0$ is $\true$, then $\pathprob{x_0,\exec'}\geq
  \eulerv{-\weight{\rho}\epsilon} \pathprob{x_0,\exec}.$
\end{enumerate}
\end{lemma}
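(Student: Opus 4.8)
I would prove this by strong induction on the number $k\geq 1$ of assignment transitions of $\exec$ (there is at least one, namely $t_0$), keeping all five cases of the statement alive simultaneously, so that a recursive call may land in any of them when applied to a shorter suffix of $\exec$. The analytic ingredients are only Lemma~\ref{lem:integralineq} and the elementary estimate $f_{d\epsilon,\mu}(x)\geq \eulerv{-d\epsilon\card{\theta}}f_{d\epsilon,\mu+\theta}(x)$, i.e.\ a unit shift of a Laplace mean costs a factor $\eulerv{-d\epsilon}$. This already explains $\cost{\cdot}$: on a non-input {\criticaltransition} only one unit shift of the stored threshold value ever has to be absorbed ($\eulerv{-d\epsilon}$, hence cost $d$); on an input {\criticaltransition} the two runs additionally differ in the value read, by at most $1$, so two unit shifts combine ($\eulerv{-2d\epsilon}$, hence cost $2d$); and on a transition that lies on a cycle the shift it contributes is exactly of the monotone form for which Lemma~\ref{lem:integralineq} yields the inequality with \emph{no} exponential loss (hence cost $0$) --- a fact that is usable only because well-formedness excludes the configurations that would otherwise force such a transition into the lossy regime. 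Summing along $\exec$ gives the bound $\weight{\exec}$.

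Base case $k=1$: here $t_0$ is the only assignment transition, so $\rvar$ is frozen after step $0$. Writing $\parf(q_0)=(d,\mu,d',\mu')$ and letting $a_0,b_0$ be the inputs read at step $0$ by $\exec,\exec'$ (or $0$ if $q_0\in\epsstates$), the definition of $\pathprob{\cdot}$ gives $\pathprob{\epsilon,x_0,\exec}=\int_I\tfrac{d\epsilon}{2}\eulerv{-d\epsilon\card{z-\mu-a_0}}\,G(z)\,dz$, where $I$ is $[x_0,\infty)$, $(-\infty,x_0]$, or $\Reals$ according as $c_0$ is $\getest$, $\lttest$, or $\true$, and --- since $\tl(\exec)$ has no assignment transition --- $G(z)=\pathprob{\epsilon,z,\tl(\exec)}$ unfolds into a finite product $\prod_i g_i(z)$ whose $i$th factor is $1$, $\int_z^{\infty}(\cdots)$, or $\int_{-\infty}^{z}(\cdots)$ according as the guard of the $i$th step is $\true$, $\getest$, or $\lttest$; for $\exec'$ the same holds with each inner mean perturbed by some $\theta_i\in[-1,1]$. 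The case hypothesis is what makes $G$ usable: in case (1), well-formedness forbids the first reachable {\gcycle} from also being an {\lcycle} (that would be a {\criticalpair} $(C,C)$) and, via the critical-pair analysis, forbids a later {\lcycle} reachable from it by a trivial path, so every step with guard $\lttest$ feeding into $G$ lies on no cycle, hence is a {\criticaltransition} whose cost bounds the loss incurred when it is shifted, while the steps with guard $\getest$ or $\true$ are lossless; Lemma~\ref{lem:integralineq}(1) then pushes the unit shift of $x_0$ out through $G$ to give $\pathprob{x_0,\exec'}\geq\eulerv{-\weight{\exec}\epsilon}\pathprob{x_0+1,\exec}$. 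Cases (2) and (4) are symmetric (with the roles of $\getest$, $+1$ and $\lttest$, $-1$ interchanged), case (3) uses Lemma~\ref{lem:integralineq}(2), and case (5) is immediate since for $c_0=\true$ the outer integral runs over all of $\Reals$.

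Inductive step $k\geq 2$: let $t_m$ ($m>0$) be the first assignment transition of $\exec$ after $t_0$, and split $\exec$ at position $m$. The suffix $\exec||m$ begins with the assignment transition $t_m$ and has $k-1$ assignment transitions, so the induction hypothesis applies to $(\exec||m,\exec'||m)$ and gives, for every $y$, $\pathprob{\epsilon,y,\exec'||m}\geq\eulerv{-\weight{\exec||m}\epsilon}\pathprob{\epsilon,y',\exec||m}$ with $y'\in\{y,y+1,y-1\}$ dictated by the guard of $t_m$ and the first cycle transition of $\exec||m$. The intermediate transitions $t_1,\dots,t_{m-1}$ are non-assignment, so exactly as in the base case $\pathprob{\epsilon,x_0,\exec}=\int_I\tfrac{d\epsilon}{2}\eulerv{-d\epsilon\card{z-\mu-a_0}}\big(\prod_{i=1}^{m-1} g_i(z)\big)\pathprob{\epsilon,z,\exec||m}\,dz$ with the $g_i$ as above; substituting the induction bound into the innermost factor, turning $y\mapsto y'$ into a unit shift of the outer integration variable by a change of variable, absorbing that shift through the $g_i$ and the outer density via Lemma~\ref{lem:integralineq} as before, and using $\weight{\exec}=\cost{t_0}+\sum_{i=1}^{m-1}\cost{t_i}+\weight{\exec||m}$, one obtains the claimed inequality.

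The main obstacle is the bookkeeping that makes all this consistent: one must organize the five mutually exclusive cases and verify, branch by branch, that well-formedness --- concretely the absence of reachable {\criticalcycle}s and of {\criticalpair}s $(C,C')$ with $C$ reachable --- is exactly what (a) keeps the middle/tail product on the monotone side that the outer case requires, (b) keeps every cycle transition in the lossless regime so the weight may ignore it, and (c) rules out the ``direction-flipping'' shapes of $t_m$: for instance, if $\exec$ is in case (1) then no $\lttest$-assignment precedes its first cycle transition, so $t_m$ cannot be one, which forces $\exec||m$ into case (1), (5), or (2)(a), all preserving the ``$+1$'' direction; symmetrically in cases (2)--(4), while case (5) imposes no constraint. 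One finally checks that the weights telescope and that the exponential loss accumulated along any branch never exceeds $\eulerv{-\weight{\exec}\epsilon}$, the $2d$-versus-$d$ split in $\cost{\cdot}$ --- a unit shift together with an input perturbation versus a unit shift alone --- being what pins down the constants.
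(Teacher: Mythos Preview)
Your overall strategy matches the paper's proof exactly: induction on the number of assignment transitions, splitting at the second assignment $t_m$, and using Lemma~\ref{lem:integralineq} together with the unit-shift estimate for Laplace densities. The base-case reasoning is essentially correct.

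However, your inductive-step case analysis has a genuine gap. You argue that if $\exec$ is in case~(1) then ``$t_m$ cannot have guard $\lttest$, which forces $\exec||m$ into case (1), (5), or (2)(a), all preserving the $+1$ direction.'' This is only valid when $t_m$ \emph{precedes} the first cycle transition of $\exec$. The paper's proof explicitly splits into two sub-cases: (i) no cycle transition before $t_m$, and (ii) a cycle transition before $t_m$. In sub-case~(ii), $t_m$ may well have guard $\lttest$ --- it lies after the first cycle transition, so the hypothesis of case~(1) says nothing about it --- and then $\exec||m$ lands in case~(3) or~(4). Well-formedness (absence of a {\criticalpair}) is what rules out case~(3), leaving case~(4), whose $+1$ direction happens to match. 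Your sketch attributes the exclusion of $\lttest$ at $t_m$ to the case-(1) hypothesis alone, which is not enough.

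A second, related gap: still in sub-case~(ii) with $c_m=\getest$, the suffix $\exec||m$ may land in case~(2) rather than case~(1), and case~(2) delivers the $-1$ direction $\pathprob{x-1,\exec||m}$, not the $+1$ you need. (Your own list even includes (2)(a) among the ``$+1$-preserving'' outcomes, which it is not.) The paper bridges this with the monotonicity $\pathprob{x-1,\exec||m}\geq\pathprob{x+1,\exec||m}$, valid because $c_m=\getest$ makes $\pathprob{x,\exec||m}$ a nonincreasing function of $x$. You never invoke this monotonicity, and without it the directions do not line up. The symmetric issue arises in cases~(2)--(4).

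So the architecture is right, but the ``bookkeeping'' you flag as the main obstacle is more than clerical: the split into the two sub-cases on whether a cycle transition precedes $t_m$, the use of well-formedness to exclude case~(3) (resp.\ case~(1)) for the suffix in the appropriate branch, and the monotonicity step to reconcile mismatched directions are all essential ingredients that your sketch omits or misstates.
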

\begin{proof}
Let $\exec=\execl{n}$ and $\exec'=\execlb{n}.$ Let $t_0,\ldots,t_{n-1}$ be the transitions of $\exec$ and let $c_0,\ldots,c_{n-1}$ be their respective guards. 
For each $k\leq n,$ let $d_k,\mu_k$ be such that $\parf(q_k)=(d_k,\mu_k).$ Recall that, for any $k,$  $\exec||k$ denotes the 
suffix of $\exec$ starting from $q_k.$ We assume that there are no cycle transitions that are assignments. This is because if there is a cycle with an assignment then the guards on all other transitions must be $\true.$ Hence, we can never exit the cycle. Further, it is easy to see that this cycle has the same \lq\lq behavior\rq\rq\ in both $\exec$
and $\exec'.$


For each $k$, such that $0\leq k<n$, let $g_k,g'_k,\theta_k$ be functions of a
single variable given by
$$g_k(y)\:=\begin{cases}
                     \frac{d_k\epsilon}{2}\eulerv{-d_k\epsilon \abs{y-a_k-\mu_k}} &{t_i} \mbox{ is an input transition}\\
                     \frac{d_k\epsilon}{2}\eulerv{-d_k\epsilon \abs{y-\mu_k}} & \mbox{otherwise}
                   \end{cases}, $$
$$g'_k(y)\:=\begin{cases}
                     \frac{d_k\epsilon}{2}\eulerv{-d_k\epsilon \abs{y-b_k-\mu_k}} &{t_i} \mbox{ is an input transition}\\
                     \frac{d_k\epsilon}{2}\eulerv{-d_k\epsilon \abs{y-\mu_k}} & \mbox{otherwise}
                   \end{cases}   $$ and
$$\theta_k\:= \begin{cases}
                    b_k-a_k &{t_i} \mbox{ is an input transition}\\
                    0  & \mbox{otherwise}.
                    \end{cases}
$$ Observe that, for each $k\geq 0$, $g'_k(y)\:=g_k(y-\theta_k).$ Since $\card{\theta_k}\leq 1$, we see that
$g'_k(y)\geq \eulerv{-d_k\epsilon}g_k(y)$, for all $y\in \Reals.$

We prove the lemma by induction on the number of assignment
transitions in $\exec.$

\paragraph*{\textbf{Base Case}} In the base case, $\exec$ has one assignment
transition which is $t_0.$ Let
$S_1$ and $S_2$  be the sets of $k>0$ such that $c_k$ is $\getest$ and $c_k$ is $\lttest$, respectively. Now, assume the condition of statement (1) of the Lemma is
satisfied. Observe that $S_1$ includes
all {\gcycle} transitions whose guard  is $\getest.$ Observe that, since
$\cA$ is well-formed, for all
$k\in S_2$, $t_k$ does not lie on a cycle and hence is a {\criticaltransition}. Similarly $t_0$ is also a {\criticaltransition}. 
Now, we see 
that $$\pathprob{x_0,\exec'}\:=\:\int^\infty_{x_{0}}f(x)\prod_{k\in 
    S_{1}}\int^\infty_xg'_k(y) dy\:  dx$$ where 
$\displaystyle{f(x)\:=\:g'_0(x)\prod_{k\in 
      S_{2}}\int^x_{-\infty} g'_k(y) dy}.$
Now, substituting $g'_k(y)=g_k(y-\theta_k)$ (for $k\in S_1$) in the
above equation and using inequality
  (1) of Lemma \ref{lem:integralineq}, we see that
  $$\displaystyle{\pathprob{x_0,\exec'}\geq\int^\infty_{x_{0}+1}f(x-1)\prod_{k\in
      S_{1}}\int^\infty_xg_k(y)dy\:dx}.$$ 
      Observe that
    $$f(x-1)\:=g_0(x-(1+\theta_0))\prod_{k\in
      S_{2}}\int^{x-1}_{-\infty}g_k(y-\theta_k) dy.$$ 
      Now, by
    introducing a new variable $z$ such that $z\:=y+1$, we see that
    $$\int^{x-1}_{-\infty}g_k(y-\theta_k) dy\:=
    \int^{x}_{-\infty}g_k(z-(1+\theta_k)) dz.$$
From this, it is easy to see that  $$f(x-1)\geq
\eulerv{-2(d_0+\sum_{k\in
    S_{2}}d_k)\epsilon}g_0(x)\prod_{k\in
  S_{2}}\int^{x}_{-\infty}g_k(y) dy.$$
Observe that $\weight{\rho}\:\geq 2(d_0+\sum_{k\in S_{2}}d_k).$
Putting all the above observations together, we get 
\begin{dmath*}
\pathprob{x_0,\exec'}\geq \eulerv{-\weight{\rho}\epsilon} \int^\infty_{x_{0}+1
}g_0(x)\prod_{k\in S_2}\int^x_{-\infty}g_k(y) dy\:\prod_{k\in S_{1}}
\int^\infty_xg_k(y) dy.
\end{dmath*}
Observe that the right hand side of the above inequality is
$\eulerv{-\weight{\rho}\epsilon}\pathprob{x_0+1,\exec}$. 
Property (1) of the lemma follows for the base case from this observation.  

Now, we prove the base case for property (2). Assume the condition of
(2a) is satisfied, i.e., there are no cycle transitions in $\exec.$ Now, we see that 
\begin{dmath*}\pathprob{x_0,\exec'}\:=\:\int^\infty_{x_{0}}g'_0(x)\prod_{k\in 
    S_{1}}\int^\infty_xg'_k(y) dy \prod_{k\in 
    S_{2}}\int^x_{-\infty}g'_k(z) dz\:dx.\end{dmath*}
By introducing new variables $u,v,w$ such $u\:=x-1,\:v=y-1,\:w=z-1$,
we get  
\begin{dmath*}
\pathprob{x_0,\exec'}\:=\:\int^\infty_{x_{0}-1}g'_0(u+1)\prod_{k\in 
    S_{1}}\int^\infty_ug'_k(v+1) dv \prod_{k\in 
    S_{2}}\int^u_{-\infty}g'_k(w+1) dw\:du.
    \end{dmath*}
Observing that, for each $k\geq 0$, $g'_k(u+1)\geq
\eulerv{-2d_k\epsilon}g_k(u)$ and $t_k$ is a {\criticaltransition}, we get the inequality of property (2). 

Now observe that condition of (2b) can not be satisfied as $t_0$ is the only assignment transition in
$\exec$. Now, assume the condition of (2c) is satisfied.  
Now, observe that, for all $k\in S_1$, $t_k$ is a {\criticaltransition.} As before, we see 
that $$\pathprob{x_0,\exec'}\:=\:\int^\infty_{x_{0}}f(x)\prod_{k\in S_{2}}\int^x_{-\infty} g'_k(y) dy\:  dx$$ where 
$\displaystyle{f(x)\:=\:g'_0(x)\prod_{k\in 
      S_{1}}\int^\infty_x g'_k(y) dy}.$  Now, using inequality
  (2) of Lemma \ref{lem:integralineq}, we see that
  $$\displaystyle{\pathprob{x_0,\exec'}\geq\int^\infty_{x_{0}-1}f(x+1)\prod_{k\in
      S_{2}}\int^x_{-\infty} g_k(y)dy\:dx}.$$ Now, observe that
    $$f(x+1)\:=g_0(x-(\theta_0-1))\prod_{k\in S_{1}}\int_{x+1}^\infty
    g_k(y-\theta_k)dy.$$ Introducing a new variable $z$ and
    setting $z=y-1$, we see that
  $$f(x+1)\:=g_0(x-(\theta_0-1))\prod_{k\in S_{1}}\int_x^\infty g_k(z-(\theta_k-1))dz$$
      and
$$f(x+1)\geq \eulerv{-2(d_0+\sum_{k\in
    S_{1}}d_k)\epsilon}g_0(x)\prod_{k\in S_{1}}\int_x^\infty
g_k(z)dz.$$ From this and the above inequality, it is easily seen that 
$$\pathprob{x_0,\exec'}\geq \eulerv{-2(d_0+\sum_{k\in
    S_{2}}d_k)\epsilon} \pathprob{x_0-1,\exec}.$$ From this we see that the
inequality of property (2) holds.

The proof for the base case of Properties (3) and (4) is symmetric to
those of properties (1) and (2) and is left out. To prove property (5)
for the base case, we see that the proof is similar to those of
properties (1) and (3) depending on whether {\gcycle} or {\lcycle}
transitions appear. There are two minor differences. The first difference is that 
if the first transition is a non-input transition
then $\theta_0=0$ and hence it only incurs a cost of $d_0$ and not $2 d_0.$ 
The second difference is that the lower limit of the
outer integral will be $-\infty$ in the former case, while the upper
limit of the outer integral being $\infty$ in the latter case. In
either case, it is straightforward to see that property (5) holds.

\paragraph*{\textbf{Inductive Step}} Now, we prove the inductive step as follows. Assume that all the
properties hold when $\exec$ has $\ell>0$ assignments. Now, consider the
case when $\exec$ has $\ell+1$ assignments. Let $t_i$, for $i>0$, be the
second assignment transition in $\exec.$ 
Let $S_1$ (resp., $S_2$) be the set of $k$, $0<k<i$, such that $c_k$ is $\getest$ (resp., $\lttest$).

Consider the case when $c_0$
is $\getest.$
Now, we consider two sub-cases. We first consider the sub-case when there is no cycle transitions before $t_i.$
We have $\pathprob{x_0,\exec'}\:=\int^\infty_{x_{0}}f'(x) 
\pathprob{\exec'||i,x}dx$ where 
$$f'(x) \:=g'_0(x)\prod_{k\in S_{1}}\int_x^{\infty}g'_k(y)dy\prod_{k\in S_{2}}\int^x_{-\infty}g'_k(y)dy.$$
 Applying the
inductive hypothesis for the suffix $\exec||i$, we get an inequality
involving $\pathprob{\exec'||i,x}$ and  $\pathprob{x+1,\exec||i}$, or $\pathprob{\exec'||i,x-1}$, or $\pathprob{x,\exec||i}$, based on which of  the five properties
of the lemma are satisfied by $\exec||i.$ Suppose the condition of property (1) is
satisfied by $\exec||i$, by using the inductive hypothesis,  we get
$\pathprob{x_0,\exec'}\geq \int^{\infty}_{x_{0}} f'(x) h(x) dx$, 
where $h(x)\:= \eulerv{-2 \weight{\exec||i}\epsilon}\pathprob{x+1,\exec||i} .$ 
Now, by taking $f(x)\:=f'(x)h(x)$, using inequality (1) of Lemma
\ref{lem:integralineq} and 
{by taking
$k=0$ in that inequality, we get property (1) for the path $\exec$
 using the same simplification/reasoning used in the base case} and by
 observing that 
 $$\begin{array}{l}\displaystyle{\pathprob{x_0+1,\exec}\:=\int^\infty_{x_{0}+1} g_0(x) \prod_{k\in
     S_1}\int_x^{\infty}g_k(y)dy}\\
     \hspace*{3cm}\displaystyle{\prod_{k\in S_{2}}\int^x_{-\infty}g_k(y)dy
   \pathprob{x,\exec||i}dx}.
   \end{array}$$
We can similarly prove the inductive step when the suffix $\exec||i$
satisfies the other properties (i.e., 2 through 5) of the lemma.

 Now consider the sub-case when a cycle transition appears before
$t_i.$ Assume  that
the cycle transitions are {\gcycle} transitions. If $c_i$ is also
$\getest$, then the suffix $\exec||i$ can satisfy any of the
conditions of the first two properties of the lemma; In this
situation, let 
$f(x)\:=f'(x)h(x)$ where $f'(x)\:=g'_0(x)\prod_{k\in S_{2}}\int^x_{-\infty}g'_k(y)dy$
and  $h(x) \:=\eulerv{-2 \weight{\exec||i}\epsilon}\pathprob{x+1,\exec||i}.$
Observe that, if $\exec||i$ satisfies the condition of property (1)
then $h(x)$ is the RHS of the inequality, we get,  by
applying the inductive hypothesis to $\exec||i.$ If $\exec||i$ satisfies the condition of property (2) of the lemma
then, by applying the inductive hypothesis to $\exec||i$, we get 
$\pathprob{\exec'||i,x}\geq \eulerv{-2 \weight{\exec||i}\epsilon}\pathprob{x-1,\exec||i}.$
Since, $\pathprob{x-1,\exec||i}\geq \pathprob{x+1,\exec||i}$, we see that 
$\pathprob{\exec'||i,x}\geq \eulerv{-2 \weight{\exec||i}\epsilon}\pathprob{x+1,\exec||i}.$
Now, we have $\pathprob{x_0,\exec'}\:\geq \int^\infty_{x_{0}}f'(x)h(x)\prod_{k\in
  S_{1}}\int^\infty_{x}g'_k(z) dz dx.$
Applying the  inequality (1) of
 Lemma \ref{lem:integralineq}, we get the desired result for the
 inductive step.
On the other hand, if $c_i$ is $\lttest$ then the suffix
$\exec||i$ can not satisfy the condition of property (3) of the lemma due
to well-formedness of $\cA$; however it can satisfy the condition of
property (4). In this sub-case also, we can
get the result for the induction case as above by using the inductive
hypothesis for $\exec||i$ and using similar reasoning as in the base
case and applying the first inequality of  Lemma
\ref{lem:integralineq}. 

Now consider the situation where  the cycle transitions appearing
before $t_i$ are  {\lcycle} transitions. Now, we apply
inequality (2) of Lemma \ref{lem:integralineq} to prove that property
(2) of the lemma is satisfied by $\exec.$ To do this, we define $f(x)\:=f'(x)h(x)$
where $f'(x)\:=g'_0(x)\prod_{k\in S_{1}}\int_x^{\infty}g'_k(y)dy$ and
$h(x) \:= \eulerv{-2 \weight{\exec||i}\epsilon}\pathprob{x-1,\exec||i}.$ Next, applying
the induction hypothesis to $\exec||i$, we show that 
$$\pathprob{x_0,\exec'}\geq \int^{\infty}_{x_{0}} f'(x)h(x) \prod_{k\in
  S_{2}}\int^x_{-\infty}g'_k(y)dy dx.$$
Since $\cA$ is well-formed, $\exec||i$ cannot satisfy the condition
of property (1) of the lemma.
If $\exec||i$ satisfies the condition
of property (2) or that of property (3) then, the above inequality
follows directly from the induction hypothesis;
If  $\exec||i$ satisfies the condition
of property (4), then the above inequality follows from the induction
hypothesis and the observation that $\pathprob{x+1,\exec||i}\geq \pathprob{x-1,\exec||i}$;
 If $\exec||i$ satisfies the condition
of property (5)  then the above inequality follows from the induction
hypothesis and the observation that $\pathprob{x,\exec||i}=\pathprob{x-1,\exec||i}$ as
$\pathprob{x,\exec||i}$ is independent of $x.$
Rewriting the above inequality, we get
$$\pathprob{x_0,\exec'}\geq \int^{\infty}_{x_{0}} f'(x)h(x) \prod_{k\in S_{2}}\int^x_{-\infty}g_k(y-\theta_k)dy.$$
Now, using the inequality (2) of
Lemma  \ref{lem:integralineq}, and using simplifications and reasoning
as in the base cases, we see that property (2) of the lemma
is satisfied by $\exec.$

The proof for the inductive step for the case when $c_0$ is
$\lttest$ is symmetric. For the case, when $c_0$ is $\true$, the
proof will be on the same lines excepting that if $t_0$ is a non-input transition then it incurs a cost of $d_0$ only and  the limits of the outer
integrals are $-\infty$ and $\infty.$
\end{proof}

\subsection*{\textbf{{\dipautop} with Finite and Infinite Outputs}}



We shall now show that if a {\dipa} $\cA$ is well-formed then it is differentially private. For simplicity, we will assume that all states are input states. The case when the $\cA$ includes non-input states can be dealt with similarly. 
Finally, we also assume that there are no transitions that output the value of $\svar'.$ In case there are transitions from $\svar',$ Lemma~\ref{lem:main2} can be proved by appealing to the composition theorem of differential privacy (See Theorem 3.14 of~\cite{DR14}.)

The following proposition follows directly from the definition of 
well-formed {\dipautop}.
\begin{proposition}
\label{prop:abs}
Let $\cA$ be a well-formed {\dipa} and $\exec$ be a
path of $\cA$ starting from a reachable state. Then $\exec$ satisfies the
following properties.
\begin{itemize}
 \item If $\exec$ starts with an assignment transition
$t_0$ and has no further assignment transitions, and has a {\gcycle}
or an {\lcycle} transition then the output of $t_0$ is from $\outalph$.
\item If
$\exec$ has no assignment transitions and has a {\gcycle} (resp.,
{\lcycle}) transition then the output of every transition in $\exec$, with
guard $\lttest$ (resp., $\getest$),  is from $\outalph$.

\item If $\exec$ starts with an {\lcycle} (resp., {\gcycle}) transition and 
is an {\agpath} (resp., {\alpath}) then the output of every transition,
with guard $\getest$ (resp., $\lttest$), is from $\outalph.$ 
\item If $\exec$ is an {\agpath} (resp., {\alpath}) ending with a {\gcycle}
  (resp., {\lcycle}) then the output of every transition,
with guard $\lttest$ (resp., $\getest$) , is from $\outalph.$ 
\end{itemize}
\end{proposition}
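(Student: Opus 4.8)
The plan is to prove all four items by contraposition: assuming $\cA$ is well-formed, I would show that the failure of any stated conclusion forces a reachable {\violatingp} (Definition~\ref{def:violating}), a reachable {\violatingc} (Definition~\ref{def:disclosingcycle}), or a reachable {\criticalcycle} (Definition~\ref{def:leaky-paths}), contradicting Definition~\ref{def:well-formed}. I would use the standing simplification that no transition outputs $\svar'$, so that a transition whose output is not in $\outalph$ necessarily outputs $\svar$, together with the Non-input transition condition, which makes any transition guarded by $\getest$ or $\lttest$ an input transition. I would also invoke the elementary fact --- already used in the necessity proofs --- that in a well-formed automaton every reachable {\gcycle} or {\lcycle} is free of assignment transitions, since traversing such a cycle twice would otherwise exhibit a reachable {\criticalcycle}. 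Finally, since $\exec$ begins at a reachable state, every sub-path of $\exec$ begins at a reachable state and every cycle met by $\exec$ is reachable, so any configuration extracted below is automatically reachable.

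For item 1, suppose $\exec$ begins with an assignment transition $t_0$ whose output $o_0$ is not in $\outalph$ (hence $o_0 = \svar$), has no further assignment transition, and contains a transition $t_i$ lying on a {\gcycle} $C$ (the {\lcycle} case is symmetric). Since $C$ is a reachable {\gcycle} it has no assignment transition, so $i \ge 1$; let $\rho$ be the prefix of $\exec$ that ends at the state $\stname(\ith[i]{\exec})$, which lies on $C$. Then the $0$th transition of $\rho$ is the assignment transition $t_0$ outputting $\svar$, $\tl(\rho)$ has no assignment transition and so is an {\agpath}, and $\lstst(\rho)$ lies on the {\gcycle} $C$: this is exactly the first clause of Definition~\ref{def:violating}, so $\rho$ is a reachable {\violatingp} --- a contradiction.

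Items 2--4 follow the same recipe. Given a transition $t_k$ of $\exec$ with the prescribed guard ($\lttest$ for items 2 and 4, $\getest$ for item 3) whose output is $\svar$ rather than a symbol of $\outalph$, I would extract a sub-path of $\exec$ anchored at $t_k$: for item 3, the prefix of $\exec$ ending with $t_k$, which inherits the {\agpath} property from $\exec$ and whose first state lies on the initial {\lcycle}, matching the third clause of Definition~\ref{def:violating}; for item 4, the suffix of $\exec$ starting with $t_k$, which is assignment-free and hence an {\agpath} and whose last state lies on the terminal {\gcycle}, matching the second clause; and for item 2, either the sub-path from $t_k$ forward to and including a {\gcycle} edge of $\exec$ (second clause) or the sub-path from a {\gcycle} edge of $\exec$ forward to and including $t_k$ (the symmetric variant of the third clause), according to whether that {\gcycle} edge lies after or before $t_k$ on $\exec$; in each subcase the orientation of the clause is dictated by the type of cycle and the guard of $t_k$. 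The one degenerate possibility --- that in item 2 the {\gcycle} edge of $\exec$ and $t_k$ coincide --- is immediate, since that transition is then an input transition (its guard being $\lttest$) that outputs $\svar$ and lies on a cycle, i.e., a reachable {\violatingc}.

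I expect the real work --- and the step most deserving of care --- to be the case analysis just sketched: checking, for each of the several subcases, that the extracted sub-path literally meets one of the three clauses of Definition~\ref{def:violating} (the right guard on its first or last transition, the right {\agpath}/{\alpath} property, and $\fstst$ or $\lstst$ on a {\gcycle} or {\lcycle} of the right type), and confirming that the few degenerate configurations fall cleanly into a {\violatingc} or a {\criticalcycle}. Each individual verification is routine, but the number of them makes a disciplined case split the crux of the proof.
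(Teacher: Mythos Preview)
Your approach is correct and matches the paper's: the paper offers no argument beyond ``follows directly from the definition of well-formed {\dipautop},'' and your contrapositive case analysis against Definitions~\ref{def:leaky-paths}, \ref{def:disclosingcycle}, and~\ref{def:violating} is exactly how one unpacks that claim. One small slip to fix: in item~4 you assert the suffix of $\exec$ starting at $t_k$ is ``assignment-free,'' but $\exec$ is only assumed to be an {\agpath} and may well contain assignment transitions with guard $\getest$; the correct (and sufficient) justification is simply that a suffix of an {\agpath} is again an {\agpath}, which is all the second clause of Definition~\ref{def:violating} needs.
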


Please note that Lemma~\ref{lem:sufficiency} is an immediate consequence of the following lemma. 
\begin{lemma}
\label{lem:main2}
Let $\cA=\defaut$ be a well-formed {\dipa} and 
$\exec$ be a path of length $n>0$   Let $t_0$ be the initial transition, i.e., the $0$th transition of $\exec$,  $c_0$ be its guard and $o_0$ be its output.  
Let $t_0$ be an assignment transition, and let $\exec'$ be a path that is equivalent to $\exec$ such that $\inseq(\exec')$
is a neighbor of $\inseq(\exec).$ 
Then the following properties
hold for all $x_0\in \Reals.$ 
\begin{enumerate}
\item  If the guard $c_0$ is $\getest$,  and  the first
  cycle transition in $\exec$ is a {\gcycle} transition and no assignment
  transition 
with guard   $\lttest$ appears before it, $o_0\in \outalph$ and 
$$\pathprob{x_0,\exec'}\geq \eulerv{-\weight{\exec}\epsilon}
\pathprob{x_0+1,\exec}.$$

\item  If the guard $c_0$ is $\getest$ and either,  (a)
  $\exec$ has no cycle transitions; or  (b) the first
  cycle transition in $\exec$ is a  {\gcycle} transition and an assignment
  transition with guard   $\lttest$ appears before it; or (c)
  the first cycle transition  in $\exec$ is an {\lcycle} transition,
  then 
$$\pathprob{x_0,\exec'}\geq \eulerv{-\weight{\exec}\epsilon}
\pathprob{x_0,\exec}.$$
Furthermore, if  the output of every transition, whose guard is
$\getest$, is from $\outalph$,  until the first assignment
transition whose guard is $\lttest$  or until the end of $\exec$,
then  
$$\pathprob{x_0,\exec'}\geq \eulerv{-\weight{\exec}\epsilon}
\pathprob{x_0-1,\exec}.$$

\item  If the guard $c_0$ is $\svar < x$  and the first cycle
  transition in $\exec$ is a 
{\lcycle}   transition  and  no assignment transition with guard 
  $\getest$ appears before it, then $o_0\in \outalph$ and
$$\pathprob{x_0,\exec'}\geq 
\eulerv{-\weight{\exec}\epsilon}\pathprob{x_0-1,\exec}.$$

\item  If the guard $c_0$ is $\svar < x$, and either
(a) If $\exec$ has no cycle transitions; or
(b) The first
  cycle transition in $\exec$  is  an 
{\lcycle}   transition  and an assignment transition with guard 
  $\getest$ appears before it; or (c) the first cycle
  transition in $\exec$ is a {\gcycle} transition, then 
$$\pathprob{x_0,\exec'}\geq 
\eulerv{-\weight{\exec}\epsilon}\pathprob{x_0,\exec}.$$
Furthermore, if  the output of every transition, whose guard is
$\lttest$, is from $\outalph$,  until the first assignment
transition whose guard is $\getest$  or until the end of
$\exec$, then  
$$\pathprob{x_0,\exec'}\geq \eulerv{-\weight{\exec}\epsilon}
\pathprob{x_0+1,\exec}.$$

\item If the guard $c_0$ is $\true$, then $\pathprob{x_0,\exec'}\geq
  \eulerv{-\weight{\exec}\epsilon} \pathprob{x_0,\exec}.$
\end{enumerate}
\end{lemma}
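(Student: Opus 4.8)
The plan is to prove Lemma~\ref{lem:main2} by the same strategy used for Lemma~\ref{lem:main}: induction on the number of assignment transitions in $\exec$, with each case reduced to a manipulation of nested integrals via Lemma~\ref{lem:integralineq} together with the pointwise estimate $g_k(z-\theta_k)\geq \eulerv{-\card{\theta_k}d_k\epsilon}g_k(z)$, valid since $\card{\theta_k}\leq 1$ for input transitions and $\theta_k=0$ otherwise. Two features are genuinely new relative to Lemma~\ref{lem:main}: transitions whose output is a sampled value, and the fact that parts~(2) and~(4) now carry a weaker conclusion unless an extra output-finiteness hypothesis holds. I would first dispose of outputs labelled $\svar'$: since $\svar'$ is a fresh sample that is never stored in $\rvar$ nor compared against it, the $\svar'$-outputs can be peeled off as an independent mechanism and the bound follows from the sequential composition theorem of differential privacy (Theorem~3.14 of~\cite{DR14}), each critical input transition outputting $\svar'$ contributing exactly its $d'$ summand to $\weight{\exec}$ while the remaining $2d$ summand is accounted for by the rest. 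Thereafter I assume $\cA$ has no $\svar'$-outputs and (as in the text) that all states are input states.

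The key point enabling the rest is that equivalent paths $\exec,\exec'$ carry the \emph{same} output sequence, hence the same output intervals $(v,w)$; a transition outputting $\svar$ then contributes to $\pathprob{\cdot}$ the integral of the relevant \emph{shifted} Laplace density over the intersection of $(v,w)$ with the region cut out by its guard. Thus such a transition behaves exactly like a transition outputting a symbol of $\outalph$, except that the corresponding integral has a finite limit $v$ or $w$ in place of $\mp\infty$. Crucially, Lemma~\ref{lem:integralineq} is stated for arbitrary $x_0<x_1$, so the substitutions $x\mapsto x\pm1$ that drive the proof of Lemma~\ref{lem:main} go through verbatim with $x_1$ finite. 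When the $\svar$-outputting transition is itself an assignment, the constrained value $z\in(v,w)$ is what is passed to the suffix, and the base/inductive analysis is unchanged; the incurred cost is $d_0$ when no shift of the outer variable is needed and $2d_0$ (matching $\cost{\cdot}$) when a shift is performed, because a shift combines the $\pm1$ of Lemma~\ref{lem:integralineq} with the neighbor shift $\theta_0$, giving $g_0(z-(1+\theta_0))\geq\eulerv{-2d_0\epsilon}g_0(z)$.

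To decide which of the alternatives holds — in particular to justify the assertions ``$o_0\in\outalph$'' in parts~(1) and~(3) and the ``Furthermore'' implications in parts~(2) and~(4) — I would invoke Proposition~\ref{prop:abs}. Well-formedness forbids, for example, a path beginning with an assignment transition, having no further assignments, reaching a \gcycle{} or \lcycle{} transition, and outputting $\svar$ on that first transition; this is exactly what forces $o_0\in\outalph$ in the situations of parts~(1) and~(3), so those parts reduce to the corresponding parts of Lemma~\ref{lem:main}. Likewise, the extra hypothesis in the ``Furthermore'' clause of part~(2) is precisely the case in which Proposition~\ref{prop:abs} guarantees that every $\getest$-guarded transition up to the first $\lttest$ assignment outputs from $\outalph$, which restores the shift-by-one argument yielding the $\pathprob{x_0-1,\exec}$ bound; without it the outer variable cannot be shifted and one only gets $\pathprob{x_0,\exec}$, explaining why part~(2) is stated with the weaker conclusion in general (symmetrically for part~(4)). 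The bookkeeping of propagating the correct alternative from the suffix $\exec||i$ back to $\exec$ in the inductive step is carried out as in Lemma~\ref{lem:main}, now with the enlarged list of alternative conclusions threaded through, using the monotonicity facts $\pathprob{x-1,\exec||i}\geq\pathprob{x+1,\exec||i}$ (once a \gcycle{} transition has been passed) and its reverse to convert a suffix bound of one shifted form into the form the outer integral inequality needs.

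The main obstacle, I expect, is the combinatorial case analysis in the inductive step once $\svar$-outputs are admitted: one must simultaneously track (a) whether the outer integration variable has been shifted — hence whether the first transition and the critical $\lttest$/$\getest$ transitions before the second assignment cost $d$ or $2d$; (b) which enlarged alternative the suffix $\exec||i$ satisfies, and whether its guard agrees with $c_0$; and (c) which monotonicity inequality is available, dictated by whether a \gcycle{} or \lcycle{} transition has been encountered. Ensuring that every combination that could arise is in fact consistent with well-formedness — so that the ``impossible'' configurations never actually have to be bounded — is where Proposition~\ref{prop:abs} does the real work, and checking that the listed alternatives remain exhaustive in the presence of real-valued outputs is the delicate part of the argument.
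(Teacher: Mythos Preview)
Your proposal is correct and follows essentially the same approach as the paper: induction on the number of assignment transitions, Lemma~\ref{lem:integralineq} for the shifts, Proposition~\ref{prop:abs} to force $o_0\in\outalph$ (and more generally to guarantee that cycle transitions and the relevant guarded transitions output from $\outalph$), the composition theorem for $\svar'$-outputs, and the monotonicity of $\pathprob{\cdot,\exec||i}$ to reconcile suffix bounds with the form the outer integral requires. One small point of emphasis: in the paper's argument the $\svar$-outputting \emph{non-assignment} transitions are not fed directly into Lemma~\ref{lem:integralineq} (whose inner integrals are of the full $\int_x^\infty$ or $\int_{-\infty}^x$ form); instead they are absorbed into the $f(x)$ factor as $\int_{\max(x,u_k)}^{v_k}g'_k$ or $\int_{u_k}^{\min(x,v_k)}g'_k$ and handled by the direct monotonicity $\int_{\max(x-1,u_k)}^{v_k}\geq\int_{\max(x,u_k)}^{v_k}$ (resp.\ $\int_{u_k}^{\min(x+1,v_k)}\geq\int_{u_k}^{\min(x,v_k)}$) after the shift, which you do allude to but phrase slightly differently.
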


\begin{proof}
Let $\exec=\execl{n}$ and $\exec'=\execlb{n}.$ Let $t_0,\ldots,t_{n-1}$ be the transitions of $\exec$ and let $c_0,\ldots,c_{n-1}$ be their respective guards. 
For each $k\leq n,$ let $d_k,\mu_k$ be such that $\parf(q_k)=(d_k,\mu_k).$ Recall that, for any $k,$  $\exec||k$ denotes the 
suffix of $\exec$ starting from $q_k.$ Once again, we  assume that there are no cycle transitions that are assignments.


We show, how the proof of Lemma \ref{lem:main} can be modified to
prove  this Lemma.
First, observe that properties (1), (3) and (5) of the Lemma are
identical to
the corresponding properties of the Lemma \ref{lem:main}. When $o_i\in
\outalph$, for all $i, 0\leq i<n$, the second parts of the properties (2) and
(4) subsume their first parts, and these two properties become
identical to properties (2) and (4) of the Lemma \ref{lem:main}, respectively.
For each $i, 0\leq i<n$, let 
$(u_i,v_i)$ be such that   $o_i=(\svar,u_i,v_i)$ if $o_i\notin\outalph$,
otherwise it is the interval $(-\infty,\infty).$
Let $g_k(y),g'_k(y)$ be the functions as defined in the proof of Lemma
\ref{lem:main}, and $\theta_k=b_k-a_k$ for $0\leq k<n.$

As before, we prove the Lemma by induction on the number of assignment
transitions in $\exec.$ In the base case, $\exec$ has one assignment
transition which is $t_0.$ Let
$S_1$ and $S_2$  be the sets of $k>0$ such that $c_k$ is $\svar
\geq x$ and $c_k$ is $\svar < x$, respectively. 

Now, assume the condition of (1) is
satisfied. Observe that $S_1$ includes
all {\gcycle} transitions whose guard  is $\getest.$ Let
$S_1'$ be the set of $k\in S_1$ such that $t_k$ is a {\gcycle}
transition and $S_1''= S_1 \setminus S_1'.$
 Observe that, using the fact that 
$\cA$ is  well-formed and using Proposition \ref{prop:abs}, we
see the following hold: (i) for all $k \in S_1'\cup S_2$, $o_k\in \outalph$;
(ii) $t_0$ is  a critical transition and $o_0\in \outalph$;
(iii) for all $k\in S_2\cup S_1''$, $t_k$ does not lie on a cycle and hence is a critical
transition. Note that, for any $k\in 
S_1''$, $o_k$ may be $\svar.$
Now, we see 
that $$\pathprob{x_0,\exec'}\:=\:\int^\infty_{x_{0}}f(x)\prod_{k\in 
    S_{1}'}\int^\infty_xg'_k(y) dy\:  dx$$ where 
$$\displaystyle{f(x)\:=\:g'_0(x)\prod_{k\in 
      S_{2}}\int^x_{-\infty} g'_k(y) dy \prod_{k\in
      S_1''}\int^{v_{k}}_{\max(x,u_{k})}g'_k(z) dz}.$$
Now, substituting $g'_k(y)=g_k(y-\theta_k)$ (for $k\in S_1$) in the
above equation and using inequality
  (1) of Lemma \ref{lem:integralineq}, we see that
  $$\displaystyle{\pathprob{x_0,\exec'}\geq\int^\infty_{x_{0}+1}f(x-1)\prod_{k\in
      S_{1}}\int^\infty_xg_k(y)dy\:dx}.$$ Now, using the same argument
  as in the proof of Lemma \ref{lem:main}, and observing that, for
  $k\in S_1''$, $\int^{v_{k}}_{\max(x-1,u_{k})}g'_k(z) dz\geq \int^{v_{k}}_{\max(x,u_{k})}g'_k(z) dz,$
it is easy to see that  
$$
\begin{array}{lcl}
f(x-1)&\geq&
\displaystyle{\eulerv{-2(d_{q_{0}}+\sum_{k\in
   S_{1}''\cup S_{2}}d_{q_{k}})\epsilon}g_0(x)}\\ 
  && \hspace*{0.6cm}
  \displaystyle{ \prod_{k\in
  S_{2}}\int^{x}_{-\infty}g_k(y) dy} \\
  && \hspace*{1.2cm} \displaystyle{\prod_{k\in S_{1}''}\int^{v_{k}}_{\max(x,u_{k})}g'_k(z) dz.} 
  \end{array}$$ 
Putting all the above observations together, we see that property (1) holds.

Now, we prove the base case for property (2). Assume the condition of
(2a) is satisfied, i.e., there are no cycle transitions in $\exec.$ Now, we see that 
$$\begin{array}{lcl}
\pathprob{x_0,\exec'}&=&\displaystyle{\int^{v_{0}}_{\max(x_{0},u_{0})}g'_0(x)\prod_{k\in 
    S_{1}}\int^{v_{k}}_{\max(x,u_{k})}g'_k(y) dy}\\
    &&
     \hspace*{2.2cm}
     \displaystyle{ \prod_{k\in 
    S_{2}}\int_{u_{k}}^{\min(x,v_{k})}g'_k(z) dz\:dx.}\\
\end{array} $$   
    It is fairly straightforward to see that 
$\pathprob{x_0,\exec'}\geq \eulerv{-\weight{\exec}\epsilon}\pathprob{x_0,\exec}$ since
$g'_k(y)\geq \eulerv{-d_{q_{k}}\epsilon}g_k(y)$, for all $y\in
\Reals$, $0\leq k<n.$ From this, we see that the first part of
property(2) holds. To see that the second part of property (2) holds,
assume that $o_0\in \outalph$, and for all $k\in S_{1}, o_k\in \outalph$. This means
that
\begin{dmath*}
\pathprob{x_0,\exec'}\:=\:\int^\infty_{x_{0}}g'_0(x)\prod_{k\in 
    S_{1}}\int^\infty_xg'_k(y) dy \prod_{k\in 
    S_{2}}\int_{u_{k}}^{\min(x,v_{k})}g'_k(z) dz\:dx.
    \end{dmath*}
Now introducing  new variables $w,y'$ and setting $w=x-1$ and $y'=y-1$, we see that
$$
\begin{array}{lcl}\pathprob{x_0,\exec'}&=&\displaystyle{\int^\infty_{x_{0}-1}g'_0(w+1)\prod_{k\in 
    S_{1}}\int^\infty_wg'_k(y'+1) dy' }\\
     &&\hspace*{1.4cm} \displaystyle{\prod_{k\in 
    S_{2}}\int_{u_{k}}^{\min(w+1,v_{k})}g'_k(z) dz\:dx.}
\end{array} $$
Now, observe that, for $k\in S_{2}$,
$\int_{u_{k}}^{\min(w+1,v_{k})}g'_k(z) dz\geq
\int_{u_{k}}^{\min(w,v_{k})}g'_k(z) dz$. Using this we get,
$$\begin{array}{lcl}
\pathprob{x_0,\exec'}&\geq&\displaystyle{\int^\infty_{x_{0}-1}g'_0(w+1)\prod_{k\in 
    S_{1}}\int^\infty_wg'_k(y'\blue{+1}) dy' }\\
   && \hspace*{1.6cm} \displaystyle{\prod_{k\in 
    S_{2}}\int_{u_{k}}^{\min(w,v_{k})}g'_k(z) dz\:dx.}
    \end{array}$$
Now, the second part of property (2), follows from the above
inequality and the reasoning employed earlier.

Now, condition of (2b) can not be satisfied as $t_0$ is the only assignment transition in
$\exec$. Now, assume the condition of  (2c) is satisfied.  Let $S_2'$ be
the set of all $k\in S_2$ such that $t_k$ is an {\lcycle} transition
and $S_2''= S_2\setminus S_2'.$ 
Now, using the fact that $\cA$ is  well-formed and using
Proposition \ref{prop:abs} we observe that the following 
hold: (i) for all $k\in S_1\cup S_2''$, $t_k$ is a critical
transition; (ii) $t_0$ is a critical transition and $o_0\in \outalph$; (iii)
for all $k\in S_1\cup S_2'$, $o_k\in \outalph.$
 Now, we see that 
that $$\pathprob{x_0,\exec'}\:=\:\int^\infty_{x_{0}}f(x)\prod_{k\in S'_{2}}\int^x_{-\infty} g'_k(y) dy\:  dx$$ where 
$$\displaystyle{f(x)\:=\:g'_0(x)\prod_{k\in 
      S_{1}}\int^\infty_x g'_k(y) dy \prod_{k\in 
      S''_{2}}\int_{u_{k}}^{\min(x,v_{k})} g'_k(y) dy}.$$  Now, using inequality
  (2) of Lemma \ref{lem:integralineq}, we see that
  $$\displaystyle{\pathprob{x_0,\exec'}\geq\int^\infty_{x_{0}-1}f(x+1)\prod_{k\in
      S'_{2}}\int^x_{-\infty} g_k(y)dy\:dx}.$$ Now, observe that
    $$\begin{array}{lcl}
    f(x+1)&=&\displaystyle{g_0(x-(\theta_0-1))\prod_{k\in S_{1}}\int_{x+1}^\infty
    g_k(y-\theta_k)dy}\\
    && \displaystyle{\prod_{k\in 
      S''_{2}}\int_{u_{k}}^{\min(x+1,v_{k})} g'_k(y) dy.}\end{array}$$ Introducing a new variable $z$ and
    setting $z=y-1$, we see that
  $$\begin{array}{lcl}
  f(x+1)&=&\displaystyle{g_0(x-(\theta_0-1))\prod_{k\in S_{1}}\int_x^\infty
  g_k(z-(\theta_k-1))dz} \\
  &&
\displaystyle{\prod_{k\in 
      S''_{2}}\int_{u_{k}}^{\min(x+1,v_{k})} g'_k(y) dy}\end{array}$$
      and
$$\begin{array}{lcl}f(x+1)&\geq& \displaystyle{\eulerv{-2(d_{q_{0}}+\sum_{k\in
    S_{1}\cup S''_2}d_{q_{k}})\epsilon}g_0(x)}\\
    && \hspace*{0.6cm}\displaystyle{\prod_{k\in S_{1}}\int_x^\infty
g_k(z)dz \prod_{k\in 
      S''_{2}}\int_{u_{k}}^{\min(x,v_{k})} g_k(y).}\end{array}$$ 
From this and the above inequality, it is easily seen that 
$\pathprob{x_0,\exec'}\geq \eulerv{-\weight{\exec}\epsilon} \pathprob{x_0-1,\exec}$. From this we see that the
inequalities of both parts of property (2) hold.

As before, the proof for the base case of Properties (3) and (4) is symmetric to
those of properties (1) and (2) and is left out. Property (5)
is proved as in the case of Lemma \ref{lem:main}.

Now, we prove the inductive step as follows. Assume that all the
properties hold when $\exec$ has $\ell>0$ assignments. Now, consider the
case when $\exec$ has $\ell+1$ assignments. Let $t_i$, for $i>0$, be the
second assignment transition in $\exec.$ 
Let $S_1$ (resp., $S_2$) be the set of $k$, $0<k<i$, such that $c_k$ is $\svar
\geq x$ (resp., $\lttest$).
Now, consider the case when $c_0$
is $\getest.$
Now, we consider two sub-cases. We first consider the sub-case when there is no cycle transitions before $t_i.$
We have $\pathprob{x_0,\exec'}\:=\int^{v_{0}}_{\max(x_{0},u_{0})}f'(x) 
\pathprob{x,\exec'||i} dx$ where 
$f'(x) \:=g'_0(x)\prod_{k\in S_{1}}\int^{v_{k}}_{\max(x,u_{k})}g'_k(y)dy\prod_{k\in S_{2}}\int_{u_{k}}^{\min(x,v_{k})}g'_k(y)dy.$ 
 Applying the
inductive hypothesis for the suffix ${\exec||i}$, we get an inequality
involving $\pathprob{x,\exec'||i}$ and  $\pathprob{x+1,\exec||i}$, or $\pathprob{x-1,{\exec||i}}$, or $\pathprob{x,\exec||i}$, based on which of  the five properties
of the Lemma are satisfied by ${\exec||i}.$ Suppose the condition of property (1) is
satisfied by ${\exec||i}$. Let $j\geq i$ be the smallest integer such that
$t_j$ is a {\gcycle} transition. Now, since $p|j$, the
prefix of $\exec$, is an {\agpath}, using the fact that 
 $\cA$ is  well-formed and using Proposition \ref{prop:abs}, it is easy
to see that $o_0\in \outalph$, and for all $k\in S_2$, $o_k\in \outalph.$
 By using the inductive hypothesis,  we get
$\pathprob{x_0,\exec'}\geq \int^{\infty}_{x_{0}} f'(x) h(x) dx$, 
where $h(x)\:= \eulerv{-\weight{{\exec||i}}\epsilon}\pathprob{x+1,\exec||i} .$ 
Because of the previous observation, we see that 
$f'(x) \:=g'_0(x)\prod_{k\in
  S_{1}}\int^{v_{k}}_{\max(x,u_{k})}g'_k(y)dy\prod_{k\in
  S_{2}}\int^x_{-\infty}g'_k(y)dy.$ 
Now, observe that, for each $k\in S_1$,
$\int^{v_{k}}_{\max(x-1,u_{k})}g'_k(y)dy\geq
\int^{v_{k}}_{\max(x,u_{k})}g'_k(y)dy.$
From this, using the reasoning employed in the base case, we see that 
$$\begin{array}{lcl}f'(x-1)&\geq&\displaystyle{ \eulerv{-2(d_{q_{0}}+\sum_{k\in S_{1}\cup
    S_{2}}d_{q_{k}})\epsilon} g_0(x)}\\
    && \hspace*{0.6cm} \displaystyle{\prod_{k\in
  S_{1}}\int^{v_{k}}_{\max(x,u_{k})}g_k(y)dy}\\
  && \hspace*{1.2cm}
  \displaystyle{\prod_{k\in
  S_{2}}\int^x_{-\infty}g_k(y)dy.}\end{array}$$
Now, by taking $f(x)\:=f'(x)h(x)$, using inequality (1) of Lemma
\ref{lem:integralineq} and by taking
$k=0$ in that inequality, we get property (1) for the path $\exec$
 using the same simplification/reasoning used in the base case and by
 observing that 
 $$\begin{array}{lcl}G_{p}(x_0+1)&=&\displaystyle{\int^\infty_{x_{0}+1} g_0(x) \prod_{k\in
     S_1}\int^{v_{k}}_{\max(x,u_{k})}g_k(y)dy}\\
     &&\hspace*{1cm}\displaystyle{\prod_{k\in S_{2}}\int^x_{-\infty}g_k(y)dy
   \pathprob{x,\exec||i} dx}.
   \end{array}$$ 
We can similarly prove the inductive step when the suffix ${\exec||i}$
satisfies the other properties (i.e., 2 through 5) of the Lemma.

Now consider the sub-case when a cycle transition appears before
$t_i.$ Assume  that
the cycle transitions are {\gcycle} transitions. Let $S_1'$ be the
set of $k\in S_1$ such that $t_k$ is a {\gcycle} transition and 
$S_1''=S_1\setminus S_1'.$ Since $\cA$ is  well-formed, using
Proposition \ref{prop:abs}, we see that
$o_0\in \outalph$, and for every $k\in S_1'\cup S_2$, $o_k\in \outalph.$
Let 
$f(x)\:=f'(x)h(x)$ where $f'(x)\:=g'_0(x) \prod_{k\in S_{2}}\int^x_{-\infty}g'_k(y)dy \prod_{k\in S_{1}''}\int_{\max(x,u_{k})}^{v_{k}}g'_k(y)dy$
and  $h(x) \:=\eulerv{-\weight{{\exec||i}}\epsilon}\pathprob{x+1,\exec||i}.$
If $c_i$ is also
$\getest$, then the suffix ${\exec||i}$ can satisfy any of the
conditions of the first two properties of the Lemma; In this
situation, 
observe that, if ${\exec||i}$ satisfies the condition of property (1)
then $h(x)$ is the right handside of the inequality, we get,  by
applying the inductive hypothesis to ${\exec||i}$; If ${\exec||i}$ satisfies the condition of property (2) of the Lemma
then, by applying the inductive hypothesis to ${\exec||i}$, we get 
$\pathprob{x,\exec'||i}\geq \eulerv{-\weight{{\exec||i}}\epsilon}\pathprob{x,\exec||i}$;
since, $\pathprob{x,\exec||i}\geq \pathprob{x+1,\exec||i}$, we see that 
$\pathprob{x,\exec'||i}\geq \eulerv{-\weight{{\exec||i}}\epsilon}\pathprob{x+1,\exec||i}.$
Now, assume that $c_i$ is $\lttest.$ Now, since $\cA$ is 
well-formed, it is easy to see that the condition of property (3) of 
the Lemma cannot be satisfied. Assume that ${\exec||i}$ satisfies the
condition of property (4) of the Lemma. Let $k'$ be the smallest
integer such  that, $i\leq k'\leq n$, and either $k'=n$, or $t_{k'}$ is
an assignment transition and $c_{k'}$ is $\getest.$ Now, we see
that the path starting with $t_1$ and ending with $t_{k'-1}$ is an {\alpath}. 
Using Proposition \ref{prop:abs} and the fact that  $\cA$ is  well-formed, we see that, for all $j, i\leq
j<k'$, such that $c_j$ is $\lttest$, $o_j\in \outalph.$ Now, applying the
induction hypothesis for ${\exec||i}$, using the second part of property (4), we get 
$\pathprob{x,\exec'||i}\geq \eulerv{-\weight{{\exec||i}}\epsilon}\pathprob{x+1,\exec||i}.$
Now, if $c_i$ is $\true$, applying the induction hypothesis and using
property (5), we see that $\pathprob{x,\exec'||i}\geq
\eulerv{-\weight{{\exec||i}}\epsilon}\pathprob{x,\exec||i}$; since $\pathprob{x,\exec||i}$ is
independent of $x$, we see that 
$\pathprob{x,\exec'||i}\geq \eulerv{-\weight{{\exec||i}}\epsilon}\pathprob{x+1,\exec||i}.$
Thus, irrespective of what guard $c_i$ is, we have 
 $\pathprob{x,\exec'||i}\geq h(x).$
Now, we have $\pathprob{x_0,\exec'}\:\geq \int^\infty_{x_{0}}f'(x)h(x)\prod_{k\in
  S'_{1}}\int^\infty_{x}g'_k(z) dz dx.$
Applying the  inequality (1) of
 Lemma \ref{lem:integralineq}, we get
$\pathprob{x_0,\exec'}\:\geq \int^\infty_{x_{0}+1}f'(x-1)h(x-1)\prod_{k\in
  S'_{1}}\int^\infty_{x}g_k(z) dz dx.$
Observe that, for $k\in S_1''$,
$\int_{\max(x-1,u_{k})}^{v_{k}}g'_k(y)dy\geq
\int_{\max(x,u_{k})}^{v_{k}}g'_k(y)dy.$
Using this observation and the reasoning/simplification as in the base
case, we see that property (1) is satisfied by $\exec.$   

Now consider the situation where  the cycle transitions appearing
before $t_i$ are  {\lcycle} transitions. Now, we apply
inequality (2) of Lemma \ref{lem:integralineq} to prove that property
(2) of the Lemma is satisfied by $\exec.$
 Let $S_2'$ be the
set of $k\in S_2$ such that $t_k$ is an {\lcycle} transition and 
$S_2''=S_2\setminus S_2'.$ Since $\cA$ is  well-formed, using
Proposition \ref{prop:abs}, we see that
$o_0\in \outalph$, and for every $k\in S_1\cup S'_2$, $o_k\in \outalph.$
 Now, let $f(x)\:=f'(x)h(x)$
where $f'(x)\:=g'_0(x)\prod_{k\in
  S_{1}}\int_x^{\infty}g'_k(y)dy\prod_{k\in S_{2}''}\int_{u_{k}}^{\min(x,v_{k})}g'_k(z)dz$ and
$h(x) \:= \eulerv{-\weight{{\exec||i}}\epsilon}\pathprob{x-1,{\exec||i}}.$ Now, applying
the induction hypothesis to ${\exec||i}$, we show that 
$$\pathprob{x_0,\exec'}\geq \int^{\infty}_{x_{0}} f'(x)h(x) \prod_{k\in
  S'_{2}}\int^x_{-\infty}g'_k(y)dy dx.$$
Since $\cA$ is well-formed ${\exec||i}$ cannot satisfy the condition of
property (1). Now, consider the case when ${\exec||i}$ satisfies the
condition of property (2). 
Let $k'$ be the smallest
integer such  that, $i\leq k'\leq n$, and either $k'=n$ or $t_{k'}$ is
an assignment transition and $c_{k'}$ is $\lttest.$ Now, we see
that the path starting with $t_i$ and ending with $t_{k'-1}$ is a
{\agpath}. From this observation, using the fact that   
 $\cA$ is  well-formed and using Proposition \ref{prop:abs}, we see that, for all $j, i\leq
j<k'$, such that $c_j$ is $\getest$, $o_j\in \outalph.$ Now, applying the
induction hypothesis for ${\exec||i}$, using the second part of property (2), we get 
$\pathprob{x,\exec'||i}\geq \eulerv{-\weight{{\exec||i}}\epsilon}\pathprob{x-1,{\exec||i}}.$ If
${\exec||i}$ satisfies property (3), then we directly see from the induction hypothesis
$\pathprob{x,\exec'||i}\geq \eulerv{-\weight{{\exec||i}}\epsilon}\pathprob{x-1,{\exec||i}}.$
If ${\exec||i}$ satisfies property(4), we get the above inequality, using
the first part of the induction hypothesis and the observation that 
$\pathprob{x,\exec||i}\geq \pathprob{x-1,{\exec||i}}.$ If ${\exec||i}$ satisfies property (5)
then, we get the above inequality from the induction hypothesis and
the observation that $\pathprob{x,\exec||i}$ is independent of $x.$ 
In all the above cases, it is easy to see,
$$\pathprob{x_0,\exec'}\geq \int^{\infty}_{x_{0}} f'(x)h(x) \prod_{k\in S'_{2}}\int^x_{-\infty}g_k(y-\theta_k)dy.$$
Now, using the inequality (2) of
Lemma  \ref{lem:integralineq}, and observing that, for all $k\in
S_2''$,  $\int_{u_{k}}^{\min(x+1,v_{k})}g'_k(z)dz \geq
\int_{u_{k}}^{\min(x,v_{k})}g'_k(z)dz$, and
using simplifications and reasoning
as in the base cases, we see that property (2) of the Lemma
is satisfied by $\exec.$ 
 
\end{proof}

\fi


\end{document}